\documentclass{cmslatex}
\usepackage[paperwidth=7in, paperheight=10in, margin=.875in]{geometry}

\usepackage{amssymb}
\usepackage{amsmath}
\usepackage{my_maccros}
\usepackage{graphics,float}
\usepackage{mathtools}
\usepackage[colorlinks=true]{hyperref}

\sloppy

\thinmuskip = 0.5\thinmuskip \medmuskip = 0.5\medmuskip
\thickmuskip = 0.5\thickmuskip \arraycolsep = 0.3\arraycolsep

\newtheorem{thm}{Theorem}[section]

 \newtheorem{rem}[thm]{Remark}

 \begin{document}
 \title{Nonlinear wave-current interactions in shallow water}
\author{David Lannes\thanks{IMB, Universit\'e de Bordeaux, France,  David.Lannes@math.u-bordeaux.fr}
   \and{Fabien Marche\thanks{IMAG, Universit\'e de Montpellier, and Inria team LEM0N, Montpellier, France,  Fabien.Marche@math.univ-montp2.fr}}}

\pagestyle{myheadings} \markboth{Nonlinear wave currents interactions in shallow water}{D. Lannes and F. Marche} \maketitle

\begin{abstract}
          We study here the propagation of long waves in the presence of vorticity. In the irrotational framework, the Green-Naghdi equations (also called Serre or fully nonlinear Boussinesq equations) are the standard model for the propagation of such waves. These equations couple the surface elevation to the vertically averaged horizontal velocity and are therefore independent of the vertical variable. In the presence of vorticity, the dependence on the vertical variable cannot be removed from the vorticity equation but it was however shown in \cite{castroLannes:2014aa} that the motion of the waves could be described using an extended Green-Naghdi system. In this paper we propose an analysis of these equations, and show that they can be used to get some new insight into wave-current interactions. We show in particular that solitary waves may have a drastically different behavior in the presence of vorticity and show the existence of solitary waves of maximal amplitude with a peak at their crest, whose angle depends on the vorticity. We also show some simple numerical validations. Finally, we give some examples of wave-current interactions with a non trivial vorticity field and topography effects.
\end{abstract}

\begin{keywords}
Water waves, shallow water, Green-Naghdi, Boussinesq, nonlinear dispersive equations, vorticity, solitary waves, Finite-Volume discretization
\end{keywords}

\section{Introduction}

\subsection{General setting}
Several models have been derived for the description of nearshore dynamics. One of the most widely spread is certainly the Nonlinear Shallow Water (NSW) model which is a nonlinear hyperbolic system coupling the time evolution of the surface elevation $\zeta$ to the vertically averaged horizontal component of the velocity $\ovv$. This system is derived from the free surface Euler equations by averaging in the vertical direction and neglecting all the terms of order $O(\mu)$, where the {\it shallowness parameter} $\mu$ is defined as
$$
\mu=\frac{H_0^2}{L^2}=\frac{(\mbox{typical
    depth})^2}{(\mbox{horizontal length scale})^2}.
$$
The NSW equations are  however not fully satisfactory since they neglect all the dispersive effects that play a very
important role in many situations, and in particular during the shoaling phase. These dispersive terms are of order $O(\mu)$ and are therefore neglected by the NSW equations. Keeping them in the equations, and neglecting only the $O(\mu^2)$ terms, one obtains a more accurate -- but mathematically and numerically more complicated -- set of equations known as the Serre \cite{Serre,SuGardner}, or Green-Naghdi \cite{GreenNaghdi,KBEW}, or fully nonlinear Boussinesq \cite{WKGS} equations. We shall refer to these models here as the Green-Naghdi (GN) equations. Contrary to the weakly nonlinear Boussinesq models that go back to Boussinesq himself, no smallness assumption is made on the size of the surface perturbations. We refer to \cite{lannes:book} for a rigorous derivation and a mathematical justification  (in the sense that their solutions remain close to the exact solution of the free surface Euler equations) of all these models. 
If the numerical approximation of various Boussinesq-type equations has attracted a lot of attention for the last 20 years (see for instance among the recent studies \cite{kalisch,  erduran:2007, eskilsson:2006,  delis:2012, engsig:2008, mario:2013, roeber:2012, shiach, soares}), it is mostly recently that discrete formulations for the GN equations have been proposed. Denoting by $d$ the horizontal dimension, we can refer for instance, in the case $d=1$, to  \cite{rchgth89, Bonneton20111479, MR2811693} for hybrid Finite-Volume (FV) and Finite-Difference (FD) discretizations, \cite{DuranMarche:2014ab, LGLX, pandi:panda} for discontinuous-Galerkin (dG) formulations, \cite{cie2007} for a compact FV approach, \cite{Mitso} for a Finite-Element (FE) approach on flat bottom or \cite{mario:2015} for an hybrid FV-FE formulation. There is even less studies in the case $d=2$, see \cite{lannes_marche:2014, LGH, SKHGG}.
These equations have also been adapted to handle wave breaking by adding an artificial viscous term to the momentum equation (see for instance \cite{CKDKC,CBB, KCKD, SM}) or by locally switching to the NSW equations in the vicinity of broken waves and using shock capturing schemes \cite{delis:2014, tissier2, tonelli:2010}. We refer to \cite{rchgth89} for a recent review on these aspects.

\begin{figure}
\centering
\includegraphics[width=0.55\textwidth]{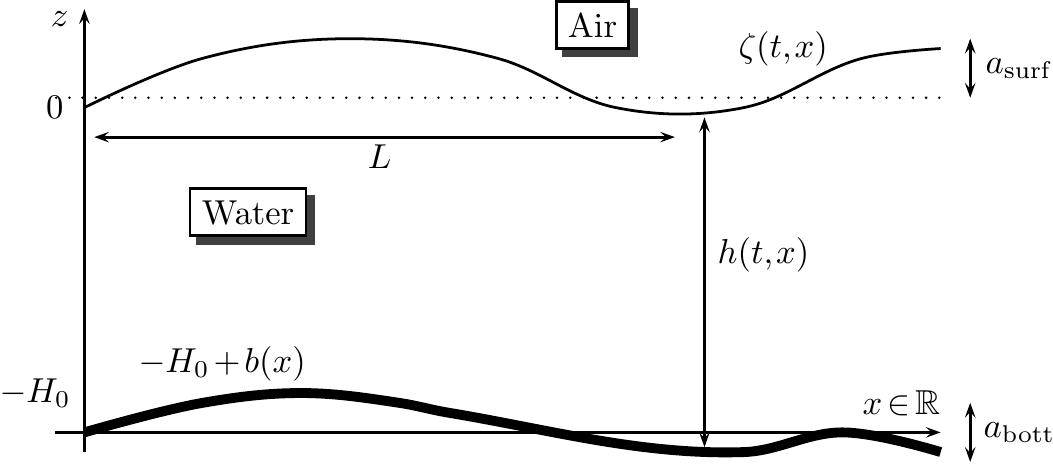}
\caption{Main notations.}\label{fignota}
\end{figure}


The Green-Naghdi and more generally most of the Boussinesq-type models rely on the assumption that the flow is irrotational or almost irrotational. Such an assumption is satisfied in most configurations but may fail in the surf zone where wave breaking can create vorticity currents (rip currents) or in present of some underlying current. The difficulty to describe wave motion in the presence of vorticity is that the dynamics of the flow is in general genuinely $(d+1)$-dimensional while in the irrotational framework the dynamics is only $d$-dimensional (vertical averaging has been used to remove the vertical variable). 

It is shown in \cite{CKDST} that the GN equations can describe rotational flows with purely vertical vorticity. In horizontal dimension $d=1$, it is shown in \cite{MS} that vorticity is responsible for the presence of an additional term in the momentum equation, which is coupled to the standard vorticity equation (see also \cite{KLS,KL} for a related approach).

Following the original approach by Green and Naghdi \cite{GreenNaghdi}, several authors  \cite{EWW,SW,ZDE} assumed a polynomial structure of the velocity profile and solved the mass and momentum equations projected on such a basis of functions. This approach is compatible with the presence of vorticity. An interesting and recent refinement for a better treatment of the surf zone consists in coupling this approach with shallow water asymptotics  \cite{ZKNDW}.

We follow in this paper another approach initially developed in \cite{castroLannes:2014aa} where it is shown that additional terms are necessary in the momentum equation in the presence of vorticity. Contrary to other approaches, these additional terms are determined through the resolution of $d$-dimensional evolution equations and do not require the resolution of the $(d+1)$-dimensional vorticity equation. The procedure is reminiscent of turbulence theory with the difference that no artificial closure is needed here: based on the controls on the solutions of the full Euler equations established in \cite{castro:lannes:2014:1}, one can show that the cascade of equations is finite at the precision of the model (we also point out the related work \cite{RG} where a physical modeling of the closure is used instead to handle turbulent bores). The resulting equations are an extended GN system with additional advection-like equations for the vorticity related terms. \\

This paper is concerned with the horizontal one dimensional case $d=1$. Its first goal  is to show that this approach can be used to get some new insight into wave-current interactions; we show for instance that the behavior of solitary waves can be drastically different in the presence of vorticity, leading to extremal peaked solitary waves with an angle at the crest that depends on the vorticity. The second goal of this paper is to propose a simple numerical scheme to numerically solve these extended GN equations and to highlight that despite the fact that these equations of motion are purely $d$-dimensional, they can be used to reconstruct the internal velocity field, even in the presence of non trivial vorticity and topography.

\subsection{The models}

In the case where the horizontal dimension $d$ is equal to one and following \cite{Bonneton20111479}, the irrotational GN 
equations can be formulated as
\be\label{eq6imphhVbis}
	\left\lbrace
	\begin{array}{l}
	\dsp \dt h + \dx \cdot (h \ovv)=0,\\
       	\dsp (I+ \mathbb{T})[\dt (h\ovv) 
        +\dx (h\ovv^2)] + gh\dx\zeta 
	+h\cQ_1(\ovv)  =0,
	\end{array}\right.
\ee
where we recall that $\zeta$ is the elevation of the wave with respect
to the rest level and that $\ovv$ is the vertically averaged
horizontal velocity, while $g$ stands for the acceleration of
gravity and $h$ is the total water height
$$
	h=H_0+\zeta-b,
$$
where $\{z=H_0-b(x)\}$ is a parametrization of the bottom (see Figure
\ref{fignota}). Finally, the linear operator
$\mathbb{T}=\mathbb{T}[h,b]$ and the quadratic form $\cQ_1(\cdot)=\cQ_1[h,b](\cdot)$ 
are defined by
\bea
	\label{eq2}
\mathbb{T}[h,b]w&=&h\cT[h,b](\frac{1}{h}w),\\
		\cT[h,b]w&=&
	-\frac{1}{3h}\dx(h^3 \dx w)+
\frac{1}{2h}\big( \dx(h^2 w\dx b)
	-h^2\dx w\dx b\big)
	+w(\dx b)^2,\\
	\label{eq3}
	\cQ_1[h,b](v)&=&\frac{2}{3h}\dx(h^3 (\dx v)^2)
	+  h (\dx v)^2\dx b
	+\frac{1}{2h}\dx(h^2 v^2 \dx^2 b)
	+v^2 \dx^2 b\dx b.
\eea
This formulation does not require the computation of any third-order
derivative, allowing for more robust numerical computations,
especially when the waves become steeper. Note also that if one removes the operator $\mathbb{T}$ and the nonlinearity $\cQ_1$ from the second equation in \eqref{eq6imphhVbis}, the model reduces to the standard NSW equations; these two terms accounts therefore for the $O(\mu)$ dispersive and nonlinear terms specific to the GN equations.

One of the main features of the Green-Naghdi model is that it allows
the description of  $d+1$ dimensional waves ($d$ being the horizontal
dimension) by a set of $d$-dimensional equations (independent on the vertical
variable $z$), hereby leading to considerable gains in mathematical
simplicity and computational time. The $d$-dimensional nature of the
flow is due to the fact that the flow is assumed to be {\it
  irrotational}; indeed, the velocity field ${\bU}$ in the fluid
domain then derives from a scalar velocity potential $\Phi$
(i.e. $\bU=\nabla_{X,z}\Phi$) and as remarked by Zakharov
\cite{Zakharov:1968aa} and Craig-Sulem \cite{CS} the free
surface $(d+1)$-dimensional Euler equations can then be reduced to an
Hamiltonian system coupling the surface elevation $\zeta$ to $\psi$, the trace
at the surface of the velocity potential. Both $\zeta$ and $\psi$
depend only on time and on the ($d$-dimensional) horizontal
variable $X$. The Green-Naghdi equation being obtained by an asymptotic
expansion in terms of the shallowness parameter $\mu$ of the free
surface Euler equations (see \cite{AlvarezSamaniego:2008p227,lannes:book} for a full mathematical justification
of this approximation), it is no surprise that they are also $d$
dimensional.

In presence of vorticity, the situation is drastically different since
the dynamics of the vorticity $\omega=\curl \bU$ is in general fully
$(d+1)$-dimensional. The Zakharov-Craig-Sulem formulation has recently
been generalized in \cite{castro:lannes:2014:1} to the rotational case; this
generalization, also formally hamiltonian, couples the evolution of
$\zeta$ and $\psi$ as in the irrotational case\footnote{Note however
  that in presence of vorticity, the velocity field $\bU$ does not
  derive from a scalar potential, and that an alternative definition
  is needed for $\psi$. Namely, it is defined such that $\nabla\psi$
  is the projection onto (horizontal) gradient vector field of the
  horizontal component of the tangential velocity at the surface.}, but this evolution is
now also coupled to the evolution of the vorticity field which depends in
general on all the space variables. One should therefore expect that
generalizations of the Green-Naghdi equations in presence of vorticity
have a full $(d+1)$-dimensional dependence in the space variables,
hereby implying a considerable increase of computational time. It has been shown recently in
\cite{castroLannes:2014aa} that this is not the case. In the case of a
constant vorticity, that is, when
$$
\curl \bU=(0,\omega,0)^T\quad \mbox{ with }\quad \omega(t,x,z)=\omega_0=\mbox{cst},
$$
this is not surprising because there is no $z$ dependence coming from the
equation on the vorticity. The vorticity field however induces a shear
which, together with the dispersive effects, make the horizontal
velocity depart from its vertical average. Because of this effect, the
Green-Naghdi equations \eqref{eq6imphhVbis} must be replaced\footnote{The model \eqref{gn:const} actually holds under a rather weak smallness assumption on the topography. More precisely, it is assumed that $\eps\beta\mu^{3/2}=O(\mu^2)$, where 
$$
\eps=\frac{a_{\rm surf}}{H_0}=\frac{\mbox{Amplitude of the waves}}{\mbox{typical depth}},\qquad
\beta=\frac{a_{\rm bott}}{H_0}=\frac{\mbox{Amplitude of the bottom variations}}{\mbox{typical depth}}.
$$
In full generality, an additional topography term is needed, see (1.27) in \cite{castroLannes:2014aa}.} by
\be\label{gn:const}
\begin{cases}
\dt h + \dx(h\bar{v}) = 0,\\
(1+\mathbb{T})[\dt(h\bv) + \dx(h\bv^2)] + gh\dx \zeta+ h\cQ_1(\bv) \\
\phantom{xxxxxxxxxxx\dx(h\bv^2)] + }+ \dx(\frac{1}{12}h^3\omega_0^2) + h\cC(\omega_0 h,\bv)  =0,
\end{cases}
\ee
with $\cC(\omega_0 h,\bv)$  obtained by
taking $v^\sharp=\omega_0 h$ in the following expression
\begin{align}
\cC(\dv,\bv) = -\frac{1}{6h}\dx\left( 2h^3\dv\dx^2\bv +\dx(h^3\dv)\dx\bv \right). \label{gen:c1}
\end{align}
For the case of a general vorticity, that is, when (in horizontal dimension $d=1$),
$$
\curl \bU=(0,\omega,0)^T\quad \mbox{ with }\quad \omega(t,x,z)=\dz
u-\dx w
$$
(and $\bU=(u,0,w)^T$), the vorticity $\omega$ satisfies the transport
equation
\begin{equation}\label{eqvort}
\dt \omega +(u\dx +w\dz )\omega=0,
\end{equation}
in which the $z$ dependence cannot be removed. The fact that one can
however derive $z$-independent Green-Naghdi type models in this
framework is therefore more surprising.  The one-dimensional GN system in the
presence of a general vorticity is then given\footnote{The same smallness assumption as for \eqref{gn:const} is made on the topography.} by
\be\label{gn:vort:dim}
\begin{cases}
&\dt h + \dx(h\bar{v}) = 0,\\
&(1+\mathbb{T})[\dt(h\bv) + \dx(h\bv^2)] + gh\dx \zeta + h\cQ_1(\bv)
+\dx E+ h\cC(\dv,\bv) =0,\\
&\dt \dv + \bv\dx\dv + \dv\dx\bv =0,\\
&\dt E + \bv\dx E + 3 E\dx \bv+ \dx F = 0,\\
&\dt F + \bv\dx F + 4 F\dx\bv = 0.
\end{cases}
\ee
Let us briefly comment on this model (we refer to \cite{castroLannes:2014aa} for more details). Due to the presence of the vorticity, there is a vertical dependence of the velocity field inside the fluid domain that may interact nonlinearly with the vertical dependence coming from the dispersive terms. After vertical averaging, this interaction is responsible for the term $h\cC(\dv,\bv)$  in the momentum equation; it is given by \eqref{gen:c1}. The difference with the case of a constant
vorticity is that $v^\sharp$ is now defined as a second order momentum
of the vorticity induced shear velocity,
\begin{equation}\label{defvsharp}
v^\sharp=\frac{12}{h^3}\int_{-H_0+b}^\zeta (z+H_0-b)^2v_{\rm sh}^*
\quad\mbox{ with }\quad
v^*_{\rm sh}=-\int_z^\zeta \omega+\frac{1}{h}\int_{-H_0+b}^\zeta \omega.
\end{equation}
Even though $v^\sharp$ is defined in terms of $\omega$, it is not
necessary to solve the $(1+1)$-dimensional vorticity equation
\eqref{eqvort} to compute it; indeed, it is shown in \cite{castroLannes:2014aa} that it
can be determined from its initial value by solving the third equation
of \eqref{gn:vort:dim}.\\
Similarly, the term $\dx(\frac{1}{12}h^3\omega_0^2)$ that appears in
\eqref{gn:const} is now replaced by $\dx E$, where $E$ is a second
order tensor (represented by a function here since we are in dimension $1$) describing the self quadratic interaction of the
vorticity induced shear inside the fluid domain,
\begin{equation}\label{defE}
E=\int_{-H_0+b}^\zeta (v^*_{\rm sh})^2.
\end{equation}
Here again, one wants to be able to compute $E$ without appealing to
the vorticity equation \eqref{eqvort}. The strategy adopted in
\cite{castroLannes:2014aa} is inspired by an analogy with turbulence
theory and recent works on roll waves and hydraulic jumps \cite{RG1,RG2}. The tensor $E$ is viewed as a ``Reynolds'' tensor where the
``averaging'' is in the present case the vertical integration. Looking for an
equation on $E$ one obtains a cascade of
equations involving tensors of increasing order; but unlike turbulence
theory, there is no need for an artificial closure of the cascade.
Indeed it can be proved that the contribution of the fourth order and
higher tensors are below the overall $O(\mu^2)$ precision of the model
and can therefore be neglected. The last two equations in
\eqref{gn:vort:dim} furnish this finite cascade of equations on the
second order tensor $E$ and the third order tensor $F$ (here again represented by a function in dimension $1$) defined as
\begin{equation}\label{defF}
F=\int_{-H_0+b}^\zeta (v_{\rm sh}^*)^3.
\end{equation}
The generalization of \eqref{gn:vort:dim} to two-dimensional surfaces
is also given in \cite{castroLannes:2014aa} but our focus is here on the
analysis of some properties of \eqref{gn:vort:dim} as well as the
development of a numerical code to compute its solutions.

\vspace{0.2cm}
\begin{rem}
As explained above, the model \eqref{gn:vort:dim} is precise up to
$O(\mu^2)$ terms; lowering the precision to $O(\mu^{3/2})$, one can
work with the simpler model
\be\label{gn:vort:dimsimpl}
\begin{cases}
&\dt h + \dx(h\bar{v}) = 0,\\
&(1+\mathbb{T})[\dt(h\bv) + \dx(h\bv^2)] + gh\dx \zeta + h\cQ_1(\bv)
+\dx E=0,\\
&\dt E + \bv\dx E + 3 E\dx \bv= 0.
\end{cases}
\ee
\end{rem}
\begin{rem}
Keeping the precision $O(\mu^2)$, a simplified model can also be obtained
in the situation where $F$ is initially almost equal to zero (this is
the case when the vorticity is constant or in the situation considered
in \S \ref{sect2Dex1} for instance). Removing $F$ from
\eqref{gn:vort:dim} one then obtains the reduced model
\be\label{modeleFzero}
\begin{cases}
&\dt h + \dx(h\bar{v}) = 0,\\
&(1+\mathbb{T})[\dt(h\bv) + \dx(h\bv^2)] + gh\dx \zeta + h\cQ_1(\bv)
+\dx E+ h\cC(\bv,\dv) =0,\\
&\dt \dv + \bv\dx\dv + \dv\dx\bv =0,\\
&\dt E + \bv\dx E + 3 E\dx \bv = 0.
\end{cases}
\ee
\end{rem}

\subsection{Organization of the paper}

In Section \ref{section:sol_wave}, we study the existence of solitary waves for the GN system with vorticity \eqref{gn:vort:dim}.
We show in \S \ref{sect:derSol} that the existence of  smooth solitary waves can be reformulated as an ODE problem. The existence of solutions is then established in \S \ref{ode:analysis} where we also comment on the qualitative differences with the irrotational case. For instance, while there are solitary waves of arbitrary amplitude for the standard GN equations, there are configurations with non trivial vorticities for which solitary waves cannot exceed a critical amplitude. Solitary waves of critical amplitude are then studied in \S \ref{ode:analysis:peak} where we show that these extremal solitary waves have a peak at their crest, whose angle depends on the vorticity.

We then present in Section \ref{sect:num:scheme} the numerical scheme we propose to solve  \eqref{gn:vort:dim}. After a simple renormalisation of the system using the mass conservation equation, we present in \S \ref{sect:splitting} a simple splitting scheme inspired by previous works on the standard GN equations. This splitting involves a conservative propagation step and a dispersive correction step. The conservative step is studied in \S \ref{sect:conservative}; as in the irrotational case, it is of hyperbolic type but because of the extra unknowns due to the vorticity, its structure is more complicated. In particular, there are now three wave speeds (instead of two in the irrotational case). A corresponding finite volume scheme is proposed, for which higher order extensions are constructed. The study of the dispersive step being similar to the irrotational case, we just briefly recall the main points in \S \ref{sect:S2}.

Section \ref{validation} is then devoted to the numerical validation of this scheme. The different kinds of smooth solitary waves predicted in Section \ref{section:sol_wave} are numerically observed in \S \ref{sect:SWP} and used to evaluate the convergence rate. We also observe numerically in \S \ref{sect:PSW} the existence of  the extremal peaked solitary waves exhibited in \S \ref{ode:analysis:peak}. We provide in \S \ref{sect:vortshoal} a numerical simulation involving a non-flat topography; we use this example to show that the vorticity may play a considerable role on the shoaling of waves.  

Finally, we detail in Section \ref{sec:velocity} how the system of equations \eqref{gn:vort:dim} can be
used to describe the dynamics of the $(d+1)$ velocity field $\bU=(u,w)^T$ at any time, up to a $O(\mu^\frac{3}{2})$ accuracy, and show how the previous discrete formulation of Section \ref{sect:num:scheme} may be easily modified to perform the corresponding velocity reconstructions. This process is illustrated by two prospective examples of wave-current interactions, involving a non trivial vorticity field and topography effects.

\section{Solitary waves}\label{section:sol_wave}

We investigate here the existence of solitary waves for the
Green-Naghdi equations with vorticity \eqref{gn:vort:dim}. We show in
\S \ref{sect:derSol} that smooth solitary waves must satisfy a second
order ODE. This ODE is solved in \S \ref{sect:derSol}, while extremal peaked solutions are studied in  \S \ref{ode:analysis:peak}.
\subsection{Derivation of the ODE for the shape of the solitary waves} \label{sect:derSol}

Our purpose here is to show that smooth solitary waves, if they exist,
must satisfy a second order ODE. We consider here flat bottoms
(i.e. $b=0$), and the system (\ref{gn:vort:dim}) can therefore be written
\begin{equation}\label{GNvort1dfin}
\begin{cases}
\zeta_t+(h\ovv)_x=0,\\
\dsp \ovv_t+g\dx\zeta
+\ovv\dx\ovv +\frac{1}{h} E_x-\frac{1}{6h}
\big[2h^3 v^\sharp \ovv_{xx}+ (h^3 v^\sharp)_x\ovv_x\big]_x=\frac{1}{3}\frac{1}{h}\big[h^3(\ovv_{xt}+
\ovv\ovv_{xx}- \ovv_x^2)\big]_x 
\\
\dsp  v^\sharp_t+ (\ovv v^\sharp)_x =0,\\
\dsp  \big(\frac{E}{h^3}\big)_t+\ovv  \big(\frac{E}{h^3}\big)_x+ \frac{1}{h^3}F_x=0,\vspace{1mm}\\
\dsp \big(\frac{F}{h^4}\big)_t+\ovv\big(\frac{F}{h^4}\big)_x =0,
\end{cases}
\end{equation}
with $h=H_0+\zeta$.
We look for solitary waves solutions to \eqref{GNvort1dfin},
i.e. solutions of the form
$$
(\zeta,\ovv,v^\sharp,E,F)(t,x)=(\underline{\zeta},\underline{\ovv},\underline{v}^\sharp,\uE,\uF)(x-ct),
$$
for some constant $c\in \R$, and with $\underline{\zeta}$ and
$\underline{v}$ vanishing at infinity, over a current that might not
vanish at infinity, that is, we assume that
$$
\lim_{\pm\infty} (\underline{\zeta},\underline{v})=0
\quad\mbox{ and }\quad \lim_{\pm\infty}(\underline{v}^\sharp,\underline{E},\underline{F})=(v_\infty^\sharp,
E_\infty,F_\infty)
$$
for some constants $E_\infty$, $F_\infty$, $v_\infty^\sharp$.
Such solutions should satisfy
(for the sake of clarity, we do not underline the functions in the expressions below)
\begin{equation}\label{GNvort1dfin2}
\begin{cases}
[-(c-\ovv) h]_x=0,\\
\dsp -(c-\ovv) \ovv_x+g\dx\zeta
 +\frac{E_x}{h} -\frac{1}{6h}
\big[2h^3 v^\sharp \ovv_{xx}+ (h^3 v^\sharp)_x\ovv_x\big]_x=-\frac{1}{3h}\big[h^3((c-\ovv)\ovv_{xx}+ \ovv_x^2)\big]_x \vspace{1mm}
\\
\dsp  \big[(c-\ovv) v^\sharp\big]_x =0,\vspace{1mm}\\
\dsp  -(c-\ovv)  \big(\frac{E}{h^3}\big)_x+ \frac{1}{h^3}F_x=0,\vspace{1mm}\\
\dsp (c-\ovv)\big(\frac{F}{h^4}\big)_x =0.
\end{cases}
\end{equation}
Integrating the first equation, and using the fact that $\zeta$ and
$\ovv$ vanish at infinity, one readily deduces
\begin{equation}\label{ann1}
(c-\ovv) h=cH_0.
\end{equation}
Multiplying the second equation by $h$ and integrating in $x$, we
therefore get
\begin{align}
-cH_0\ovv+\frac{g}{2}(h^2-H_0^2)
 +  (E-E_\infty)- &\frac{1}{6}
\big(2h^3 v^\sharp \ovv_{xx} + (h^3 v^\sharp)_x\ovv_x\big)\label{ann2}\\
&=-\frac{1}{3}\big(h^2 cH_0\ovv_{xx}+ h^3\ovv_x^2\big),\nonumber
\end{align}
and we need to determine $v^\sharp$ and $E$. Let us proceed first with $v^\sharp$.
From the third equation in \eqref{GNvort1dfin2}, we have
$$
(c-\ovv)v^\sharp=cv^\sharp_\infty,
$$
which, together with \eqref{ann1}, yields
\begin{equation}\label{expvs}
v^\sharp=\frac{h}{H_0} v^\sharp_\infty.
\end{equation}
We now turn to derive an expression for $E$. From the last equation, we get that
\begin{equation}\label{expF}
F=\frac{h^4}{H_0^4} F_\infty.
\end{equation}
Together with \eqref{ann1}, this allows one to rewrite
the fourth equation as
$$
-c H_0\big(\frac{E}{h^3}\big)_x+4  \frac{F_\infty}{H_0^4} h h_x=0,
$$
and therefore
\begin{equation}\label{expE}
E=E_\infty+(\frac{h^3}{H_0^3}-1)E_\infty+2\frac{F_\infty}{c}\frac{(h^2-H_0^2)h^3}{H_0^5}.
\end{equation}
Plugging \eqref{expvs} and \eqref{expE} into \eqref{ann2} we obtain
\begin{align*}
-cH_0\ovv+\frac{g}{2}(h^2-H_0^2)
 + (\frac{h^3}{H_0^3}-1)E_\infty  &+2\frac{F_\infty}{c}\frac{(h^2-H_0^2)h^3}{H_0^5}
 -\frac{h^2}{3H_0}v^\sharp_\infty
[ h^2\ovv_{x}]_x \\
&=-\frac{1}{3}\big(h^2 cH_0\ovv_{xx}+ h^3\ovv_x^2\big). 
\end{align*}
Since \eqref{ann1} implies that $ h^2 v_x=cH_0h_x$ and
$v=c\frac{h-H_0}{ h}$, we deduce further
that
\begin{align*}
\nonumber
-c^2H_0\frac{h-H_0}{h}+\frac{g}{2}(h^2-H_0^2)
 + (\frac{h^3}{H_0^3}-1)E_\infty &+2\frac{F_\infty}{c}\frac{(h^2-H_0^2)h^3}{H_0^5}  -\frac{1}{3} cv^\sharp_\infty
h^2h_{xx} 
\\ &=-\frac{c^2}{3} H_0^2 h\big[\frac{1}{h}h_x\big]_x.
\end{align*}
This leads us to the following definition of a (smooth) solitary wave:

\vspace{0.2cm}
\begin{definition}\label{defSolWav}
A solitary wave of speed $c$ for \eqref{GNvort1dfin} is a mapping
$$
(t,x)\in \R^2\mapsto (\underline{\zeta},\underline{\ovv}, \underline{v}^\sharp,
\underline{E}, \underline{F})(x-ct)
$$
 such that there exists $h\in C^2(\R)$ and
$E_\infty>0$, $v^\sharp_\infty\in \R$ and $F_\infty\in \R$ such that
\begin{align}
&\underline{\zeta}=h-H_0,\quad \underline{\ovv}=c\frac{h-H_0}{h},\quad
\underline{v}^\sharp=\frac{h}{H_0}v^\sharp_\infty,\\
&\underline{E}=\frac{h^3}{H_0^3}E_\infty+2\frac{F_\infty}{c}\frac{(h^2-H_0^2)h^3}{H_0^5},\quad
\underline{F}=\frac{h^4}{H_0^4} F_\infty
\end{align}
and $h$ solves the ODE
\begin{align}\label{ODEbase}
\frac{1}{3}c(cH_0^2-v_\infty^\sharp h^2)h_{xx}=\frac{h-H_0}{2 h}(2c^2H_0- & gh(h+H_0)) -  (\frac{h^3}{H_0^3}-1)E_\infty  \\&-2\frac{F_\infty}{c}\frac{(h^2-H_0^2)h^3}{H_0^5}+\frac{c^2}{3}H_0^2\frac{h_x^2}{h}\nonumber
\end{align}
on $\R$ and satisfies $\lim_{\pm \infty} h=H_0$. The function $h$ is called
the \emph{profile} of the solitary wave.
\end{definition}

\subsection{Existence of smooth solitary waves}\label{ode:analysis}

We first consider here the case where $F_\infty=0$ and prove the
existence of solitary waves in the sense of Definition \ref{defSolWav}.

\vspace{0.2cm}
\begin{proposition}\label{propSolWav}
Let $E_\infty>0$, $v^\sharp_\infty\in \R$ and $F_\infty=0$. Let also
$h_{\rm max}>H_0$. \\
{\bf i.} Up to translations, there can be at most two solitary waves of maximal height
$h_{\rm max}$ for \eqref{GNvort1dfin}; if they exist, they have
opposite speed $\pm \underline{c}$, with
$$
\underline{c}=\Big(gh_{\rm max}+\frac{h_{\rm max}(h_{\rm max}+2H_0)}{H_0^3}E_\infty\Big)^{1/2}.
$$
{\bf ii.} The solitary wave of speed $\underline{c}$
(resp. $-\underline{c}$) exists if and only if the following condition holds 
$$
\underline{c}(\underline{c}H_0^2-v_\infty^\sharp h_{\rm max}^2)>0 \qquad
(\mbox{resp. } -\underline{c}(-\underline{c}H_0^2-v_\infty^\sharp
h_{\rm max}^2)>0 ).
$$
The profile of the solitary wave then attains its maximal
value at a unique point $x_{\rm max}$ and it is symmetric with respect
to the axis $x=x_{\rm max}$ and decaying on the half-line $x>x_{\rm max}$.
\end{proposition}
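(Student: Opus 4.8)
The plan is to integrate the second-order ODE \eqref{ODEbase} (with $F_\infty=0$) once more, reducing it to a first-order equation $h_x^2=\Phi(h)$ with $\Phi$ an explicit algebraic function, and then to read off the speed $\underline c$, the dichotomy, and the sharp existence condition from elementary properties of $\Phi$. First I would write \eqref{ODEbase} as $a(h)h_{xx}+b(h)h_x^2=f(h)$ with $a(h)=\frac13c(cH_0^2-v_\infty^\sharp h^2)$, $b(h)=-\frac{c^2H_0^2}{3h}$, and $f$ the remaining (derivative-free) part, and look for an integrating factor $w(h)$ making $w\,(ah_{xx}+bh_x^2)\,h_x$ an exact derivative $\frac{d}{dx}\big(\tfrac12\,wa\,h_x^2\big)$; the condition $(wa)'=2wb$ integrates to $w(h)\propto h^{-2}$. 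Multiplying \eqref{ODEbase} by $w(h)h_x$ and integrating once then gives an identity of the form
\begin{equation*}
\frac{c\,(cH_0^2-v_\infty^\sharp h^2)}{6h^2}\,h_x^2=G(h)+\mathcal E,\qquad G(h)=\int\frac{f(h)}{h^2}\,dh,
\end{equation*}
where $G$ is a combination of $h,h^2,1/h,1/h^2$ (the $1/h$ term turns out to have zero coefficient, so no logarithm appears) and $\mathcal E$ is a constant. The constant is pinned down by the behaviour at infinity: near $\pm\infty$, where $h$ is close to $H_0$, the displayed identity bounds $h_x$, hence \eqref{ODEbase} bounds $h_{xx}$, and $h\to H_0$ with bounded $h_{xx}$ forces $h_x\to0$; plugging $h=H_0$, $h_x=0$ yields $\mathcal E=-G(H_0)$.

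Solving for $h_x^2$ and simplifying, and factoring out the double zero at $h=H_0$ that is present because $G'(H_0)=f(H_0)/H_0^2=0$, I expect to reach
\begin{equation*}
h_x^2=\Phi(h):=\frac{3(h-H_0)^2}{cH_0^3\,(cH_0^2-v_\infty^\sharp h^2)}\,Q(h),\qquad Q(h)=c^2H_0^3-(gH_0^3+2E_\infty H_0)\,h-E_\infty h^2.
\end{equation*}
The double zero of $\Phi$ at $H_0$ is precisely what makes $h=H_0$ attained only as $x\to\pm\infty$, i.e. what produces a homoclinic (solitary-wave) profile. This factorization of $G(h)-G(H_0)$ into $(h-H_0)^2$ times the concave quadratic $Q$ is the one genuinely computational step, and it is where all the structure comes from.

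For part i: a profile attaining a maximal value $h_{\rm max}>H_0$ must satisfy $\Phi(h_{\rm max})=0$, and since the prefactor of $Q$ in $\Phi$ does not vanish there, this forces $Q(h_{\rm max})=0$, i.e. $c^2H_0^3=(gH_0^3+2E_\infty H_0)h_{\rm max}+E_\infty h_{\rm max}^2$, which is exactly the relation $c^2=\underline{c}^2$ of the statement; hence $c=\pm\underline c$. For each choice of sign the profile solves the autonomous first-order equation $h_x=\pm\sqrt{\Phi(h)}$ with $h(x_{\rm max})=h_{\rm max}$, so it is determined up to translation, which gives ``at most two'' solitary waves, of opposite speeds.

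For part ii: with $c=\pm\underline c$ the product of the roots of $Q$ equals $-c^2H_0^3/E_\infty<0$, so $Q$ has the positive root $h_{\rm max}$ and a negative root, hence $Q>0$ on $(0,h_{\rm max})$; consequently the sign of $\Phi$ on $(H_0,h_{\rm max})$ is that of $\big(c(cH_0^2-v_\infty^\sharp h^2)\big)^{-1}$. Since $h\mapsto cH_0^2-v_\infty^\sharp h^2$ is monotone for $h>0$, a short case discussion shows that $c(cH_0^2-v_\infty^\sharp h^2)>0$ throughout $[H_0,h_{\rm max}]$ is equivalent to the single endpoint inequality $\underline c(\underline cH_0^2-v_\infty^\sharp h_{\rm max}^2)>0$ (resp. its analogue for $-\underline c$). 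Under this condition $\Phi$ is positive on $(H_0,h_{\rm max})$, has a strict double zero at $H_0$ and a simple zero at $h_{\rm max}$, so a standard phase-plane argument produces a global solution increasing from $H_0$ (at $-\infty$) to $h_{\rm max}$ and decreasing back to $H_0$ (at $+\infty$), and the invariance of \eqref{ODEbase} under $x\mapsto -x$ gives the symmetry about $x_{\rm max}$ and the monotone decay for $x>x_{\rm max}$. Conversely, if the inequality fails then $\Phi$ becomes negative just below $h_{\rm max}$, or \eqref{ODEbase} degenerates at $h_{\rm max}$, contradicting $h_x^2=\Phi(h)\ge0$, so no such solitary wave exists. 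I expect the main obstacle to be the bookkeeping in the first two steps — obtaining the correct primitive $G$ and recognizing the clean factored form of $\Phi$; once $\Phi$ is in hand, fixing $\mathcal E$, the phase-plane existence/uniqueness, and the monotonicity reduction of the sign condition are all routine.
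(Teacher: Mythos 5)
Your argument is, in substance, the paper's own proof: the integrating factor $h^{-2}$ followed by one integration in $x$ reproduces exactly the first integral \eqref{bdx3} (your $Q(h)$ is the bracket there multiplied by $H_0^3$), the speed relation $Q(h_{\rm max})=0$ is the paper's Steps 1--2, and the sign discussion of the factor $c(cH_0^2-v_\infty^\sharp h^2)$ together with the phase-plane construction and the symmetry/monotonicity claims is the paper's Steps 2--4. Your reduction of the interval condition to the endpoint inequality is correct, since $c(cH_0^2-v_\infty^\sharp h^2)\geq c^2H_0^2>0$ when $cv_\infty^\sharp\leq 0$, while the function is decreasing in $h$ when $cv_\infty^\sharp>0$; the determination of the integration constant is also fine (one can even avoid your boundedness argument by evaluating the first integral along a sequence $x_n\to\infty$ with $h_x(x_n)\to0$ provided by the mean value theorem).

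The one genuine defect is the borderline case $\underline{c}H_0^2-v_\infty^\sharp h_{\rm max}^2=0$, which the ``only if'' in part ii must exclude and which you dismiss with ``or \eqref{ODEbase} degenerates at $h_{\rm max}$, contradicting $h_x^2=\Phi(h)\ge 0$''. That reason is incorrect: in this case $cv_\infty^\sharp=c^2H_0^2/h_{\rm max}^2>0$, so the denominator $c(cH_0^2-v_\infty^\sharp h^2)$ and $Q(h)$ have simple zeros at $h_{\rm max}$ with derivatives of the same sign, hence $\Phi$ extends continuously to $h_{\rm max}$ with a \emph{strictly positive} limit, and there is no sign contradiction at all (this positive limit is precisely the squared slope at the crest of the peaked wave of Proposition \ref{prop:peaked}). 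The correct obstruction, which is the paper's explicit computation in Step 4, is that a $C^1$ profile with an interior maximum must have $h_x(x_{\rm max})=0$, whereas here $h_x^2=\Phi(h)\to\Phi(h_{\rm max})>0$ as $x\to x_{\rm max}$; equivalently, evaluating \eqref{ODEbase} at the crest, where its left-hand side vanishes, forces $\frac{c^2}{3}H_0^2\,h_x^2(x_{\rm max})/h_{\rm max}$ to equal a strictly positive quantity. The same blind spot appears, harmlessly, in part i: if the denominator vanished at $h_{\rm max}$ your $\Phi$ would be undefined there, so the clean argument is to use the undivided identity \eqref{bdx3}, whose left side vanishes at the crest because $h_x(x_{\rm max})=0$, forcing $Q(h_{\rm max})=0$ directly. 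With the critical case repaired along these lines, your proof coincides with the paper's.
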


\vspace{0.2cm}
\begin{proof}
{\bf Step 1.} We derive here an expression for $h_x^2$ in terms of
$h$. For later investigations, we deal with the general case $F_\infty\in \R$ here.
Multiplying the differential equation \eqref{ODEbase} by $h_x$ and
dividing by $h^2$, we  get
\begin{align*}
-\frac{c^2}{3}H_0^2\frac{h_x^3}{h^3}+\frac{1}{3}c^2H_0^2\frac{1}{h^2}h_{xx}h_x
-\frac{1}{3}cv_\infty^\sharp
h_{xx}h_x=&\frac{h-H_0}{2 h^3}(2c^2H_0-gh(h+H_0))h_x\\
&-
(\frac{h^3}{H_0^3}-1)\frac{h_x}{h^2}E_\infty-2\frac{F_\infty}{c}\frac{(h^2-H_0^2)}{H_0^5}hh_x.
\end{align*}
After integrating in $x$, this yields
\begin{align*}
\frac{1}{6}c\big(cH_0^2-v_\infty^\sharp h^2)\frac{h_x^2}{h^2} &=
\frac{c^2}{2}\frac{1}{h^2}(h-H_0)^2-\frac{1}{2}g\frac{1}{h}(h-H_0)^2
\\&-\frac{1}{2H_0^3}\frac{1}{h}(h-H_0)^2(h+2H_0)E_\infty
-\frac{F_\infty}{2c}\frac{(h^2-H_0^2)^2}{H_0^5}
\end{align*}
(the integration constant has been chosen in order to respect the
constraint that $\zeta$ and its derivatives vanish at
infinity), or equivalently
\begin{equation}\label{bdx3}
\frac{c}{3}\big(cH_0^2-v^\sharp_\infty h^2\big)h_x^2=(h-H_0)^2\Big(c^2-gh-
\frac{h(h+2H_0)}{H_0^3}E_\infty-\frac{h^2(h+H_0)^2}{H_0^5}\frac{F_\infty}{c}\Big).
\end{equation}

{\bf Step 2.} Expressions for the velocity and qualitative analysis. By definition, if $h$ is
the profile of a solitary wave then it is a $C^2$-function and its
derivative must vanish at its maximum. The formula \eqref{bdx3} then
provides directly the only two (recall that $F_\infty$ is assumed to
be zero here) possible values for the speed $c$. \\
Since the function $h\mapsto gh+\frac{h(h+2H_0)}{H_0^3}$ is strictly
increasing on $\R^+$, we also deduce from \eqref{bdx3} that $h_x$
cannot vanish at another point, and therefore that the maximum of $h$ is
attained at a unique point $x_{\rm max}$, and further, that $h$ has to
be monotonous on both sides of $x_{\rm max}$. The fact that it is
decaying on $x>x_{\rm max}$ follows from the condition that $h\to
H_0<h_{\rm max}$ at infinity. Finally, the fact that $h$ is symmetric
with respect to $x_{\rm max}$ follows from the simple observation that
if $h$ solves \eqref{ODEbase} for $x\geq x_{\rm max}$ with boundary
conditions $h(x_{\rm max})=h_{\rm max}$ and $h_x(x_{\rm max})=0$ then
$x\mapsto h(2x_{\rm max}-x)$ furnishes a solution for $x\leq x_{\rm max}$.\\

{\bf Step 3.} Existence of a solitary wave of speed $c=\underline{c}$
or $c=-\underline{c}$. If ${c}H_0^2-v_\infty^\sharp h_{\rm max}^2\neq
0$, then the Cauchy-Lipschitz theorem furnishes a local solution
with boundary condition $h(x_{\rm max})=h_{\rm max}$ and $h_x(x_{\rm
  max})=0$. If moreover $c({c}H_0^2-v_\infty^\sharp h_{\rm max}^2)>0$, then it is easy to deduce
from \eqref{ODEbase} that this local solution satisfies $h''(x_{\rm
  max})<0$ and therefore that the solution attains a local maximum at
$x_{\rm max}$. Proceeding as in Step 2, one gets that this local
solution is symmetric with respect to $x_{\rm max}$ and decaying on
$x>x_{\rm max}$. Moreover, one always has $h>H_0$; indeed, if one had
$h(x_0)=H_0$ for some $x_0\in \R$, then one would have $h_x(x_0)=0$ by
\eqref{bdx3}, and by uniqueness, one would have $h\equiv H_0$, which
is absurd. Therefore $h$ decays to some limit as $x\to \infty$, and
this limit is necessarily $H_0$ by \eqref{bdx3}. The identity
\eqref{bdx3} also shows that $h_x$ remains bounded, so that no blow up
of $h$ nor $h_x$ can occur and the solution of the ODE \eqref{ODEbase}
is global.\\

{\bf Step 4.} Non existence  of a solitary wave of speed $c=\underline{c}$
or $c=-\underline{c}$. If  $c({c}H_0^2-v_\infty^\sharp h_{\rm max}^2)<0$, then it is easy to deduce
from \eqref{bdx3} that no solitary wave can exist.  The only case left to investigate is
therefore the critical case ${c}H_0^2-v_\infty^\sharp h_{\rm
  max}^2=0$. In this case, one gets from \eqref{ODEbase} that
\begin{align*}
\frac{c^2}{3}H_0^2 \frac{h'(x_{\rm max})}{h(x_{\rm max})}&=\frac{h_{\rm
  max}-H_0}{2h_{\rm max}}(gh_{\rm max}(h_{\rm max}+H_0)-2c^2
H_0)+(\frac{h_{\rm max}^3}{H_0^3}-1)E_\infty\\
&>0
\end{align*}
which contradicts the assumption that $h$ is a $C^1$-function
attaining its maximum at $x_{\rm max}$.
\end{proof}

\vspace{0.2cm}
\begin{rem}\label{remSolWav1}
If in addition to the assumption $F_\infty=0$ we take
$E_\infty=v_\infty^\sharp=0$ in the statement of Proposition
\ref{propSolWav}, then one has
  $\underline{v}^\sharp=\underline{E}=\underline{F}=0$ in Definition
  \ref{defSolWav} so that the solitary waves are the same as in the
  irrotational setting for which it is well known that explicit
  solitary waves exist. More precisely, for {\it any} maximal
  amplitude $h_{\rm max}>H_0$, there exists two (up to translations)
  solitary waves of
  speed $c=\pm \sqrt{gh_{\rm max}}$ and with the same profile $h_+=h_-$  given by the resolution of \eqref{ODEbase}
and which can in this particular case be computed explicitly,
\beq\label{solitary:gn}
h(x)=H_0+\eps H_0 (\mbox{\rm sech}(\frac{x}{\lambda}))^2
\quad\mbox{ with }\quad
\lambda= \frac{2}{\sqrt{3}} \sqrt{\frac{1+\eps}{\eps}}H_0 ,
\eeq
and where we denoted $h_{\rm max}=H_0(1+\eps)$.
\end{rem}

\vspace{0.2cm}
\begin{rem}\label{remSolWav2}
If in addition to the assumption $F_\infty=0$ we take
$v_\infty^\sharp=0$  but consider the case $E_\infty>0$, the situation is
  qualitatively the same as in Remark \ref{remSolWav1}: for {\it any} maximal
  amplitude $h_{\rm max}=H_0(1+\eps)>H_0$, there exists two solitary waves of
  same shape and of  opposite speed $c=\pm \uc$. The only difference is
  that the speed $\uc$ is larger than in the irrotational case, 
\beq\label{eq:speed_sol1}
\uc=\Big(gh_{\rm max}+\frac{ h_{\rm max} (h_{\rm max}+2H_0)}{H_0^3}E_\infty\Big)^{1/2},
\eeq
and that the solitary waves becomes narrower as $E_\infty$
increases (this follows easily from the comparison principle for
ODEs); see Figure \ref{influence:E}. 

\begin{figure}
\centering
\includegraphics[width=1\textwidth]{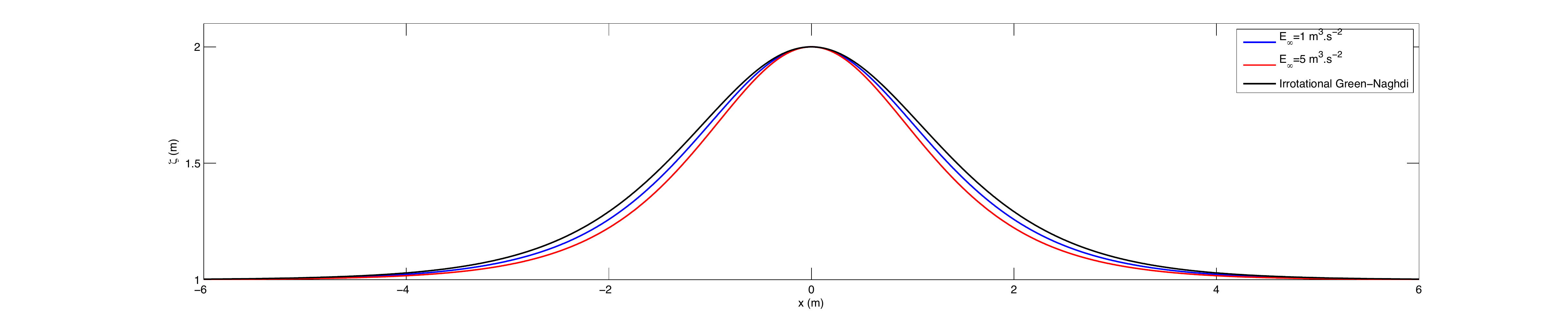}
\caption{Influence of $E_\infty$ on the solitary wave profile for $H_0=1\,m$, $h_{\rm max}=2\,m$, $v^\sharp_\infty=0$, $F_\infty=0$.}
\label{influence:E}
\end{figure}
\end{rem}

\begin{rem}\label{remSolWav3}
 If we assume that $F_\infty=0$ but $E_\infty>0$ and  $v_\infty^\sharp
 > 0$ (the case $v_\infty^\sharp
 <0$ can be treated in a similar way) in the statement of Proposition \ref{propSolWav} then there are
 two major qualitative changes with respect to the situation
 considered in Remark \ref{remSolWav2}. The
  first one is that 
  right going solitary waves do not exist for any maximal amplitude
  $h_{\rm max}>H_0$. Indeed, the criterion given in the second point
  of the proposition is always satisfied for the left-going solitary
  wave, but requires for the right-going one that
$$
\underline{c}H_0^2-v_\infty^\sharp h_{\rm max}^2>0.
$$
or equivalently, using the explicit expression of $\underline{c}$
given in Proposition \ref{propSolWav},
$$
\big( gH_0+(\tilde h+2)\frac{E_\infty}{H_0}\big)>(v_\infty^\sharp)^2\tilde
h^3
\quad\mbox{ with }\quad \tilde{h}=\frac{h_{\rm max}}{H_0}.
$$
In the case where $v_\infty^\sharp>0$, the criterion given in the statement of Proposition \ref{propSolWav}
for the existence of solitary waves can therefore be
restated as: left-going solitary waves always exist, but right-going
solitary waves exist if and only if $h_{\max}<h_{\rm crit}$ where the
critical height $h_{\rm crit}$ is given by $h_{\rm crit}=H_0 \tilde h_{\rm crit}$
 with 
$\tilde{h}_{\rm crit}$ the only positive root of the polynomial 
$$
P(X)=(v_\infty^\sharp)^2 X^3-\big( gH_0+(X+2)\frac{E_\infty}{H_0}\big).
$$

\begin{figure}
\centering
\includegraphics[width=1\textwidth]{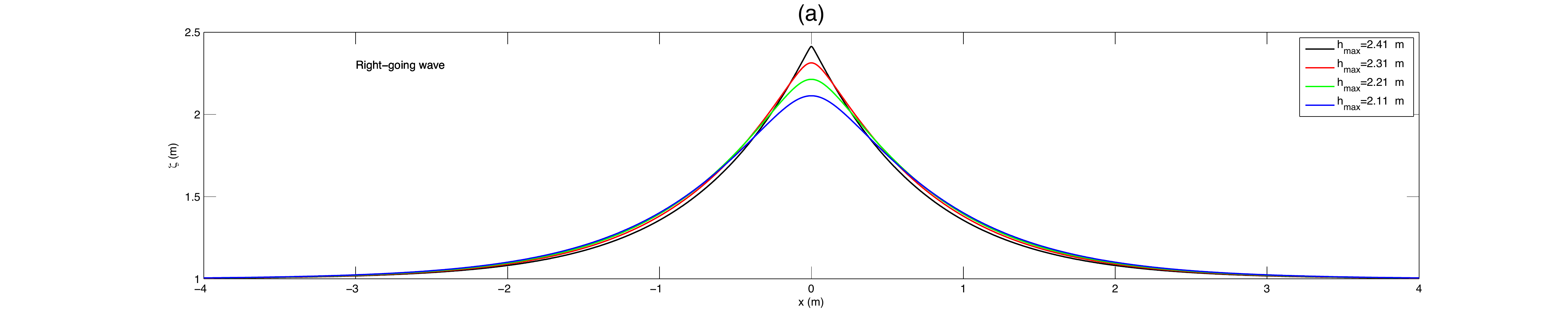}
\includegraphics[width=1\textwidth]{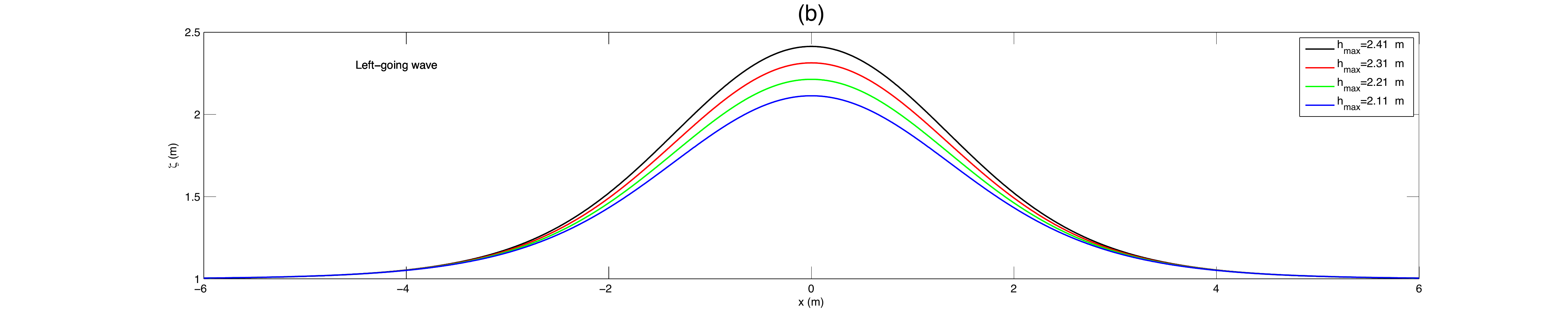}
\includegraphics[width=1\textwidth]{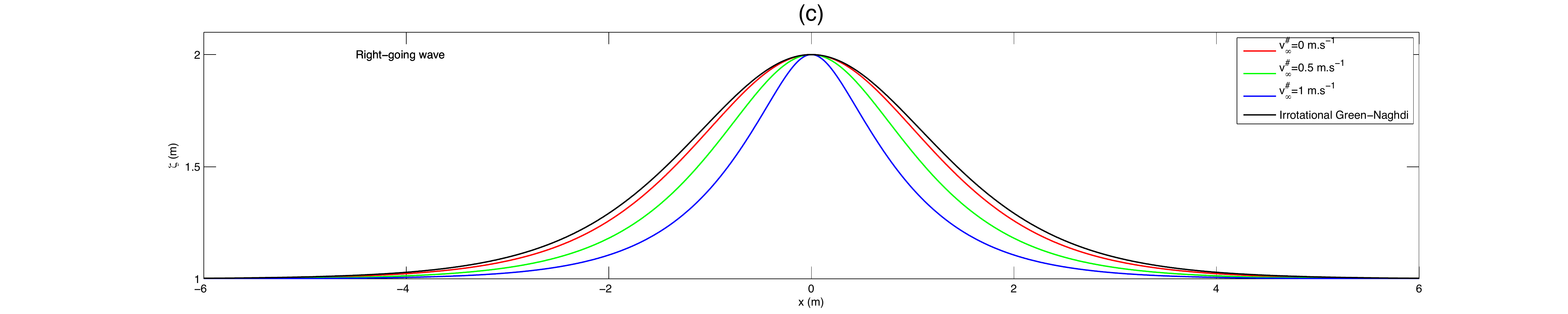}
\includegraphics[width=1\textwidth]{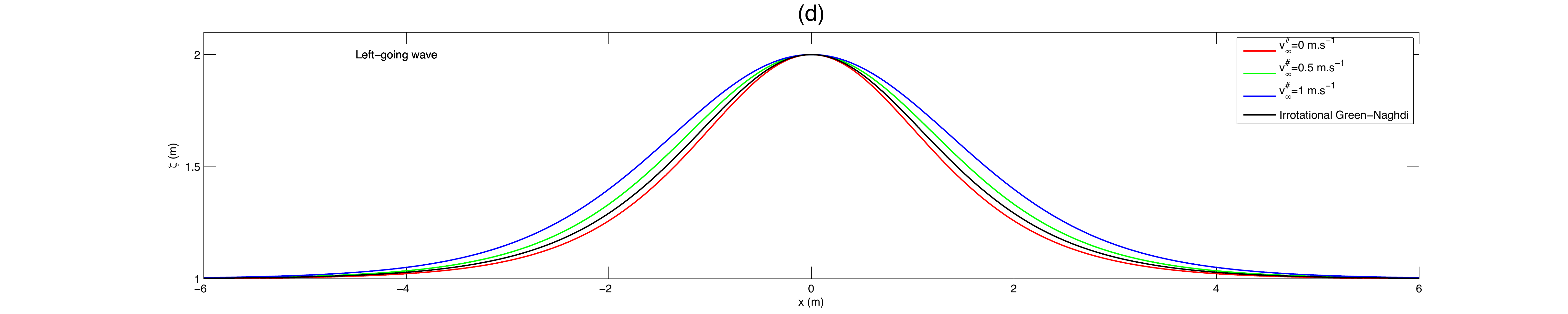}
\caption{Influence of $v^\sharp_\infty>0$ on the profiles for $H_0=1\,m$, $E_\infty=1\,m^3.s^{-2}$ and $F_\infty=0$. (a) Right-going solitary waves of nearly critical amplitudes for $v^\sharp_\infty=1\,m.s^{-1}$ and therefore $h_{\rm crit}\approx 2.42\,m$. (b) Left going solitary waves for the same amplitudes. (c) Right-going solitary waves for $h_{\rm max}=2\,m$ and increasing values of $v^\sharp_\infty$. (d) Left-going solitary waves for $h_{\rm max}=2\,m$ and increasing values of $v^\sharp_\infty$.}
\label{figcrit}
\end{figure}
The second qualitative change with respect to the situation previously
considered is that the shape of the two solitary waves of speed $c=\pm
\uc$ are not the same since the ODE in Definition \ref{defSolWav} does no
longer depend on $c$ through $c^2$ only. We refer to
Figure \ref{figcrit} for an illustration of this behavior, in which we show the right and left-going waves profiles for increasing values of $h_{\rm max}$ in the vicinity of $h_{\rm crit}$, for  $v_\infty^\sharp =1\,m.s^{-1}$. We also highlight the influence of increasing values of $v_\infty^\sharp$ on the left-going waves profiles for a given value of $h_{\rm max}$.
\end{rem}

\vspace{0.2cm}
We recall that we assumed in Proposition \ref{propSolWav} that
$F_\infty=0$. Let us now give a brief discussion about the general
case $E_\infty>0$,
$v_\infty^\sharp > 0$ and $F_\infty>0$ (one can treat the case
$v_\infty^\sharp<0$ and/or $F_\infty<0$ in a similar way).  The
  presence of $F_\infty\neq 0$ implies that the possible speeds for
  solitary waves of maximal amplitude $h_{\rm max}$ are found by solving the {\it third} order polynomial
\begin{align}\label{eqc}
X^3+pX+q \quad \mbox{ with }\quad & p=-\big(g h_{\rm max}+h_{\rm max} (h_{\rm max}+2H_0)\frac{E_\infty}{H_0^3}\big)\\
 & q=-h_{\rm max}^2 (h_{\rm max}+H_0)^2 \frac{F_\infty}{H_0^5},
\end{align}
(this is a simple consequence of \eqref{bdx3}). Defining as previously
$\eps$ by
$$
\frac{h_{\rm max}}{H_0}=1+\eps
$$
the discriminent $\Delta=-(4p^3+27q^2)$ of this polynomial is always
positive provided that the following smallness condition holds for
$F_\infty$
\begin{equation}\label{condF}
F_\infty^2<\frac{27}{4}\frac{H_0^5}{(1+\eps)(2+\eps)^4}\big( g+(3+\eps)\frac{E_\infty}{H_0^2} \big)^3
\end{equation}
(this condition is satisfied for all realistic configurations). The
polynomial \eqref{eqc} has then three distinct
roots. Since the coefficient of $X^2$ is equal to zero, the sum of the
three roots is necessarily equal to zero; moreover, their product has the sign of $-q$, and therefore the
sign of $F_\infty$. If $F_\infty>0$, then one has
one positive root $0<c_{+}$  and two negative roots
$-c_{-,2}<-c_{-,1}<0$.  There are therefore  possibly two left going
  solitary waves, and a right going one. The right-going wave is
  subject to the same constraint $h_{\rm max}< h_{\rm crit}$ as in the case
  $E_\infty>0$, $v_\infty^\sharp > 0$ and $F_\infty=0$. In addition,
  \eqref{bdx3} shows that the function
\begin{equation}\label{defphi}
\varphi_c: h\mapsto c^2-h-\eps^2\mu
h(h+2)E_\infty-\eps^3\mu^{3/2}h^2(h+1)^2\frac{F_\infty}{c}
\end{equation}
must be positive for all $H_0\leq h <h_{\rm max}$. We can now state
the following proposition where for the sake of simplicity, we
considered only the case $v_\infty^\sharp>0$ and $F_\infty>0$. The
cases where these quantities are negative can be treated similarly.

\vspace{0.2cm}
\begin{proposition}\label{proposition:2}
Let $E_\infty>0$, $v^\sharp_\infty>0$ and $F_\infty>0$. Let also
$h_{\rm max}=H_0(1+\eps)$ with $\eps>0$ and assume that \eqref{condF}
is satisfied. \\
{\bf i.} Up to translations, there can be at most three solitary waves of maximal height
$h_{\rm max}$ for \eqref{GNvort1dfin} and corresponding speeds
$-c_{-,2}<-c_{-,1}<0<c_+$ given by the roots of \eqref{eqc}.\\
{\bf ii.} The solitary wave of speed $c_+$ exists if and only if the following conditions hold
\begin{equation}\label{comp:p2}
{c_+}H_0^2-v_\infty^\sharp h_{\rm max}^2>0 \quad\mbox{ and }\quad
\forall h\in [H_0,h_{\rm max}),\qquad \varphi_{c_+}(h)>0,
\end{equation}
where $\varphi_{c}$ is as defined in \eqref{defphi}.\\
{\bf iii.} The solitary wave of speed $c_{-,j}$ ($j=1,2$) exists if
and only $\varphi_{-c_{-,j}}>0$ on $[H_0,h_{\rm max})$.
\end{proposition}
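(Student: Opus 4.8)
The plan is to follow the strategy of the proof of Proposition~\ref{propSolWav}, whose Step~1 already yields the first integral \eqref{bdx3} for arbitrary $F_\infty$. The genuinely new features are that the admissible speeds are now governed by the cubic \eqref{eqc} rather than by a quadratic in $c^2$, and that an extra pointwise constraint --- positivity of $\varphi_c$ --- is needed to prevent the profile from being trapped below $h_{\rm max}$.

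\textbf{Admissible speeds (part (i)).} A $C^2$ profile attains its maximum $h_{\rm max}$ at some point $x_{\rm max}$ with $h_x(x_{\rm max})=0$; evaluating \eqref{bdx3} there forces the bracket on its right-hand side to vanish, which after multiplication by $c$ reads $c^3+pc+q=0$ with $p,q$ as in \eqref{eqc}. Under the smallness condition \eqref{condF} this cubic has positive discriminant, hence three distinct real roots; since the coefficient of $X^2$ vanishes they sum to zero, while $q<0$ (as $F_\infty>0$) makes their product $-q>0$, so one root is positive ($c_+$) and two are negative ($-c_{-,2}<-c_{-,1}<0$). For each of these three speeds the profile of a solitary wave of maximal height $h_{\rm max}$ is, up to translation, the unique solution of \eqref{ODEbase} with crest data $h(x_{\rm max})=h_{\rm max}$, $h_x(x_{\rm max})=0$: this is Cauchy--Lipschitz uniqueness when the coefficient $\psi_c(h_{\rm max}):=\tfrac13 c(cH_0^2-v_\infty^\sharp h_{\rm max}^2)$ of $h_{xx}$ is nonzero, and the degenerate case is excluded as in Step~4 of the proof of Proposition~\ref{propSolWav}. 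Hence there are at most three solitary waves of maximal height $h_{\rm max}$.

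\textbf{Sign conditions and existence (parts (ii)--(iii)).} Keep $\psi_c(h)=\tfrac13 c(cH_0^2-v_\infty^\sharp h^2)$ for the common coefficient of $h_x^2$ in \eqref{bdx3} and of $h_{xx}$ in \eqref{ODEbase}, and write $\varphi_c(h)=c^2-gh-\frac{h(h+2H_0)}{H_0^3}E_\infty-\frac{h^2(h+H_0)^2}{H_0^5}\frac{F_\infty}{c}$ for the factor on the right of \eqref{bdx3} (this is, up to a positive rescaling, the function $\varphi_c$ of \eqref{defphi}), so that \eqref{bdx3} reads $3\psi_c(h)h_x^2=(h-H_0)^2\varphi_c(h)$ and $\varphi_c(h_{\rm max})=0$ precisely because $c$ solves \eqref{eqc}. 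If $c=-c_{-,j}<0$ with $v_\infty^\sharp>0$, both factors of $\psi_c$ are negative, so $\psi_{-c_{-,j}}>0$ on $[H_0,h_{\rm max}]$ with no further condition; if $c=c_+>0$, then $h\mapsto c_+H_0^2-v_\infty^\sharp h^2$ is decreasing and $\psi_{c_+}>0$ on $[H_0,h_{\rm max}]$ holds if and only if $\psi_{c_+}(h_{\rm max})>0$, i.e. the first inequality in \eqref{comp:p2}. For necessity, the profile of a solitary wave connects $H_0$ at $\pm\infty$ to $h_{\rm max}$ at the crest and hence takes every value in $[H_0,h_{\rm max}]$; the scalar relation $h_x^2=(h-H_0)^2\varphi_c(h)/(3\psi_c(h))$ then forces its right-hand side to be $\ge0$ on $[H_0,h_{\rm max}]$, whence $\psi_c>0$ there (for $c_+$ this is exactly the first condition in \eqref{comp:p2}, since $\psi_{c_+}(h_{\rm max})=0$ would make the ODE singular at the crest) and $\varphi_c>0$ on $[H_0,h_{\rm max})$ --- indeed an interior zero $h_\ast$ of $\varphi_c$ would be a turning point or an equilibrium of that scalar ODE, past which the profile could never climb to reach $h_{\rm max}$. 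Conversely, if these conditions hold, the right-hand side of the scalar equation is continuous, positive on $(H_0,h_{\rm max})$, vanishes to order two at $H_0$ and to order one at $h_{\rm max}$; the classical phase-plane construction then produces a global, even, single-crested profile rising from $H_0$ at $-\infty$ to $h_{\rm max}$ and decreasing back to $H_0$, staying strictly above $H_0$ (if $h(x_0)=H_0$ then $h_x(x_0)=0$ by \eqref{bdx3} and uniqueness forces $h\equiv H_0$, absurd). Reconstructing $(\underline{\zeta},\underline{\ovv},\underline{v}^\sharp,\underline{E},\underline{F})$ from $h$ via Definition~\ref{defSolWav} yields the solitary wave, proving~(ii) and~(iii).

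\textbf{Main difficulty.} The delicate step is the necessity direction: one must show that $\varphi_c$ has to be positive on the entire half-open interval $[H_0,h_{\rm max})$, not merely at its endpoints --- i.e. that an interior sign change of $\varphi_c$ would trap the profile below $h_{\rm max}$. This is the phenomenon absent when $F_\infty=0$ (Proposition~\ref{propSolWav}), where the corresponding auxiliary function was automatically monotone. The root-counting for the cubic \eqref{eqc}, and in particular checking that \eqref{condF} forces its discriminant to be positive, is the other point requiring some care, but it is routine.
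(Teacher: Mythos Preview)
Your proposal is correct and follows essentially the same approach as the paper: the paper does not give a formal proof of this proposition, but the discussion immediately preceding its statement contains precisely the ingredients you use --- the first integral \eqref{bdx3} from Step~1 of Proposition~\ref{propSolWav}, the root analysis of the cubic \eqref{eqc} via the discriminant condition \eqref{condF} and Vieta's relations, and the observation that \eqref{bdx3} forces $\varphi_c>0$ on $[H_0,h_{\rm max})$. Your write-up is in fact more complete than the paper's, which leaves the sufficiency direction and the exclusion of interior zeros of $\varphi_c$ implicit.
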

\begin{rem}
Though there could be in principle a third solitary wave arising in
the case $F_\infty>0$, we could not exhibit any configuration where
this is the case because the condition $\varphi_{-c_{-,1}}>0$ on
$[H_0,h_{\rm max})$ is never fulfilled. In practice, there are as in
the case $F_\infty=0$ one left going
and one right going solitary wave, of different shape and of
respective speed $-c_{-,2}$ and $c_+$. The profiles of the corresponding solitary waves are shown on Figure \ref{influence:F}. Note that smaller values of $F_\infty$ have to be taken to obtain the profiles of the left-going waves (Figure \ref{influence:F} (b)) in order to fulfill the condition $\varphi_{-c_{-,2}}>0$ on $[H_0,h_{\rm max})$. Note also that additional solitary waves profiles can be observed, for both the constant vorticity model \eqref{gn:const} and the general model \eqref{gn:vort:dim}, in \S\ref{validation}.
\end{rem}

\begin{figure}
\centering
\includegraphics[width=1\textwidth]{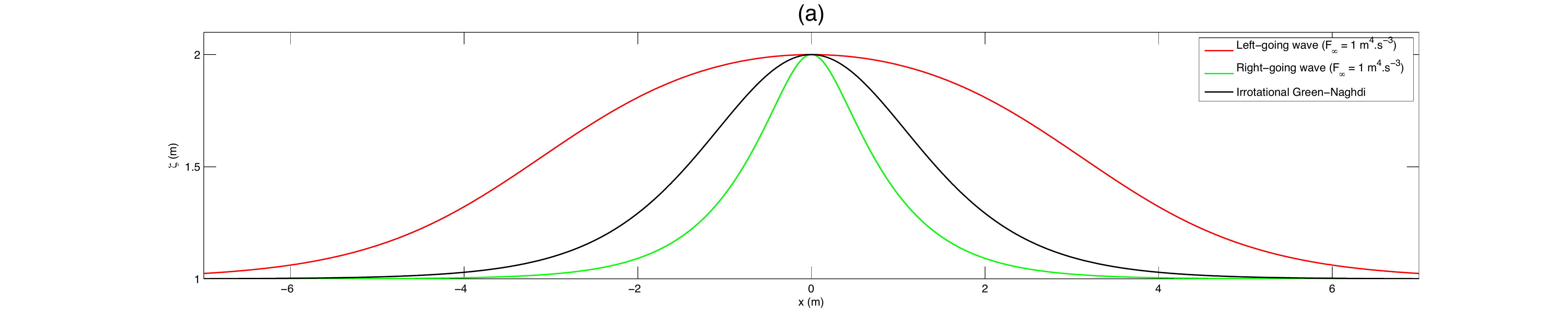}
\includegraphics[width=1\textwidth]{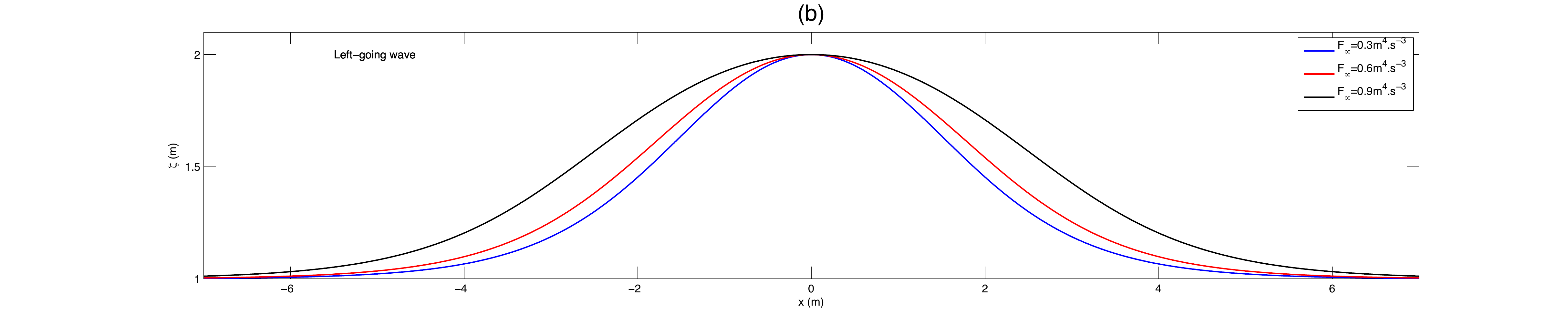}
\includegraphics[width=1\textwidth]{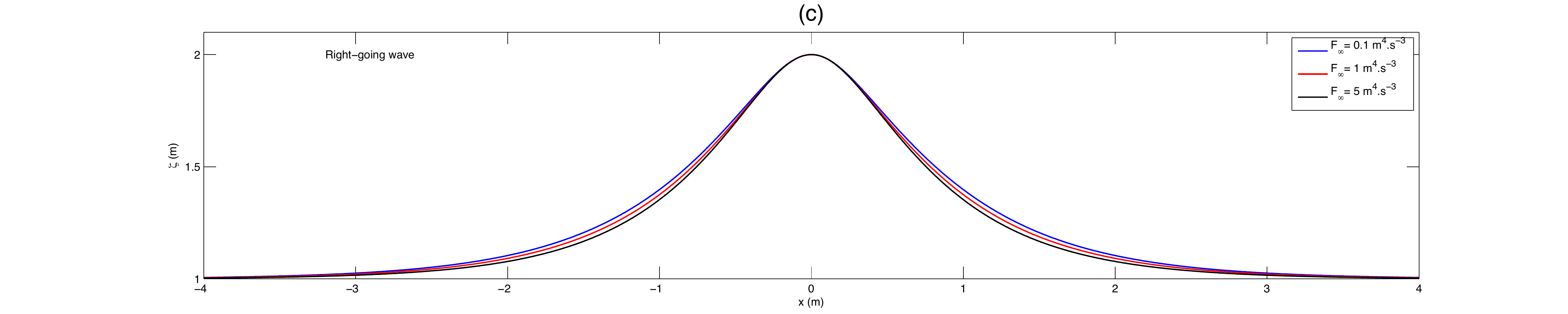}
\caption{Influence of $F_\infty$ on the solitary waves profiles. (a) Shape of the left and right going solitary waves, and comparison with the usual Green-Naghdi solitary
  wave, for $H_0=1\,m$, $h_{\rm max}=2\,m$, $E_\infty=1\,m^3.s^{-2}$, $v_\infty=1\,m.s^{-1}$, $F_\infty=1\,m^4.s^{-3}$. (b) Influence of the value of $F_\infty$ on the left-going wave's profile. (c) Influence of the value of $F_\infty$ on the right-going wave's profile.}
\label{influence:F}
\end{figure}

\subsection{Existence of peaked solitary waves}\label{ode:analysis:peak}

As seen in the previous section, when $E_\infty>0$,
$v_\infty^\sharp>0$ and $F_\infty=0$, solitary waves have speed $\pm
\underline{c}$, with
$$
\underline{c}=\underline{c}(h_{\rm max})=\Big(gh_{\rm max}+\frac{h_{\rm max}(h_{\rm max}+2H_0)}{H_0^3}E_\infty\Big)^{1/2}
$$
and the maximal amplitude $h_{\rm max}$ of the right-going solitary
wave cannot exceed a critical value $h_{\rm crit}$ corresponding to the
only positive root of the equation
$$
\underline{c}(h_{\rm crit})H_0^2-v_\infty^\sharp h_{\rm crit}^2=0.
$$
Figure \ref{figcrit}
suggests that the shape of the solitary waves tend to form an angle at
their crest as their amplitude become close to the maximal
amplitude. A byproduct of the analysis of the previous section is that there cannot exist any
{\it smooth} solitary wave of maximal amplitude $h_{\rm
  crit}$. However, we show here that it is possible to obtain a {\it
  peaked} solitary wave of maximal amplitude in the following sense:
  
  \vspace{0.2cm}
\begin{definition}\label{defSolWavpeak}
A \emph{peaked} solitary wave of speed $c$, centered at $x_0\in
\R$, for \eqref{GNvort1dfin}, is a mapping
$$
(t,x)\in \R^2\mapsto (\underline{\zeta},\underline{\ovv}, \underline{v}^\sharp,
\underline{E}, \underline{F})(x-ct-x_0)
$$
 such that there exists $h\in C(\R)$, with $h_{\vert_{\R^+}}\in
 C^2([0,\infty))$, $h_{\vert_{\R^-}}\in
 C^2((-\infty,0]) $, and
$E_\infty>0$, $v^\sharp_\infty\in \R$ and $F_\infty\in \R$ such that
\begin{align}
&\underline{\zeta}=h-H_0,\quad \underline{\ovv}=c\frac{h-H_0}{h},\quad
\underline{v}^\sharp=\frac{h}{H_0}v^\sharp_\infty,\\
&\underline{E}=\frac{h^3}{H_0^3}E_\infty+2\frac{F_\infty}{c}\frac{(h^2-H_0^2)h^3}{H_0^5},\quad
\underline{F}=\frac{h^4}{H_0^4} F_\infty,
\end{align}
and such that $h$ solves the ODE \eqref{ODEbase}
on $\R^+$ and $\R^-$, and satisfies $\lim_{\pm \infty} h=H_0$. 
\end{definition}

\vspace{0.2cm}
The proposition below proves the existence of peaked solitary waves in
the case $F_\infty=0$. Such a property could also be established for
$F_\infty\neq 0$ (according to Proposition \ref{proposition:2} there is also a critical
maximal wave in some cases when $F_\infty\neq 0$), but the proof would
be more technical and since no new phenomena arises in this case, we
decide not to treat it. Some examples of peaked solitary waves are plotted on Figure \ref{fig:peakon}, on which we highlight the influence of the value of $v_\infty^\sharp>0$ on the critical amplitude $h_{\rm crit}$. 

\vspace{0.2cm}
\begin{proposition}\label{prop:peaked}
Let $E_\infty>0$, $v_\infty^\sharp>0$ and $F_\infty=0$. For all $x_0\in \R$, there exists a unique peaked solitary wave
centered at $x_0$ of
critical maximal amplitude $h_{\rm crit}$ and speed
$\underline{c}=\underline{c}(h_{\rm crit})$. It is even, decaying on
both sides of the crest, and its angle
at the crest is $2\theta$, with
$$
\tan \theta=\Big(\frac{3}{2}\frac{E_\infty}{\uc H_0^3v_\infty^\sharp}
(h_{\rm crit}-H_0)^2\big(1+\frac{(v_\infty^\sharp)^2}{E_\infty}\frac{h_{\rm
      crit}^2}{H_0}\big)\Big)^{-1/2}.
$$
\end{proposition}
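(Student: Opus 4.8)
The key is that at the critical speed $\uc=\uc(h_{\rm crit})$ the coefficient $\frac{c}{3}(cH_0^2-v^\sharp_\infty h^2)$ in front of $h_x^2$ in \eqref{bdx3} vanishes precisely at $h=h_{\rm crit}$, so near the crest \eqref{bdx3} degenerates. I would start from \eqref{bdx3} with $F_\infty=0$ and $c=\uc$, writing it as
$$
\frac{\uc}{3}(\uc H_0^2-v^\sharp_\infty h^2)\,h_x^2=(h-H_0)^2\Big(\uc^2-gh-\frac{h(h+2H_0)}{H_0^3}E_\infty\Big),
$$
and factor the left-hand coefficient using $\uc H_0^2-v^\sharp_\infty h_{\rm crit}^2=0$: since $\uc H_0^2-v^\sharp_\infty h^2=v^\sharp_\infty(h_{\rm crit}^2-h^2)=v^\sharp_\infty(h_{\rm crit}-h)(h_{\rm crit}+h)$, the left side has a simple zero at $h=h_{\rm crit}$. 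On the right-hand side, the bracket $\uc^2-gh-\frac{h(h+2H_0)}{H_0^3}E_\infty$ also vanishes at $h=h_{\rm crit}$ by the very definition of $\uc$ (evaluating the defining formula for $\uc(h_{\rm crit})$ shows $\uc^2=gh_{\rm crit}+\frac{h_{\rm crit}(h_{\rm crit}+2H_0)}{H_0^3}E_\infty$), and it is a strictly decreasing affine-plus-quadratic function of $h$, hence factors as $(h_{\rm crit}-h)\,\Psi(h)$ with $\Psi(h)=g+\frac{(h+h_{\rm crit}+2H_0)}{H_0^3}E_\infty>0$. Cancelling the common factor $(h_{\rm crit}-h)$ gives, for $H_0\le h<h_{\rm crit}$,
$$
h_x^2=\frac{3(h-H_0)^2\,\Psi(h)}{\uc\,v^\sharp_\infty(h_{\rm crit}+h)}=:\big(G(h)\big)^2,\qquad G(h)>0,
$$
so $h_x=\pm G(h)$ with $G$ continuous up to $h=h_{\rm crit}$ and $G(h_{\rm crit})>0$. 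This is the crucial point: the degeneracy of the ODE is exactly what allows $h$ to reach $h_{\rm crit}$ in finite $x$ with a nonzero (finite) one-sided slope, producing a genuine peak rather than a smooth maximum.

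Next I would construct the solution. Solve $h_x=-G(h)$ on $\{x>x_0\}$ (taking $x_0=0$ after translation) with $h(0^+)=h_{\rm crit}$; since $G(h_{\rm crit})>0$, $h$ is strictly decreasing near $0^+$, and standard ODE theory (the separable equation $dx=-dh/G(h)$ with $1/G$ integrable near $h_{\rm crit}$ and near $H_0$ — the latter because $G(h)\sim C(h-H_0)$, giving exponential decay to $H_0$ at $+\infty$, not finite-time arrival) yields a unique global decreasing solution on $[0,\infty)$ with $h\to H_0$ as $x\to\infty$; it is $C^2$ on $(0,\infty)$ because $G^2$ is smooth and positive there, and $C^2$ up to $0^+$ as well since $G$ is smooth near $h_{\rm crit}$. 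Reflecting, $h(x):=h(-x)$ on $x<0$ gives the even extension; the resulting $h\in C(\R)$ with the one-sided $C^2$ regularity of Definition \ref{defSolWavpeak}, solves \eqref{ODEbase} on $\R^\pm$ (differentiate $h_x^2=G(h)^2$ and divide by $2h_x\ne0$ to recover \eqref{ODEbase}, the passage being reversible exactly as in Step 1 of Proposition \ref{propSolWav}), and the associated $(\underline\zeta,\underline\ovv,\underline v^\sharp,\underline E,\underline F)$ are defined by the formulas in the definition. Uniqueness follows from the uniqueness of the decreasing solution of the separable equation together with the requirement $h(0)=h_{\rm crit}$ (the only value where a peak can form, by the factorization) and evenness.

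Finally, the angle. The crest angle $2\theta$ is determined by the one-sided slopes $h_x(0^\pm)=\mp G(h_{\rm crit})$, so $\tan\theta=1/G(h_{\rm crit})$ after identifying $\theta$ with the half-opening measured from the vertical — i.e. $\theta$ is the angle between the (upward) crest bisector and either branch, whose tangent is the reciprocal of the slope magnitude. Thus
$$
\tan\theta=G(h_{\rm crit})^{-1}=\Big(\frac{3(h_{\rm crit}-H_0)^2\,\Psi(h_{\rm crit})}{2\uc\,v^\sharp_\infty h_{\rm crit}}\Big)^{-1/2},
$$
and it remains to check that $\frac{\Psi(h_{\rm crit})}{h_{\rm crit}}$ rearranges into $\frac{E_\infty}{H_0^3}\big(1+\frac{(v^\sharp_\infty)^2}{E_\infty}\frac{h_{\rm crit}^2}{H_0}\big)$; this is where the defining relation $\uc H_0^2=v^\sharp_\infty h_{\rm crit}^2$ (equivalently $\uc^2 H_0^4=(v^\sharp_\infty)^2 h_{\rm crit}^4$) and the formula $\uc^2=gh_{\rm crit}+\frac{h_{\rm crit}(h_{\rm crit}+2H_0)}{H_0^3}E_\infty$ get combined: from $\Psi(h_{\rm crit})=g+\frac{2h_{\rm crit}+2H_0}{H_0^3}E_\infty$ one writes $gh_{\rm crit}=\uc^2-\frac{h_{\rm crit}(h_{\rm crit}+2H_0)}{H_0^3}E_\infty$ and $\uc^2=(v^\sharp_\infty)^2 h_{\rm crit}^4/H_0^4$, and after substitution the terms in $g$ cancel and the $E_\infty$ and $(v^\sharp_\infty)^2$ contributions assemble into the stated form; matching the $\uc$ in the denominator then produces exactly the displayed expression for $\tan\theta$. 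The evenness and the decay on both sides were already obtained in the construction step. I expect the only real obstacle to be bookkeeping: making sure the two relations defining $h_{\rm crit}$ and $\uc$ are used consistently so that the algebraic identity for $\tan\theta$ comes out in the exact printed form, and confirming the integrability of $1/G$ at both endpoints so that the peak occurs at finite $x$ while the decay to $H_0$ occurs only at infinity.
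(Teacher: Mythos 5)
Your proposal is correct and follows essentially the same route as the paper: factor the common zero at $h=h_{\rm crit}$ out of both sides of \eqref{bdx3}, cancel it to obtain a non-degenerate first-order ODE $h_x=-G(h)$ solved by Cauchy--Lipschitz on each half-line (with decay to $H_0$ at infinity and reflection for $x<0$), and evaluate the reduced ODE at $h=h_{\rm crit}$ to read off the one-sided slope and hence the crest angle. Your factor $\Psi(h)=g+(h+h_{\rm crit}+2H_0)E_\infty/H_0^3$ is identical to the paper's factor $\frac{E_\infty}{H_0^3}\bigl(h+\frac{(v_\infty^\sharp)^2}{E_\infty}\frac{h_{\rm crit}^3}{H_0}\bigr)$ via the relations defining $\uc$ and $h_{\rm crit}$, so the final algebra matches the printed formula for $\tan\theta$.
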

\begin{proof}
We focus here on the case $x\geq 0$; the case of negative values of
$x$ can be treated similarly. Without loss of generality, we also
assume that $x_0=0$. By definition of $h_{\rm crit}$, one can write, for all $h$,
$$
\underline{c}H_0^2-v_\infty^\sharp
h^2=-{v_\infty^\sharp}(h-h_{\rm
  crit})(h+h_{\rm crit})
$$
and
$$
\underline{c}^2-gh-\frac{h(h+2H_0)}{H_0^3}E_\infty=-\frac{E_\infty}{H_0^3}(h-h_{\rm
  crit})(h+\frac{(v_\infty^\sharp)^2}{E_\infty}\frac{h_{\rm crit}^3}{H_0}).
$$
On can therefore rewrite \eqref{bdx3} under the form
$$
\frac{\uc}{3}v_\infty^\sharp(h-h_{\rm crit})(h+h_{\rm crit})h_x^2=\frac{E_\infty}{H_0^3} (h-H_0)^2 (h-h_{\rm
  crit})(h+\frac{(v_\infty^\sharp)^2}{E_\infty}\frac{h_{\rm crit}^3}{H_0})
$$
or equivalently
$$
\frac{\uc}{3}v_\infty^\sharp(h+h_{\rm crit})h_x^2=\frac{E_\infty}{H_0^3} (h-H_0)^2(h+\frac{(v_\infty^\sharp)^2}{E_\infty}\frac{h_{\rm crit}^3}{H_0}).
$$
Since $h_{\rm crit}$ is by definition the maximal value of $h$ and
since $h$ cannot reach the value $H_0$ (otherwise it would be
identically equal to $H_0$), this ODE is equivalent to
$$
h_x=-\Big(3\frac{E_\infty}{\uc H_0^3v_\infty^\sharp}
(h-H_0)^2\frac{h+\frac{(v_\infty^\sharp)^2}{E_\infty}\frac{h_{\rm
      crit}^3}{H_0}}{h+h_{\rm crit}}\Big)^{1/2}.
$$
Existence of a local solution is therefore given by the standard
Cauchy-Lipschitz theorem; the fact that the solution is global and
tends to $H_0$ at infinity is then easily established as in Step 3 of
the proof of Proposition \ref{propSolWav}. When evaluated at the origin (i.e
replacing $h$ by $h_{\rm crit}$) in the above formula
\end{proof}

\begin{figure}
\centering
\includegraphics[width=1\textwidth]{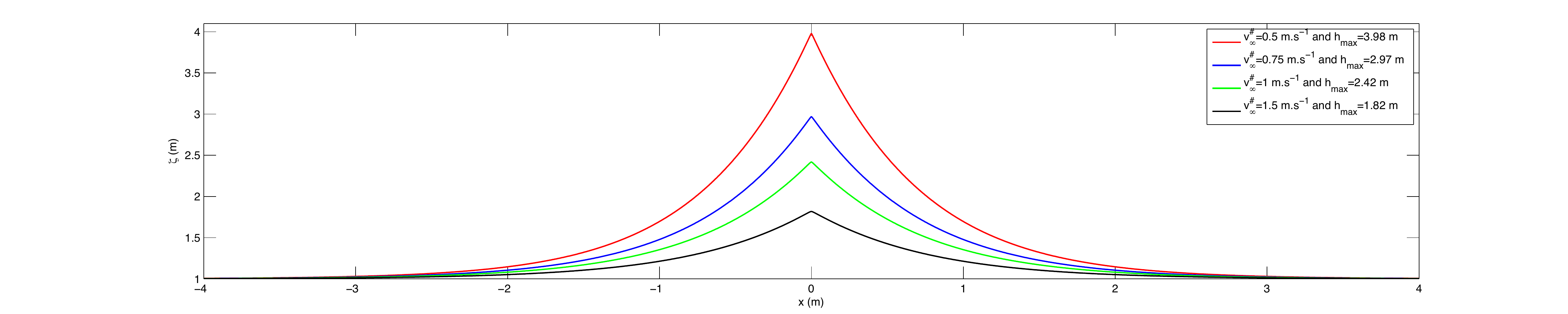}
\caption{Influence of $v^\sharp_\infty>0$ on the peaked solitary waves profiles for $H_0=1\,m$, $E_\infty=1\,m^3.s^{-2}$ and $F_\infty=0$. }
\label{fig:peakon}
\end{figure}

\section{Numerical method}\label{sect:num:scheme}

We introduce now a simple numerical method to approximate the solutions of system \eqref{gn:vort:dim} and illustrate the propagation of some of the various wave profiles exhibited in \S \ref{section:sol_wave}. This approach is inspired by some of our previous works \cite{Bonneton20111479, MR2811693, lannes_marche:2014}, and we give some details about the new ingredients introduced to account for the specificities of \eqref{gn:vort:dim}. 

\subsection{Splitting}\label{sect:splitting}
To build our numerical method, denoting $E=h^2\tilde{E}$ et $F=h^3\tilde{F}$ and using the mass conservation equation, we rewrite system (\ref{gn:vort:dim}) in the equivalent form:

\be
\begin{cases}
&h_t + (h\bar{v})_x = 0,\\
&(1+\mathbb{T})\Bigr((h\bv)_t + (h\bv^2)_x\Bigr) + gh\zeta_x+ (h^2\tE)_x + h\cQ_1(\bv) + h\cC(\bv,\dv) =0,\\
&v_t^\sharp + (\bv\dv)_x =0,\label{gn:vorti:dim:2}\\
&\tE_t + (\bv\tE)_x+ 3\tF h_x + h\tF_x  = 0,\\
&\tF_t + (\bv\tF)_x = 0.
\end{cases}
\ee
Then, in the spirit of \cite{Bonneton20111479, MR2811693, lannes_marche:2014}, we decompose the solution operator $S(\cdot)$ associated to the formulation \eqref{gn:vorti:dim:2}, at each time step by the second order splitting scheme
\begin{equation}\label{S0}
	S(\delta_t)=S_1(\delta_t/2)S_2(\delta_t)S_1(\delta_t/2),
\end{equation}
where $S_1$ and $S_2$ are respectively associated to the transport part and dispersive perturbation of
the Green-Naghdi equations (\ref{gn:vorti:dim:2}). More precisely:
\begin{itemize}
\item  $S_1(t)$ is the solution operator
associated to the conservative \textit{propagation} step
\be
	\label{S1:gen}
	\left\lbrace
	\begin{array}{lcl}
	\vspace{0.2cm}
	\dsp h_t+(h\bv)_x&=&0,\\
	
	\vspace{0.2cm}

	\dsp (h\bv)_t +  (h\bv^2)_x + gh\zeta_x + (h^2\tE)_x &=&0,\\
	\vspace{0.2cm}
	v_t^\sharp  +  (\bv\dv)_x &=&0,\\
	
	\vspace{0.2cm}
	\tE_t + (\bv \tE)_x &=& 0,\\
         \tF_t + (\bv \tF)_x &=& 0.
	\end{array}\right.
\ee
\item $S_2(t)$ is the solution operator associated to the \textit{dispersive correction},
\be
	\label{S2:gen}
	\left\lbrace
	\begin{array}{lcl}
	\vspace{0.2cm}
	\dsp h_t =0,\\
	
	\vspace{0.2cm}
	\dsp (h \bv)_t - gh \zeta_x -  (h^2\tE)_x + (1+ \mathbb{T})^{-1}\big[gh\zeta_x + (h^2\tE)_x +  h\cQ_1(\bv) + h\cC(\bv,\dv)\big]=0,\\
	
	\vspace{0.2cm}
	v_t^\sharp  =0,\\
	
	\vspace{0.2cm}
        \tE_t   + 3\tF h_x + h\tF_x= 0,\\
        \tF_t  = 0.
	\end{array}\right.
\ee

\end{itemize}

As detailed in the next subsection, $S_1(t)$ is discretized using a finite-volume approach, and we use a finite-difference approach for $S_2(t)$. 

\vspace{0.2cm}
\begin{rem}
The proposed splitting scheme is classically of second order accuracy in time, and it is shown in \cite{Bonneton20111479} that the corresponding semi-discretized dispersion relation approaches the exact dispersion relation of the Green-Naghdi  at order $2$ in time. Of course, other approaches can be used, like the unsplitted approaches proposed in \cite{DuranMarche:2014ab} or \cite{mario:2015,popinet} for instance.
\end{rem}
\vspace{0.2cm}

\medbreak

\noindent
We use the following notations in the following:
\begin{itemize}
\item the numerical one-dimensional domain $\Omega$ is uniformly divided into $N_x$ cells $(\cC_i)_{1 \le i \le N_x}$ such that $\cC_i = [x_{i-\demi}, x_{i+\demi}]$, where  $(x_{i+\demi})_{0 \le i \le N_x}$ are the $N_x+1$ nodes of the regular grid. We denote by $x_i$ the center of $\cC_i$,
\item  we denote by $\delta_x$ the cell size (constant in this work) and by $\delta_t$ the chosen time step (to be specified according to a relevant CFL-like condition),
\item we denote by $\bar{w_i}^n$ the averaged value of an arbitrary quantity $w$ on the $i^{th}$ cell $\cC_i$ at time $t_n = n\delta_t$.
\end{itemize}

\medbreak

\subsection{Discretization of the conservative step}\label{sect:conservative}

We focus on the discretization of system \eqref{S1:gen} which can be written in compact form as follows
\beq\label{syst:compact}
\dt \cW + \dx \eF(\cW)  = \eS(\cW,b),
\eeq
with $\cW = (h, h\bv, \dv, \tE, \tF)$ and 
\beq
\eF(\cW) = 
(
 h\bv,\,
 h\bv^2 + p(h,\tE),\,
\bv\dv,\,
 \bv \tE,\,
 \bv \tF
 ),\;\;\;
\eS(\cW,b) = 
(
 0,\,
-gh b_x,\,
0,\,
 0,\,
0
),
    \eeq
    
 \noindent
with $p(h,\tE)=\dsp\frac{g}{2}h^2 + h^2\tE$. Neglecting the bottom variations, the study of the associated algebra shows that the system is hyperbolic with the following eigenvalues:
\begin{align}
\lambda_1 = \bv -\sqrt{gh + 3h\tE},\;\;\; \lambda_2 = \bv +\sqrt{gh + 3h\tE}, \;\;\;\lambda_3=\lambda_4=\lambda_5=\bv.
\end{align}
\quad

\subsubsection{$1^{st}$-order FV discretization of the homogeneous system}\label{FO}

We first study a first order conservative spatial discretization of the homogeneous system associated with (\ref{syst:compact}):
\be\label{vf:scheme}
\bar{\cW}_i^{n+1}-\bar{\cW}_i^{n} +   \frac{\delta t}{\delta_x}   \Bigl( \cF  \bigl(\bar{\cW}_{i}^n,  \bar{\cW}_{i+1}^n\bigr)   -  \cF \left(\bar{\cW}_{i-1}^n,\bar{\cW}_{i}^n \right) \Bigr)= 0
\ee
where $(u,v)\mapsto\cF(u,v)$ is a numerical flux function consistent with the physical flux $w\mapsto\eF(w)$. For the numerical validations shown in \S\ref{validation}, and considering the $3$ waves structure of the algebra, we have implemented a HLLC-type Riemann solver, see for instance \cite{BattenClarke:1997aa, BOUCHUT:2004}. 

\subsubsection{Discretization of the topography and high-order extension}

The discretization of the topography source term occurring in \eqref{syst:compact}  is done following the well-balanced approach for the Saint-Venant equations described in \cite{Liang2009873}, allowing to preserve the motionless steady states corresponding to 
\beq\label{steady:states}
\zeta = 0,\quad \bv=0,\quad \dv = 0,\quad E=0, \quad F=0.
\eeq
Note that one of the main properties of this approach is that whenever the initial solver satisfies some classical stability properties, it yields a simple and fast well-balanced scheme that preserves the positivity of the water height.\\
As shown in previous studies \cite{MR2811693, Bonneton20111479, DuranMarche:2014ab, lannes_marche:2014}, the use of high-order schemes is mandatory for the study of dispersive water waves, to avoid as much as possible to pollute the dispersive properties of the model with some dispersive truncation errors associated with $2^{nd}$ order schemes. Based on discrete finite-volume cell averaging $\bar{\cW}_i^n$ at time $t^n=n\delta t$ we use in this work $3^{rd}$ and $5^{th}$-order accuracy WENO reconstructions, following \cite{jiang:1996p1236}, together with the weight splitting method \cite{584782}.

\subsection{Spatial discretization of $S_2(\cdot)$, time discretization and boundary conditions}\label{sect:S2}
\label{sectdisp}

Following the approach developed in \cite{Bonneton20111479}, system (\ref{S2:gen}) is discretized using $4^{th}$ finite-differences. The resulting matrix, for the discretization of the linear operator $\mathbb{T}$ is the same as in \cite{Bonneton20111479}. As far as time discretization is concerned, we choose to use explicit methods. The systems corresponding to $S_1$ and $S_2$ are integrated in time using third or fourth-order SSP-\textit{Runge-Kutta} scheme \cite{gottliedtadmor}. For the sake of simplicity, we only use periodic and Neumann boundary conditions here, adapting the ghost-cells methods detailed in \cite{Bonneton20111479}.
\vspace{0.3cm}
\begin{rem}
The whole numerical strategy can be straightforwardly applied to the \textit{constant vorticity} model \eqref{gn:const}, the \textit{medium amplitude} equations \eqref{gn:vort:dimsimpl} and the reduced model with $F=0$ \eqref{modeleFzero}, adapting the approximate Riemann solvers to the corresponding hyperbolic part. 

\end{rem}

\section{Numerical validation}\label{validation}

In this section, we use the analysis of solitary waves performed in Section \ref{section:sol_wave} to validate our numerical scheme. Various kinds of smooth solitary waves exhibited in Section \ref{section:sol_wave} are numerically observed in \S \ref{sect:SWP} and used to evaluate the convergence rate. As shown in \S \ref{sect:PSW}, our code is accurate enough to capture also the extremal peaked solitary waves exhibited in \S \ref{ode:analysis:peak}. We then treat in \S \ref{sect:vortshoal} an example with a non trivial topography which allows us to show that vorticity may have a considerable influence on the shoaling phase. The specified convergence rates are obtained using a discrete $L^2$-error.

\subsection{Solitary waves propagation}\label{sect:SWP}

In the next test cases of this subsection, we consider $H_0=1$ and we compute the propagation of several solitary waves in a computational domain of $200\,m$ long. We consider several set of values for the wave amplitude $\eps H_0$ and the triplet $(E_\infty, \dv_\infty, F_\infty$), allowing to cover the various configurations detailed in \S\ref{section:sol_wave}. Unless stated otherwise, we use WENO3 reconstructions, a SSP-RK3 scheme and we set the CFL number to $0.8$.

\subsubsection{Constant vorticity model}

We consider here the Green-Naghdi model \eqref{gn:const} with constant vorticity. We consequently have
$$
\curl \bU=(0,\omega,0)^T\quad \mbox{ with }\quad \omega(t,x,z)=\omega_0=\mbox{cst}.
$$ 

\begin{figure}
\centering
\includegraphics[width=0.45\textwidth]{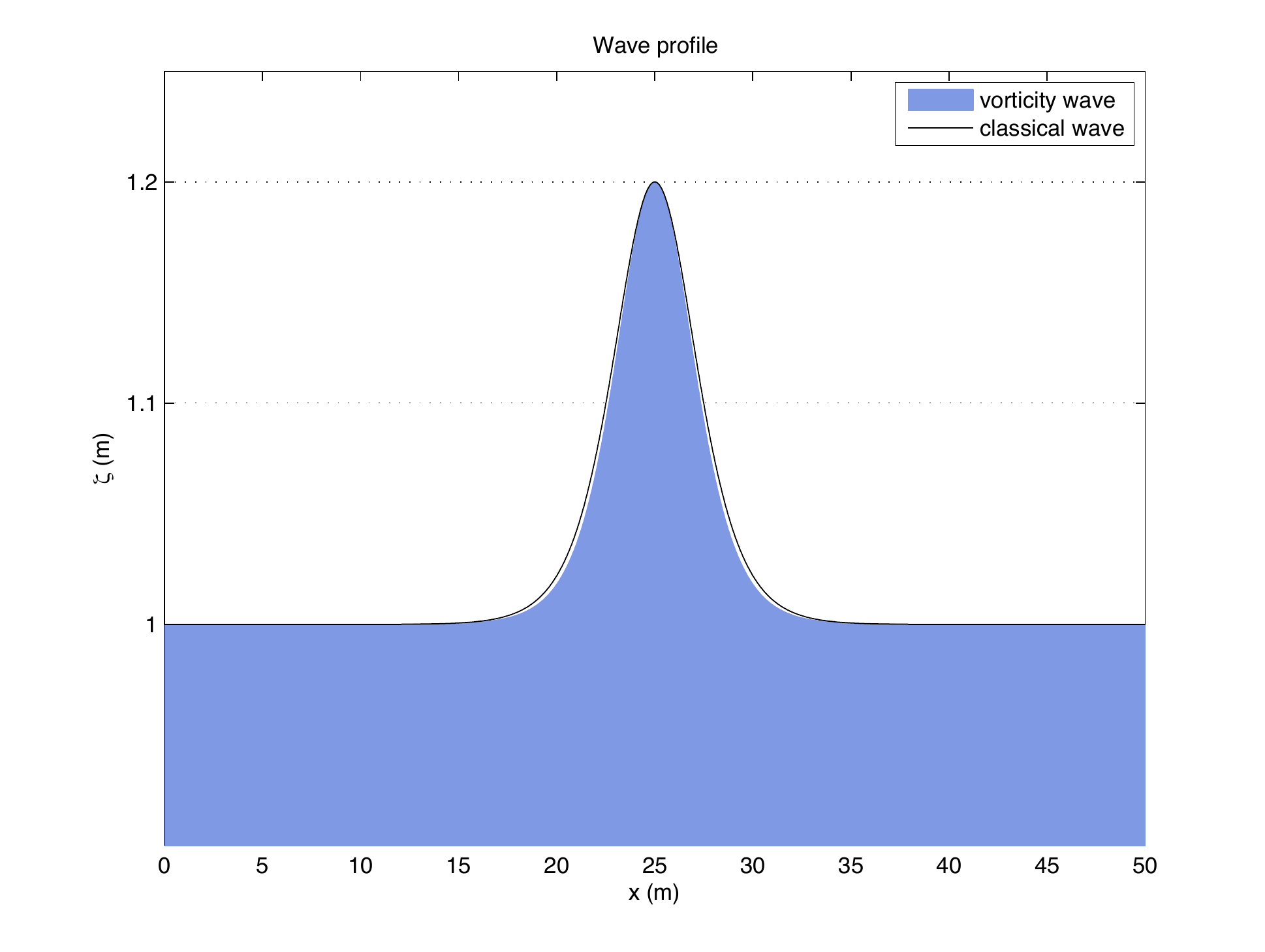}
\includegraphics[width=0.45\textwidth]{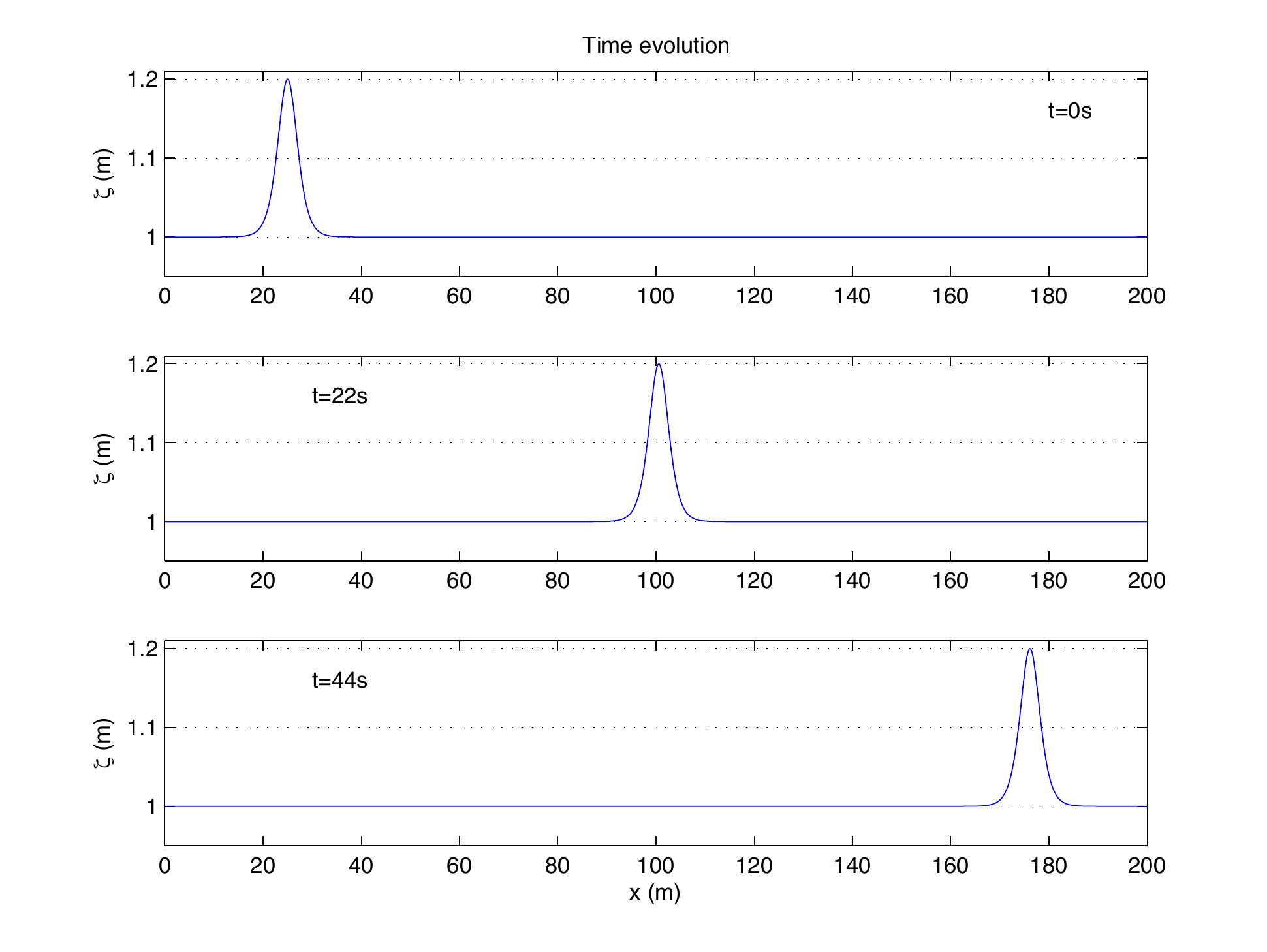}
\caption{Propagation of a right-going solitary wave solution for the constant vorticity model - $\eps=0.2$ and $\omega_0=0.3\,s^{-1}$: initial profile (on the left) and snapshots at different locations along the channel (on the right). The solitary wave solution of the classical Green-Naghdi equations is plotted in black solid line at $t=0\,s$.}\label{test1}
\end{figure}

 The analysis of solitary waves solutions is provided in Proposition \ref{propSolWav}. Indeed, the Green-Naghdi model with constant vorticity can be obtained from the general equations \eqref{gn:vort:dim}, setting $E=\frac{1}{12} h^3 \omega_0^2$, $\dv=h\omega_0$, $F=0$, and consequently neglecting the equations on $\tE$ and $\dv$ which are equivalent to the continuity equation. Solitary wave solutions for \eqref{gn:const} belong to the situation depicted in Remark \ref{remSolWav3}. In the present test case, we set $\omega_0=0.3\,s^{-1}$ and we study the propagation of a solitary wave of relative amplitude $\eps=0.2$, initially centred at $x_0=25\,m$. The initial water height $h^0(x)=h(0,x)$ is computed as a solution of equation \eqref{bdx3} and the corresponding velocity is initialized as
$$
\bv^0(x) = \uc(1-\frac{H_0}{h^0(x)})\quad\mbox{with}\quad
$$
where $\uc$ is obtained from \eqref{eq:speed_sol1}:
\beq\label{eq:speed_sol_const}
\uc^2=gh_{\rm max}+\frac{ h_{\rm max} (h_{\rm max}+2H_0)}{12}\omega_0^2.
\eeq
\begin{rem}
From Proposition \ref{propSolWav}, we observe that with the choice $\omega_0>0$, we have existence of right-going solitary waves (propagating at speed $\uc>0$) for all amplitudes $a$. On the contrary, the existence of left-going waves is ruled by the additional compatibility condition 
$$
\underline{c}H_0- \omega_0 h_{\rm max}^2>0.
$$
\end{rem}
We only focus here on the right-going wave, and we show on Figure \ref{test1} the corresponding profile $h^0$, together with the usual profile for the classical Green-Naghdi equations. We also compute the propagation of this solitary wave on the time interval $]0,T]$, with $T=50\,s$ and show the corresponding profiles at several locations along the channel. For these pictures, we set $\delta x=0.125\,m$.
Performing a numerical convergence analysis, we compute the $L^2$ errors at $T=3\,s$ for $\zeta$ and $h\bv$ on a sequence of refined meshes and we observe mean orders of convergence, obtained with linear regressions, of $3.26$ and $3.3$ respectively for $\zeta$ and $h\bv$. \\

\subsubsection{General vorticity model - Case $E_\infty>0$, $v_\infty^\sharp>0$, $F_\infty=0$}

In this second case, we consider the propagation of a solitary wave for non-constant vorticity in the case  $F_\infty=0$. The third order term $F$ remains uniformly equal to $0$ during the propagation, and we can therefore consider the reduced model \eqref{modeleFzero}. The existence of solitary waves solutions is still ruled by Proposition \ref{propSolWav}. There are $2$ solitary waves of opposite velocities given by \eqref{eq:speed_sol1}, but with different profiles. We choose here $E_\infty=0.2\, m^3.s^{-2}$, $v_\infty^\sharp=1.6\, m.s^{-1}$ and again, we set successively $\eps=0.2$ and $\eps=0.3$. Such choices ensure the existence of both left and right-going waves, leading to the profiles shown on the left of Figure \ref{test18} and Figure \ref{test19} respectively for the right-going and the left-going waves in the case $\eps=0.3$. The right-going wave is initially centred at $x=20\,m$ and the left-going wave at $x=170\,m$. Again, we also plot the classical solitary wave profile of same amplitude for comparison purpose. The time evolutions of both right and left going waves are shown on the right of Figure \ref{test18} and Figure \ref{test19}. For these computations, we have set $\delta x=0.07\,m$. We observe the preservation of the initial profiles, together with a very low numerical dissipation for the considered time interval.\\
Again, we perform a numerical convergence analysis on a sequence of refined meshes for both cases. In the case $\eps=0.2$, we observe some convergence rates of $2.9$ for $h\bv$ and $2.88$ for $\zeta, \tE$ and $\dv$. Similar results are obtained in the case $\eps=0.3$, with slightly decreased rates. We also obtain a very similar behavior for the left going wave


\begin{figure}
\centering
\centering
\includegraphics[width=0.45\textwidth]{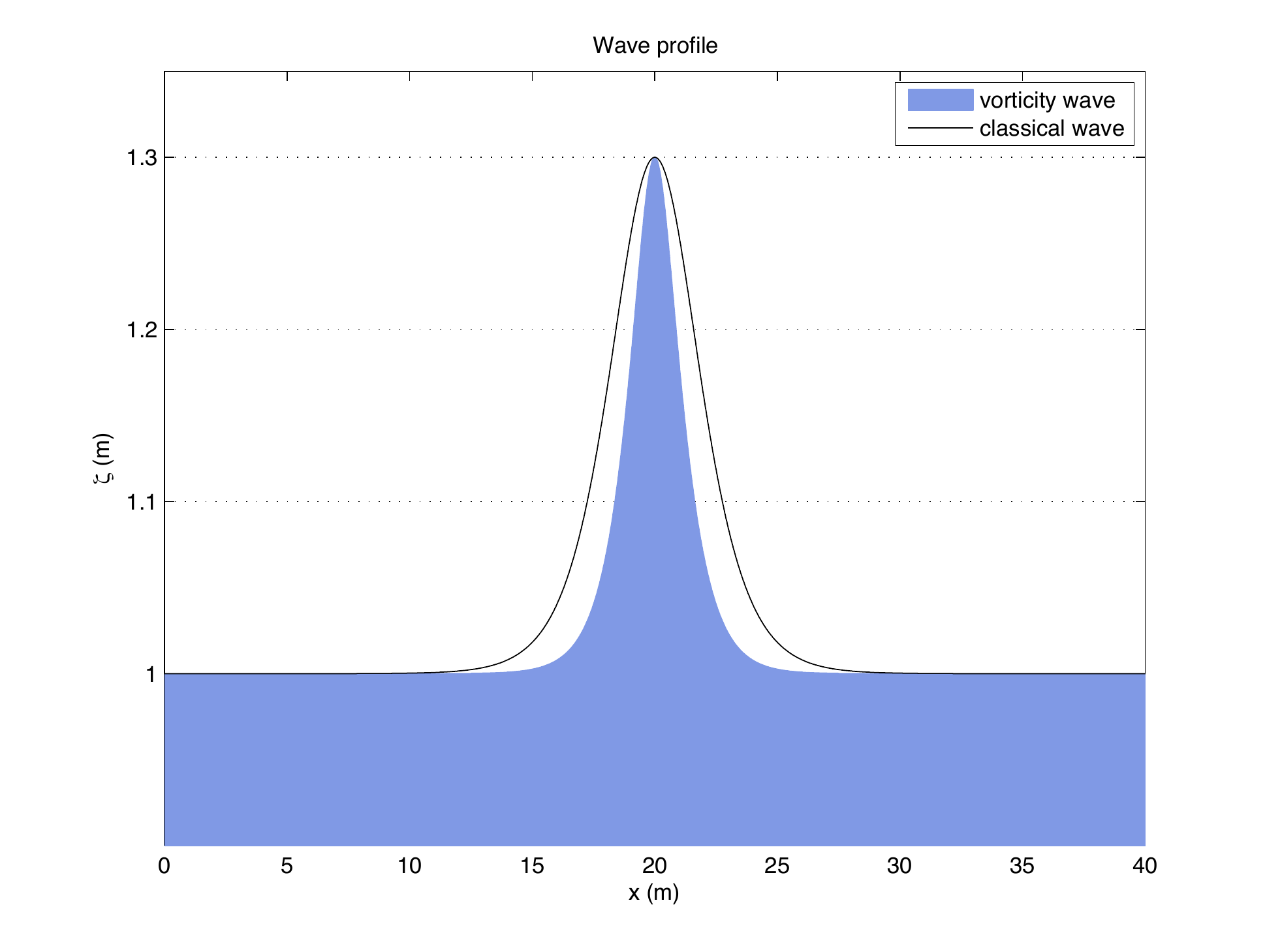}
\includegraphics[width=0.45\textwidth]{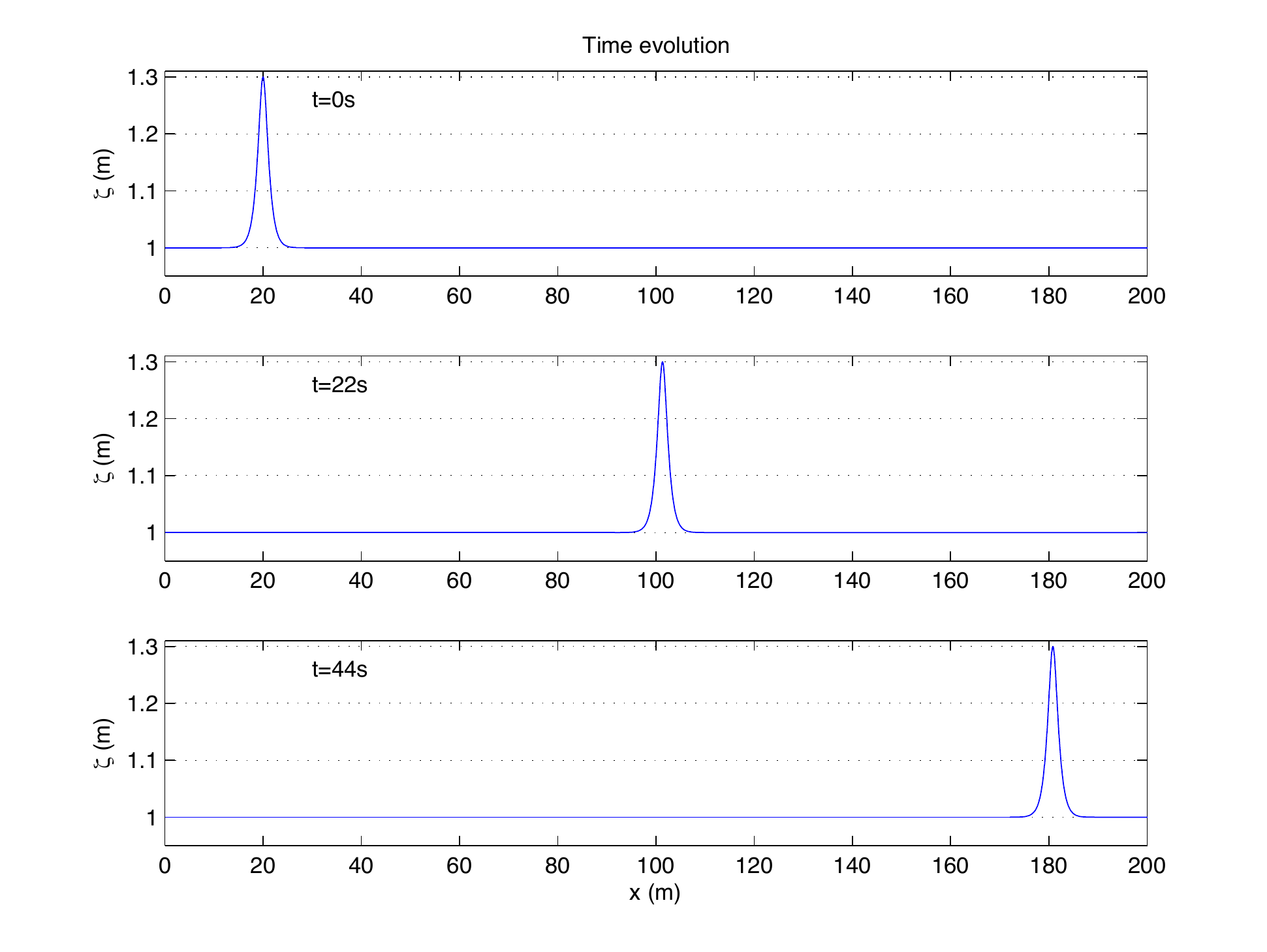}
\caption{Case $E_\infty>0$, $v_\infty>0$, $F_\infty=0$ - Right-going wave with $\eps=0.3$:  initial profile (on the left) and snapshots at different locations along the channel (on the right).}\label{test18}
\end{figure}

\begin{figure}
\centering
\includegraphics[width=0.45\textwidth]{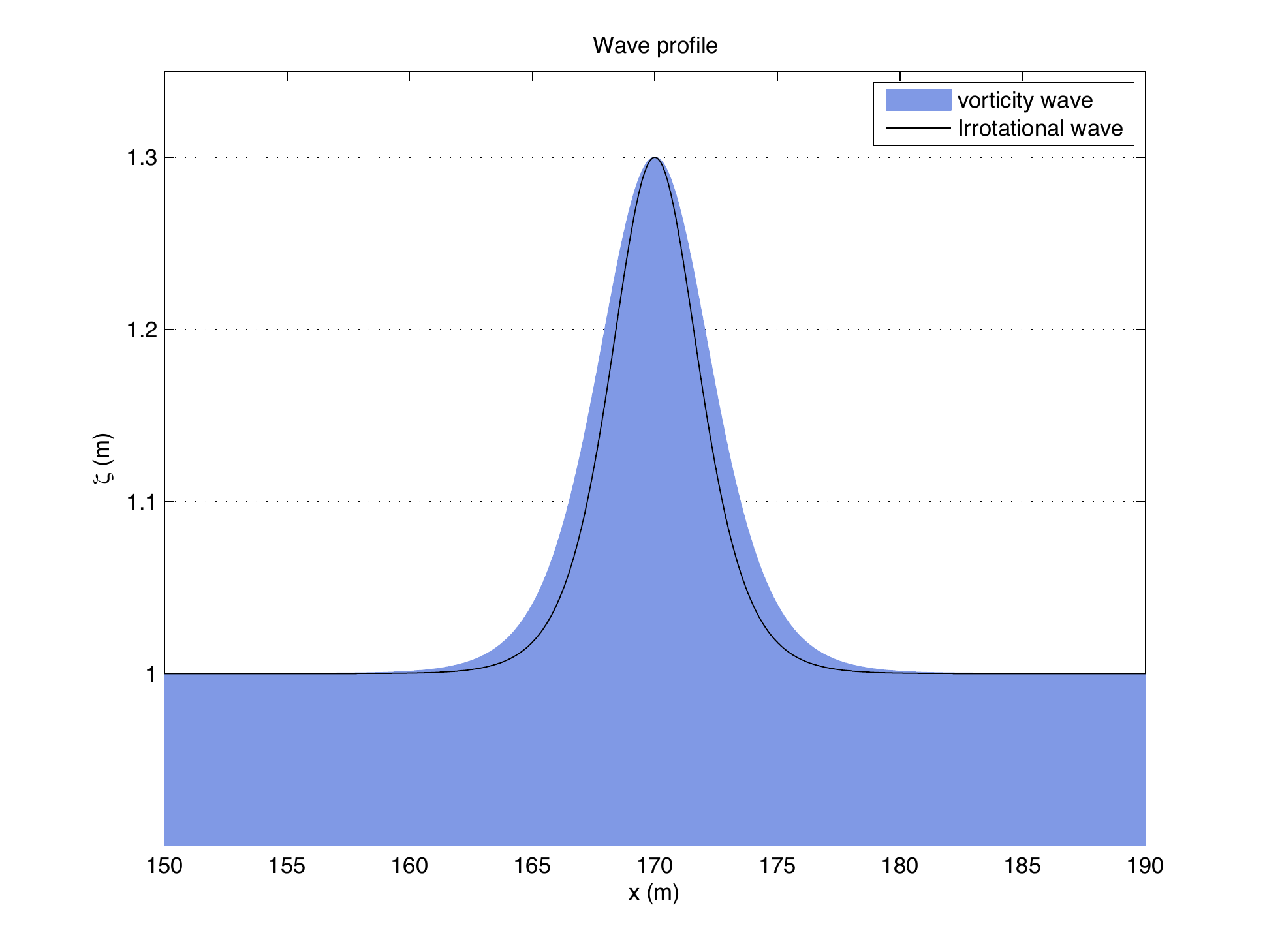}
\includegraphics[width=0.45\textwidth]{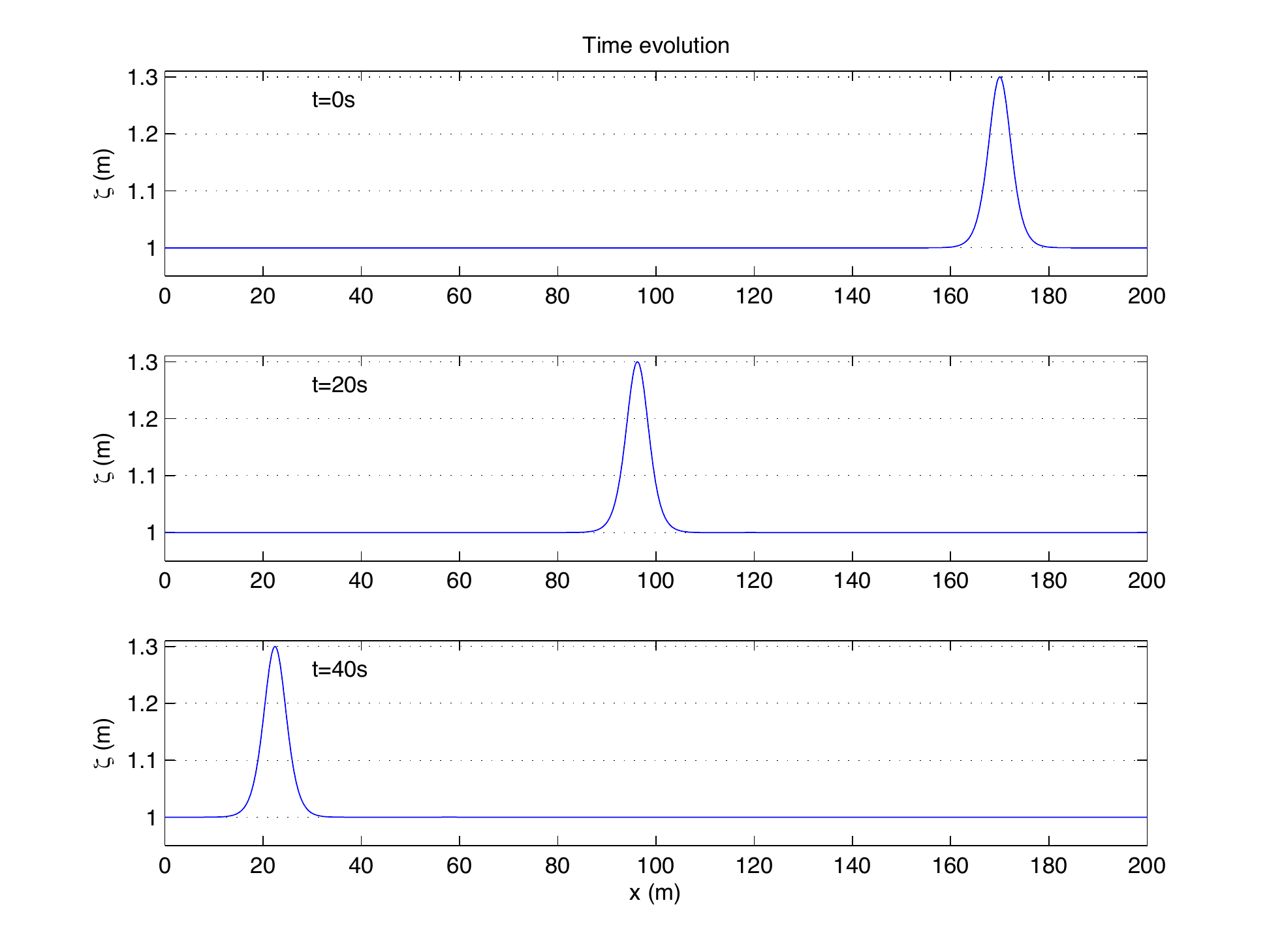}
\caption{Case $E_\infty>0$, $v_\infty>0$, $F_\infty=0$ - Left-going wave with $\eps=0.3$: initial profile (on the left) and snapshots at different locations along the channel (on the right).}
\label{test19}
\end{figure}

\subsubsection{General vorticity model - Case $E_\infty>0$, $v_\infty^\sharp>0$, $F_\infty>0$}

We now focus on the case $F_\infty>0$ and consider the \textit{general vorticity} model \eqref{gn:vort:dim}. The existence of solitary wave solutions is now ruled by Proposition \ref{proposition:2} and we have existence of $2$ solitary waves, provided that conditions \eqref{comp:p2} are fulfilled. We also recall that the waves speeds are now obtained as the minimum and maximum roots of the $3^{rd}$ order polynomial \eqref{eqc}. We consider the following set of parameters $E_\infty=1/12\,m^3.s^{-2}$, $v_\infty^\sharp=1\,m.s^{-1}$, $F_\infty=1/12\,m^4.s^{-3}$ and $\eps=0.5$, ensuring the existence of both left and right going waves and leading to the following wave speeds:
 $$\underbar{c}^- \approx -3.852\,m.s^{-1}\;\; \mbox{and} \;\;\underbar{c}^+\approx 3.93\,m.s^{-1}.
 $$
The corresponding profiles for $\zeta$ are shown on Figure \ref{test13} and Figure \ref{test20}, together with some snapshots of the corresponding time evolutions. For these computations, we set $\delta x=0.05\,m$. The corresponding convergence analysis is performed and we obtain convergence rates of $2.72$ for both $\dv$, $\tF$, $\tE$ and $\zeta$ and $2.78$ for $h\bv$.  We obtain the same behavior for the left going wave.\\

\begin{figure}
\centering
\includegraphics[width=0.45\textwidth]{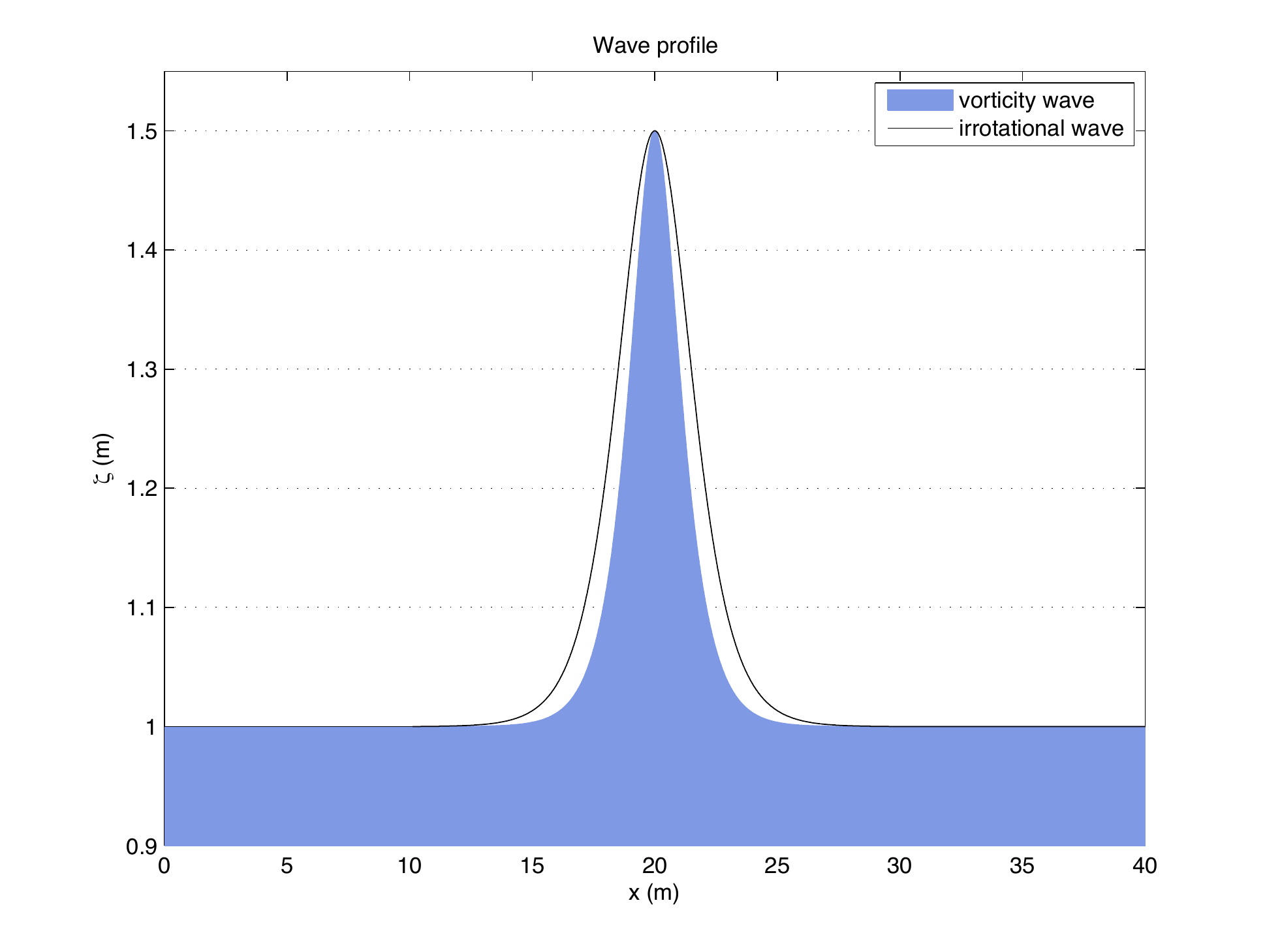}
\includegraphics[width=0.45\textwidth]{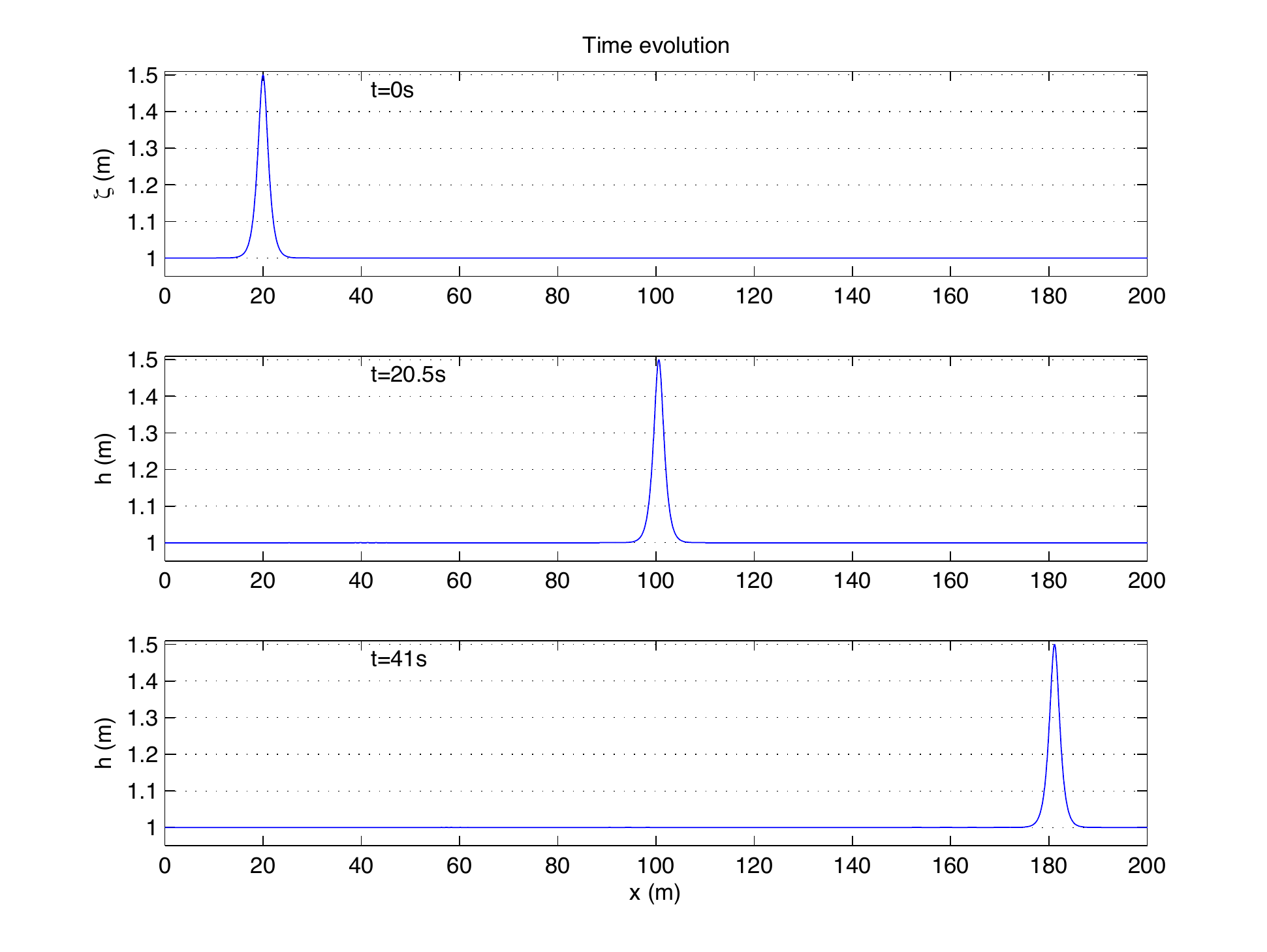}
\caption{Case $E_\infty>0$, $v_\infty>0$, $F_\infty>0$ - Right-going wave with $\eps=0.5$. Initial profile (on the left) and snapshots at different locations along the channel (on the right).}
\label{test13}
\end{figure}

\begin{figure}
\centering
\includegraphics[width=0.45\textwidth]{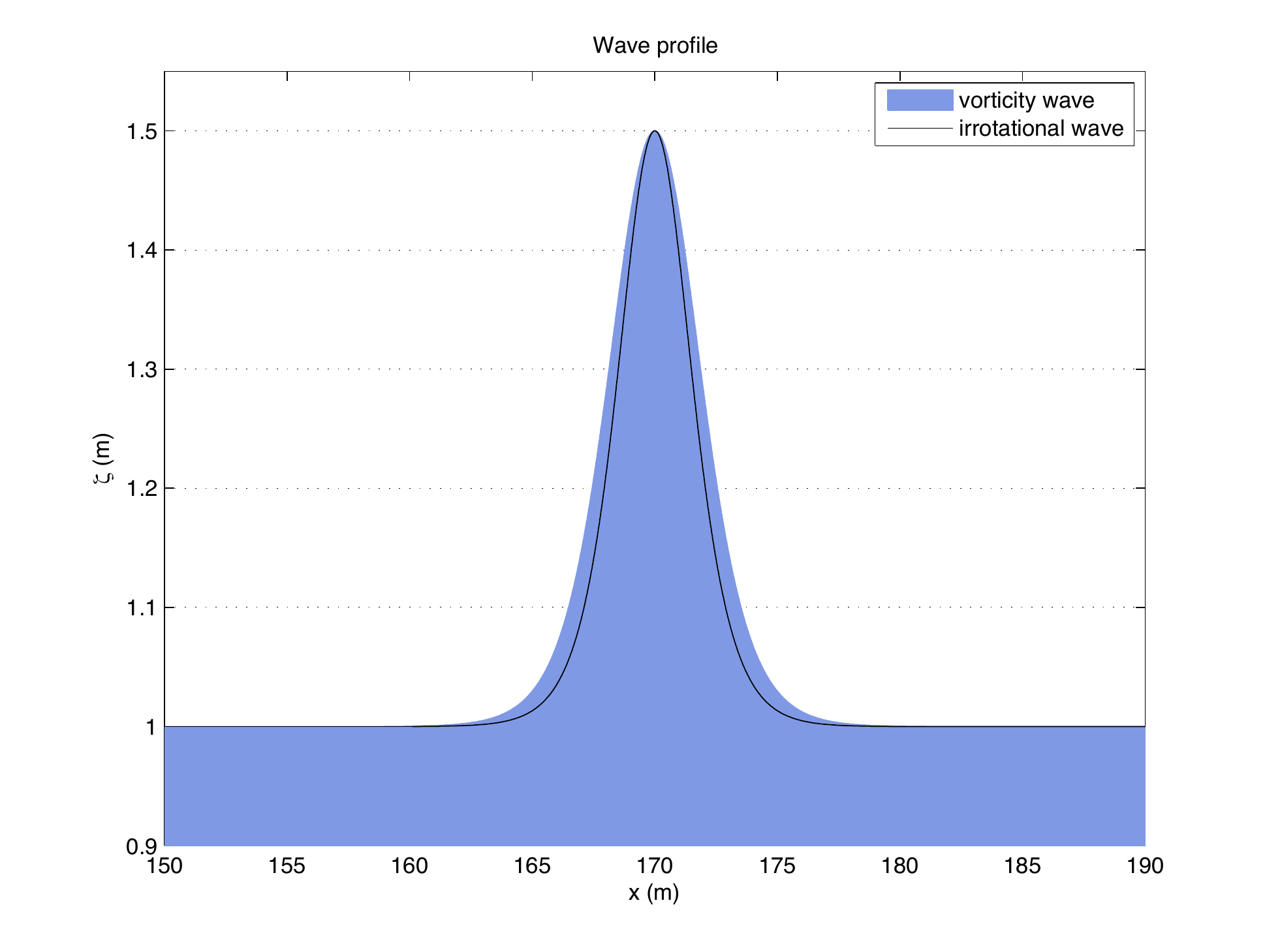}
\includegraphics[width=0.45\textwidth]{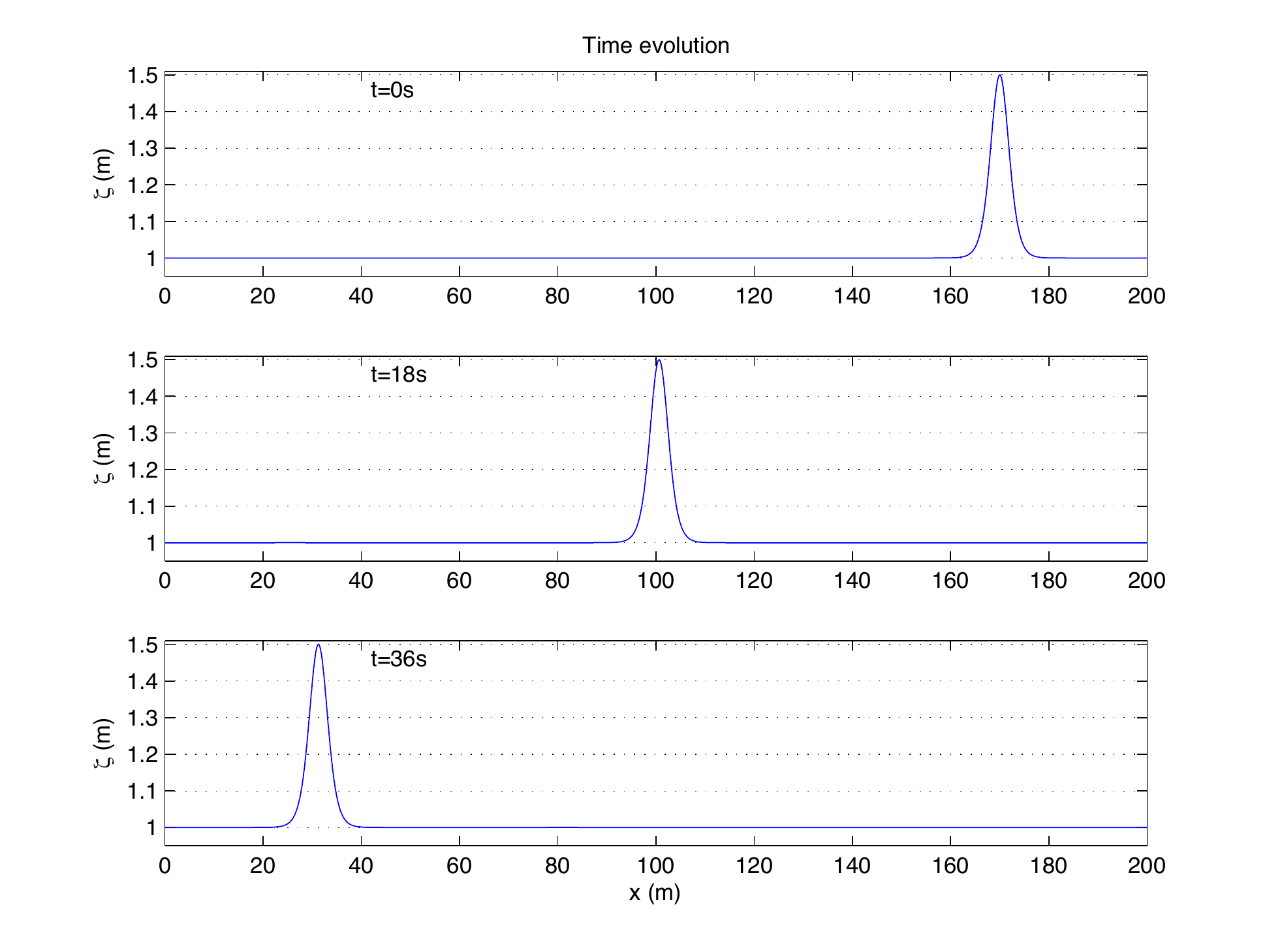}
\caption{Case $E_\infty>0$, $v_\infty>0$, $F_\infty>0$ - Left-going wave with $\eps=0.5$:  initial profile (on the left) and snapshots at different locations along the channel (on the right).}
\label{test20}
\end{figure}

\begin{rem}
As observed in the previous test cases, convergence rates classically seem to decrease with respect to the wave amplitude. Also, note that when $F_\infty\neq 0$, we observe a slight increase of the $L^2$-errors for $\tE$ when compared with the values obtained for $\dv$, $\tF$  and $\zeta$, and the corresponding convergence rates are slightly deteriorated. This phenomenon may be related to the naive finite-difference approximation of the nonlinear and non-conservative terms occurring in the corresponding source terms. 
\end{rem}

\subsection{Peaked solitary waves}\label{sect:PSW}
Let us now briefly illustrate the propagation of \textit{peaked} solitary waves, exhibited in Proposition \ref{prop:peaked}. The numerical simulation of such waves with singularities is a difficult problem as the solution is not smooth. Indeed,  the damping introduced by the numerical methods rapidly smooth out the peak and thereby deform and delay the wave. These issues may be avoided by locally refining the mesh and/or increasing the scheme's accuracy in the vicinity of the wave’s crest. The development of a more involved discrete formulation allowing for $h/p$-adaptivity, based on the recently introduced dG method for the GN equations \cite{DuranMarche:2014ab}, is left for future work.\\

To achieve this, we consider the simplified case $F_\infty=0$, neglecting the third order tensor $F$, and still set $H_0=1\,m$. Choosing the following values  $v_\infty^\sharp=3\,m.s^{-1}$ and $E_\infty=9/12\,m^3.s^{-2}$, we set the relative amplitude in order to be very close to the critical case. In this particular configuration, we have existence of a right-propagating wave only for $h_{max}<h_{crit} = 1.1045\,m$. Consequently, we consider here a wave of amplitude $0.1044\,m$, initially centered at $x=11\,m$. We show the corresponding initial profile on the left side of Figure \ref{test25}. The corresponding propagation is shown on the right side. For this simulation, in order to minimize the numerical diffusion in the vicinity of the singularity we increase the numerical resolution, setting $\delta x=2.5\,10^{-3}\, m$ and use WENO5 reconstructions with a SSP-RK4 time marching scheme.   
\begin{figure}
\centering
\includegraphics[width=0.45\textwidth]{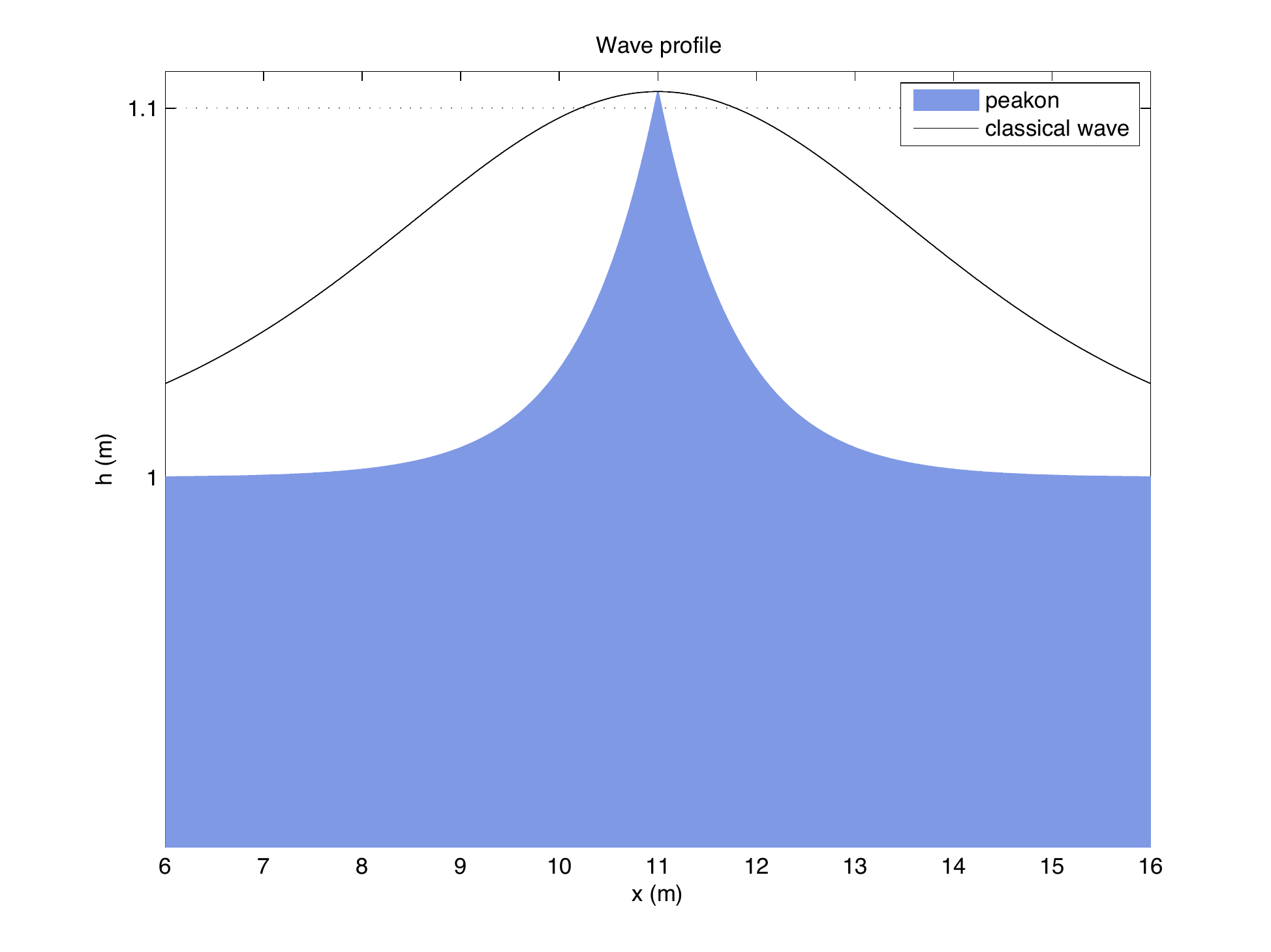}
\includegraphics[width=0.45\textwidth]{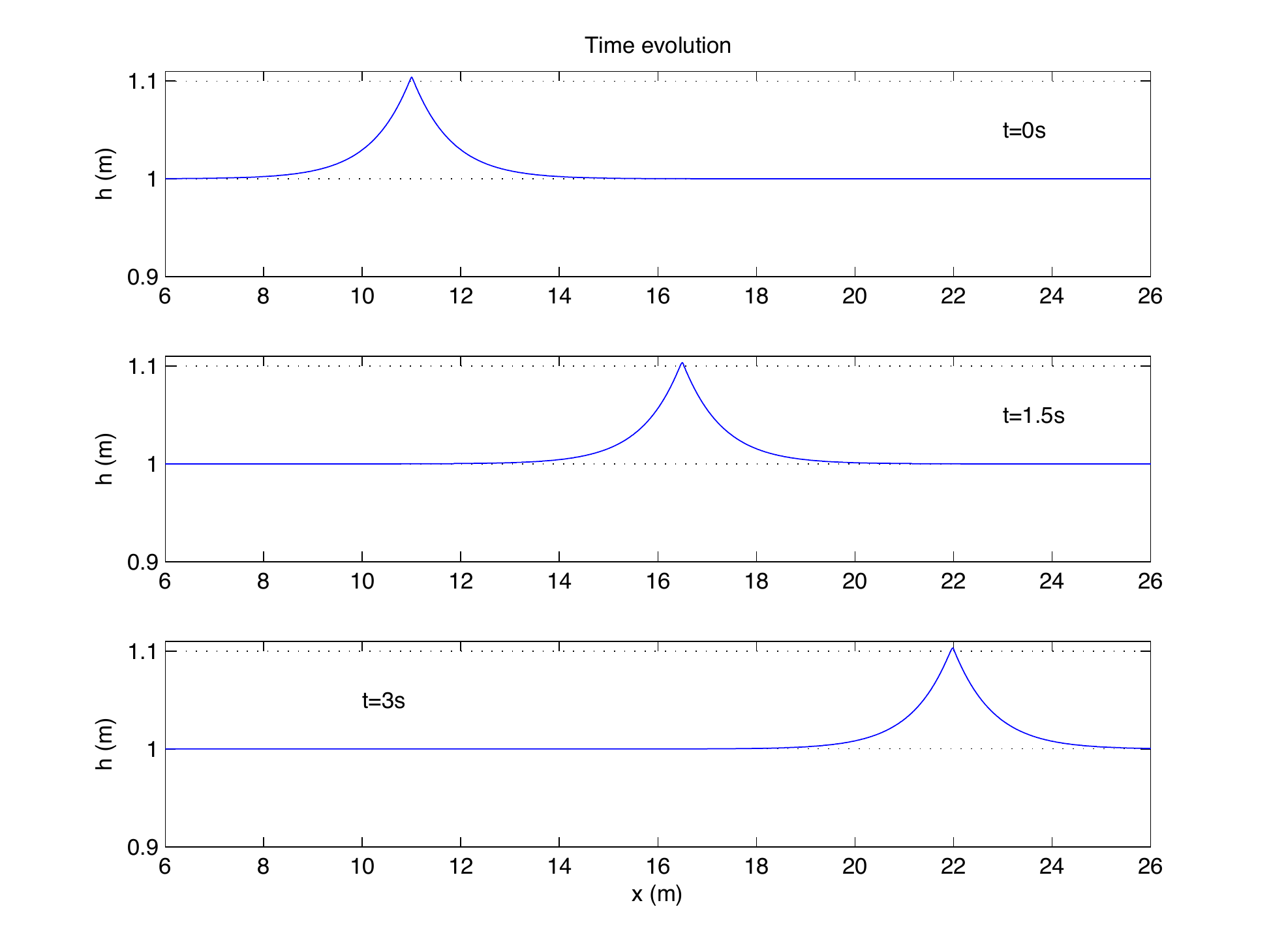}
\caption{Case $E_\infty>0$, $v_\infty>0$, $F_\infty=0$ - Right-going peaked wave with $\eps=0.1044$. We also show in black solid line the shape of the solitary wave for the irrotational Green-Naghdi equations.}
\label{test25}
\end{figure}

\subsection{Influence of vorticity on wave shoaling}\label{sect:vortshoal}

In the following test case, we assess the topography terms discretization, and aim at giving a brief insight into the important study of the impact of the vorticity on wave shoaling. We still consider a $200\,m$ channel with $H_0=1\,m$ but now, we the topography is varying and defined as follows:
$$
b(x)=\frac{H_0}{10}(1+\tanh\big(\frac{x-x_1}{\lambda}\big)),
$$
with $\lambda = 20\,m$, and $x_1=100\,m$. We follow successively the propagation of $3$ solitary waves over this uneven bottom. The first one is a classical solitary wave ($E_\infty = \dv_\infty = F_\infty = 0$) associated with the irrotational Green-Naghdi equations.
In this case, the model \eqref{gn:vorti:dim:2} and the associated numerical approach reduces to the extensively validated framework of \cite{Bonneton20111479, MR2811693, rchgth89}. This wave may therefore be used as a reference solution to highlight the influence of the additional vorticity terms.\\
The second and third ones are solitary wave solutions of \eqref{gn:vorti:dim:2} defined with $F_\infty = 0$ and $(E_\infty, v^\sharp_\infty)$ respectively set to $(8.33\,10^{-2}, -1)$ and $(0.18, -1.5)$. The corresponding celerities are given by $c_0\approx 3.43\,m.s^{-1}$ for the irrotational wave and respectively $c_0\approx 3.47\,m.s^{-1}$ and $c_0\approx 3.53\,m.s^{-1}$ for the second and third vorticity waves. The corresponding initial profiles are shown on Figure \ref{shoaling_1}-left.
\begin{figure}
\centering
\includegraphics[width=0.9\textwidth]{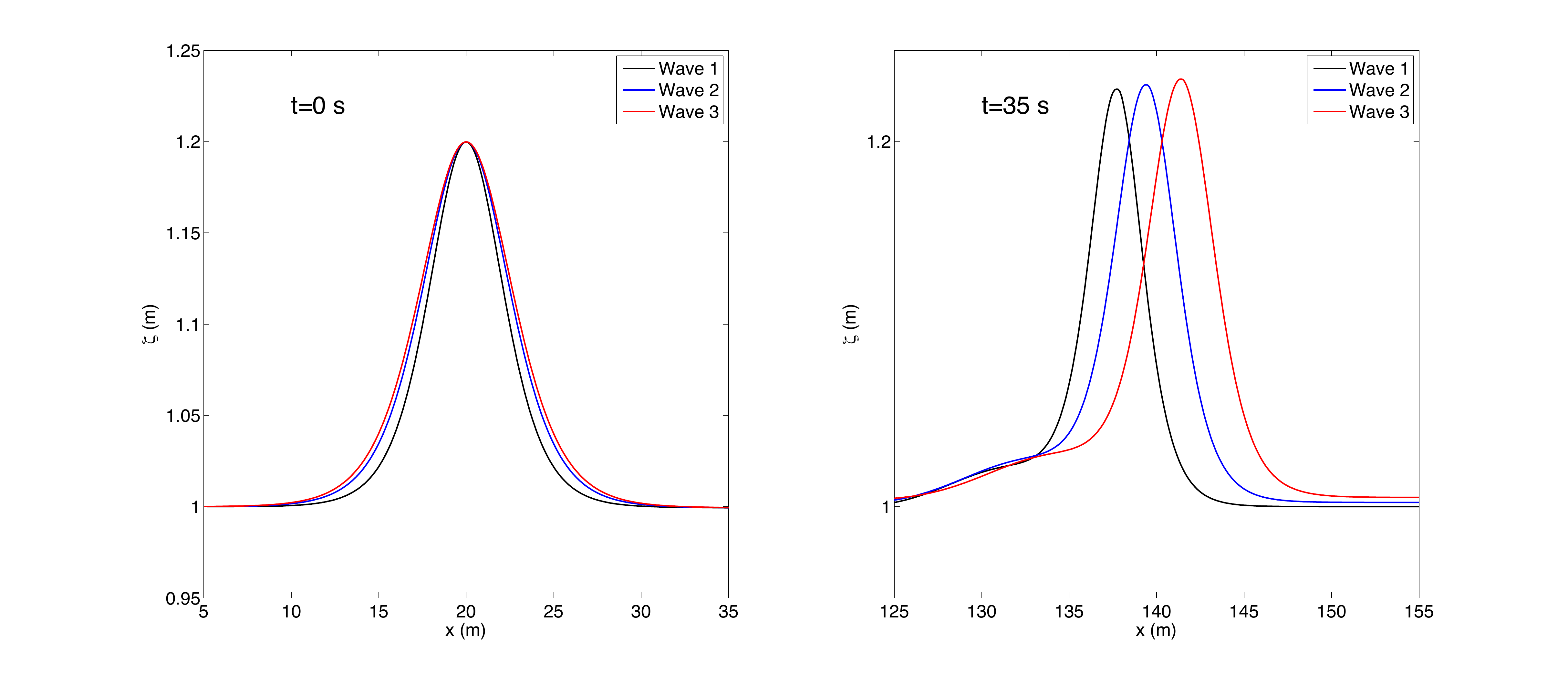}
\caption{Influence of vorticity on wave shoaling: initial wave profiles ($t=0\,s$) on the left and  propagated profiles at $t=35\,s$ on the right. In the legend, 'Wave 1' refers to the irrotational solitary wave, while 'Wave 2' and 'Wave 3' respectively refers to the cases $(E_\infty=8.33\,10^{-2}, \dv_\infty = -1)$ and $(E_\infty=0.18, \dv_\infty = -1.5)$.}
\label{shoaling_1}
\end{figure}
We also show on Figure \ref{shoaling_2} some snapshots of the waves evolution along the channel, and particularly as the waves propagates over the smooth step located in the vicinity of $x_1$. The computation is performed with $\delta x = 0.1\,m$. We can observe the increasing influence of the vorticity on the waves shoaling processes, as the waves amplitude increase with respect to the vorticity magnitude, as well as on the stretching area at the rear of the waves and on the mean level at the front. A zoom on the profiles at $t=35\,s$ is shown on Figure \ref{shoaling_2}-right. Of course, extensive studies are still needed to systematically analyze and accurately quantify this influence. 
\begin{figure}
\centering
\includegraphics[width=0.8\textwidth]{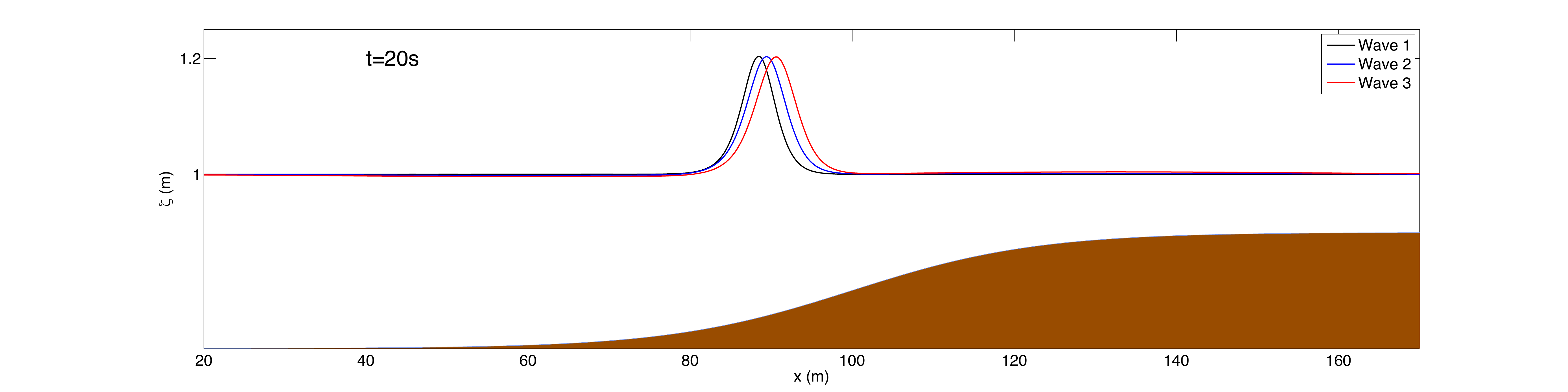}
\includegraphics[width=0.8\textwidth]{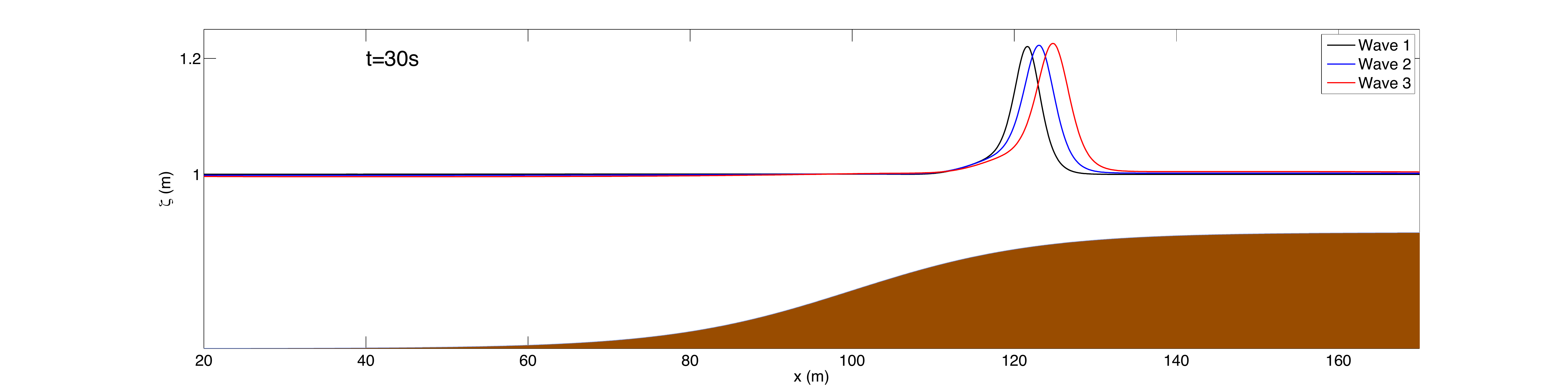}
\includegraphics[width=0.8\textwidth]{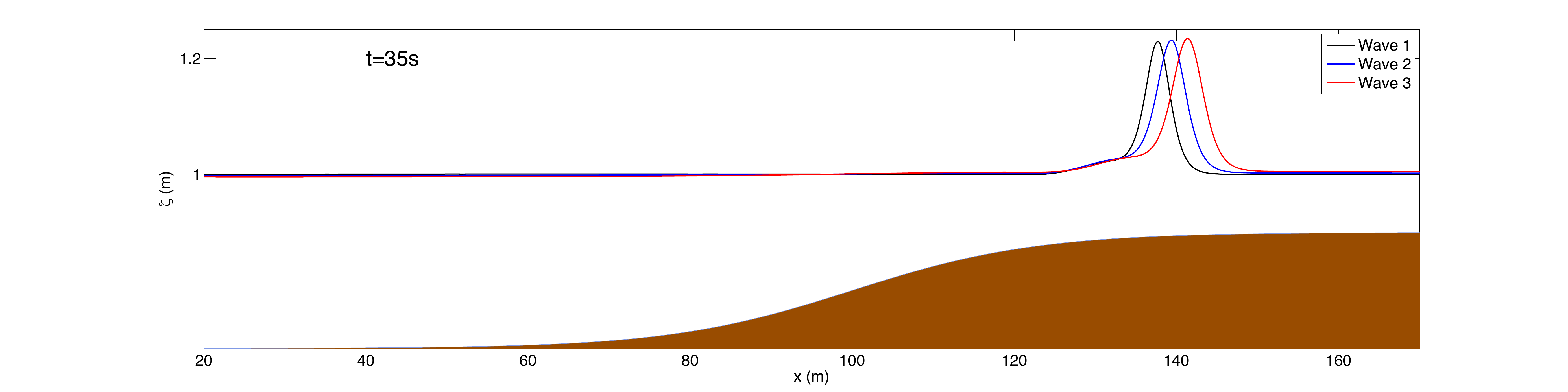}
\caption{Influence of vorticity on wave shoaling: evolution of the waves profiles along the channel at times $t=20, 30$ and $35\,s$.}
\label{shoaling_2}
\end{figure}

\section{Evolution of the velocity field inside the fluid domain}\label{sec:velocity}

In the previous numerical computations we computed the surface
elevation $\zeta$, the average velocity $\ovv$ as well as $v^\sharp$,
$E$ and $F$ from their initial value through the resolution of the {\it
  one}-dimensional equations
(\ref{gn:vort:dim}). We show here how this system of equations can be
used to describe the dynamics of the $(1+1)$-dimensional velocity
field $U=(v, w)$ and of the surface elevation in terms of their initial value
$U^0$ and $\zeta^0$. 
\subsection{The initial condition}
Denoting by $\ovv^0$ the vertical average of the horizontal velocity  at $t=0$ and by $v^{*,0}$ its fluctuation, one can write
\begin{equation}\label{initvw}
\begin{cases}
v^0=&\ovv^0+v^{\star,0},\\
w^0=&- \dx\big((H_0+z-b)\ovv^0\big)-\dx \int_{-H_0+
  b}^z v^{\star,0};
\end{cases}
\end{equation}
the expression for $w^0$ is deduced from the expression for $v^0$ thanks
to the incompressibility condition and the non penetration condition
at the bottom. We also recall that the initial vorticity $\omega^0$ is defined as
$$
\omega^0=\dz u^0-\dx w^0.
$$
We shall consider here initial conditions that correspond to shallow water configurations (i.e. $\mu=H_0^2/L^2\ll1$). Denoting by 
$v^{0}_{\rm sh}$ the shear velocity induced by the vorticity field $\omega^0$ and by $v^{\star,0}_{\rm sh}$ its fluctuation around its vertical mean value,
\be\label{v:shear}
v^{\star,0}_{\rm sh}=-\int_{z}^{\zeta^0}\omega^0+\frac{1}{h^0}\int_{-H_0+b}^{\zeta^0}
\int_{z}^{\zeta^0} \omega^0\qquad \qquad(h^0=H_0+\zeta^0-b),
\ee
it is shown  in \cite{castroLannes:2014aa}, Eq. (2.30), that up to $O(\mu^{3/2})$ terms, one has
 \begin{equation}\label{vstar:shear}
 v^{*,0}=v^{\star,0}_{\rm sh}+T^*\ovv^0
 \end{equation}
where $T^*\ovv^0$ accounts for the fluctuations due to dispersion and is given by
$$
T^\star\ovv^0=-\frac{1}{2}\big((z+H_0-
b)^2-\frac{1}{3}(h^0)^2\big)\dx^2\ovv+\big(z-\zeta^0+\frac{1}{2}h^0\big)\big(\dx
b\dx\ovv^0+\dx(v^0\dx
b)\big).
$$
Since our main goal here is to comment on the effects of the vorticity on the propagation of waves, it is convenient to consider an initial vorticity field $\omega^0$ and to construct the corresponding approximate velocity field $U^0$ given by \eqref{initvw}, \eqref{v:shear} and \eqref{vstar:shear}.

\subsection{Reconstruction methodology}

 Let us now briefly recall the procedure described in
\cite{castroLannes:2014aa} to recover $U$ at all times from $\zeta^0$ and
$U^0$. It is of course sufficient to compute $U$ on each level line
$\theta\in [0,1]$, that is, 
\begin{align*}
v_\theta(t,x)&=v\big(t,x,-H_0+b(x)+\theta h(t,x)\big),\\
w_\theta(t,x)&=w\big(t,x,-H_0+b(x)+\theta h(t,x)\big).
\end{align*}
\begin{rem}
One can also define the vorticity on the level lines by
$\omega_\theta(t,x)=\omega\big(t,x,-H_0+b(x)+\theta h(t,x)\big)$. It can be
computed in terms of $v_\theta$ and $w_\theta$ by the formula
$$
\omega_\theta=\frac{1}{h}\partial_\theta v_\theta-\dx
w_\theta+\frac{\dx (b+\theta h)}{h}\partial_\theta w_\theta.
$$
\end{rem}
It is shown in \cite{castroLannes:2014aa} that the velocity fields conserves
its initial structure \eqref{initvw}-\eqref{vstar:shear} and that one has
\begin{align}\label{comp:vtheta}
v_\theta&=\ovv+v^\star_{{\rm sh},\theta}+T^\star_\theta \ovv\\
\label{comp:wtheta}
w_\theta&=-\dx \big( h(\theta \ovv+Q_\theta+\int_0^\theta T^\star_{\theta'}\ovv
d\theta')\big)+\dx(-H_0+b+\theta h)v_\theta.
\end{align}
with
$$
T^\star_\theta \ovv=-\frac{1}{2}(\theta^2-\frac{1}{3})h^2
\dx^2\ovv+(\theta-\frac{1}{2})h\big(\dx b\dx \ovv+\dx (\dx b\ovv)\big).
$$

The computation of the different quantities involved in these expressions must be performed as follows:
\begin{enumerate}
\item From the initial datas $\zeta^0$, $\ovv^0$ and $v_{\rm
    sh}^{\star,0}$ compute the initial values $E^0$, $F^0$ and
  $v^{\sharp,0}$ from their definitions \eqref{defvsharp},
  \eqref{defE} and \eqref{defF}.
\item Compute $\zeta$, $\ovv$, $v^\sharp$, $E$ and $F$ on the time
  interval $[0,T]$ by solving the Green-Naghdi equations
  \eqref{gn:vort:dim}. 
\item Compute the quantities $q_\theta$ and $Q_\theta$ on the same time
  interval by solving
$$
\dt q_\theta+\dx(\ovv q_\theta)=0,\qquad \dt Q_\theta+\dx(\ovv Q_\theta)=0
$$
with initial conditions $q_\theta^0=\partial_\theta v_{{\rm sh},\theta}^{\star,0}$
and $Q_\theta^0(x)=\int_0^\theta v_{{\rm sh},\theta'}^{\star,0}d\theta'$, where we use
the notation $v^{\star,0}_{{\rm sh},\theta}(x)=v^{\star,0}_{\rm sh}\big(t,x,-H_0+b(x)+\theta h^0(t,x)\big)$.
\item Compute $v_{{\rm sh},\theta}^\star$ by solving on $[0,T]$ the equation
$$
\dt
v^\star_{{\rm sh},\theta}+\dx\big(v_{{\rm sh},\theta}^\star(\ovv+\frac{1}{2}v_{{\rm sh},\theta}^\star)\big)=\frac{1}{h}\dx
E+\frac{q_\theta}{h}\dx (hQ_\theta),
$$
with initial data $v^{\star,0}_{{\rm sh},\theta}$. 
\item Use \eqref{comp:vtheta} and \eqref{comp:wtheta} to get
      $v_\theta$ and $w_\theta$.
\end{enumerate}

\subsection{Implementation hints}

From a practical point of view, we aim at computing the velocity field $U$ on some given $N_z$ level lines along the vertical layer of fluid. We introduce the discrete increments
$$
\theta_j = \frac{\delta z + (j-1)\delta z}{H_0},\;\;1\leq j\leq N_z,\quad\quad\mbox{with}\quad \delta z = \frac{H_0}{N_z},
$$
such that, for some given time $t$ and horizontal coordinate $x$, the $j^{th}$ level line is located at the vertical coordinate $z=-H_0+b(x)+\theta_j h(t,x)$. 
To compute the velocity field on each of these level lines, we supplement the Green-Naghdi model with general vorticity \eqref{gn:vorti:dim:2} with a set of equations describing the time evolution of the auxiliary quantities $q_{\theta_j}$, $Q_{\theta_j}$ and $v_{\theta_j}^\star$:
\be\label{gn:vort:dim:vel:1}
\begin{cases}
&\dt h + \dx(h\bar{v}) = 0,\\
&(1+\mathbb{T})[\dt(h\bv) + \dx(h\bv^2)] + gh\dx \zeta + h\cQ_1(\bv)
+\dx (h^2\tE)+ h\cC(\bv,\dv) =0,\\
&\dt \dv + \dx( \bv\tdv  ) =0,\\
&\dt \tE +  (\bv\tE)_x+ 3\tF \dx h + h\dx\tF = 0,\\
&\dt \tF +  \dx(\bv\tF)= 0,\\
&\dt q_{\theta_j}+\dx(\ovv q_{\theta_j})=0,\quad 1\leq j\leq N_z,\\
& \dt Q_{\theta_j}+\dx(\ovv Q_{\theta_j})=0,\quad 1\leq j\leq N_z,\\
&\dt v^\star_{\theta_j}+\dx\big(v_{\theta_j}^\star(\ovv+\frac{1}{2}v_{\theta_j}^\star)\big)=\dsp\frac{1}{h}\dx(h^2\tE)
+\dsp\frac{q_{\theta_j}}{h}\dx (hQ_{\theta_j}),\quad 1\leq j\leq N_z.
\end{cases}
\ee
Considering the numerical method introduced in \S\ref{sect:num:scheme}, the easiest way to account for these additional equations, from a discrete point of view, is to incorporate the auxiliary quantities $q_{\theta_j}, Q_{\theta_j}$ and $v_{\theta_j}^\star$ into our splitting approach, and into the approximate Riemann solver for the transport part of the equations.  \\
\begin{rem}
As these new equations are decoupled from the equations on mass and momentum, we choose to keep the simple $3$ waves structure of the approximate HLLC Riemann solver.  Such a choice may appear as surprising for the quantities $v_{\theta_j}^\star$, considering the particular form of the associated non-linear evolution equations. However, our numerical investigations have shown that considering each scalar equations separately, the corresponding wave speeds associated with the values of $\ovv+\frac{1}{2}v_{\theta_j}^\star$ at interfaces are very close to $\ovv$ (which is in accordance with the vorticity scaling studied here, see \cite{castroLannes:2014aa}), and always lie between the considered external wave speed estimates. 
\end{rem}

\vspace{0.2cm}
\noindent
To simplify the notations, let us consider the case of one level line, located at the vertical coordinate $z=-H_0+b(x)+\theta h(t,x)$, with $\theta\in ]0,1]$, and denote by $q_{\theta}, Q_{\theta}$ and $v_{\theta}^\star$ the associated quantities. The splitting scheme \eqref{S1:gen}-\eqref{S2:gen} is modified as follows:
\begin{itemize}
\item  $S_1(t)$ is the solution operator
associated to the conservative \textit{propagation} step, which may be written in the following conservative form
\beq\label{syst:compact:theta}
\dt \cW_\theta + \dx \eF(\cW_\theta)  = \eS(\cW_\theta,b),
\eeq
with $\cW_\theta = {}^T(h,\, h\bv,\, \dv,\, \tE,\, \tF,\, q_\theta,\, Q_\theta,\, v_\theta^\star)$,
$$
\eF(\cW_\theta) = 
{}^T(
 h\bv,\,
 h\bv^2 + p(h,\tE),\,
\bv\dv,\,
 \bv \tE,\,
 \bv \tF,\,
 \bv q_\theta,\,
 \bv Q_\theta,\,
 v_\theta^\star(\bv + \frac{1}{2} v_\theta^\star)
),
$$
and $
\eS(\cW_\theta,b) = 
 (0,\,
-gh b_x,\,
0,\,
 0,\,
0,\,
0,\,
0,\,
0).$

\item $S_2(t)$ is the solution operator associated to the \textit{dispersive correction},
\be
	\label{S2:gen:mod}
	\left\lbrace
	\begin{array}{lcl}
	\vspace{0.1cm}
	\dsp \dt h =0,\\
	
	\vspace{0.1cm}
	\dsp \dt (h \bv) - gh \dx\zeta -  \dx(h^2\tE) \\
	\quad\quad\quad+ (1+ \mathbb{T})^{-1}\big[gh\dx\zeta + \dx(h^2\tE) +  h\cQ_1(\bv) + h\cC(\bv,\dv)\big]=0,\\
	
	\vspace{0.1cm}
	\dt\dv  =0,\\
	
	\vspace{0.1cm}
        \dt\tE   + 3\tF \dx h + h\dx \tF= 0,\\
        	\vspace{0.1cm}
        \dt\tF  = 0,\\
        	\vspace{0.1cm}
        \dt q_\theta = 0,\\
        	\vspace{0.1cm}
        \dt Q_\theta = 0,\\
        	\vspace{0.1cm}
        \dt v_\theta^\star -\dsp\frac{1}{h}\dx(h^2\tE)
-\dsp\frac{q_{\theta}}{h}\dx (hQ_{\theta}) = 0.
	\end{array}\right.
\ee
Note that when compared with \eqref{S2:gen}, the second step is only modified with the introduction of a source term in the equation on $v_\theta^\star$.
\end{itemize}

 
\subsection{A solitary wave propagating in a constant vorticity field}

To illustrate the previous reconstructions, we show here the velocity fields associated with some of the new solitary wave solutions exhibited in \S\ref{sect:derSol}. We choose the simplest case of a constant vorticity field, described by model \eqref{gn:const} and we set
$$
\curl \bU=(0,\omega,0)^T\quad \mbox{ with }\quad \omega(t,x,z)=\omega_0.
$$
Using \eqref{v:shear}, we obtain the initial horizontal shear velocity field
$$
v^{\star,0}_{\rm sh}(x,z)= -\omega_0(\zeta^0 -z) + \frac{\omega_0}{2}h^0,
$$
and the initial velocity field $v_\theta^0$ et $w_\theta^0$ are obtained with straightforward computations using \eqref{comp:vtheta} and \eqref{comp:wtheta}. This initial velocity field is shown on Figure \ref{fig2Dvelocity:constant} for the values $\omega_0=0.3\,s^{-1}$ and $\omega_0=1.5\,s^{-1}$. For comparison purpose, we also show the velocity field corresponding to the irrotational solitary wave of similar amplitude. We can observe the vertical decreasing of the velocity's horizontal component magnitude for the rotational solitary wave, while this is left unchanged for the irrotational wave.
\begin{center}
\begin{figure}
\includegraphics[width=0.9\textwidth]{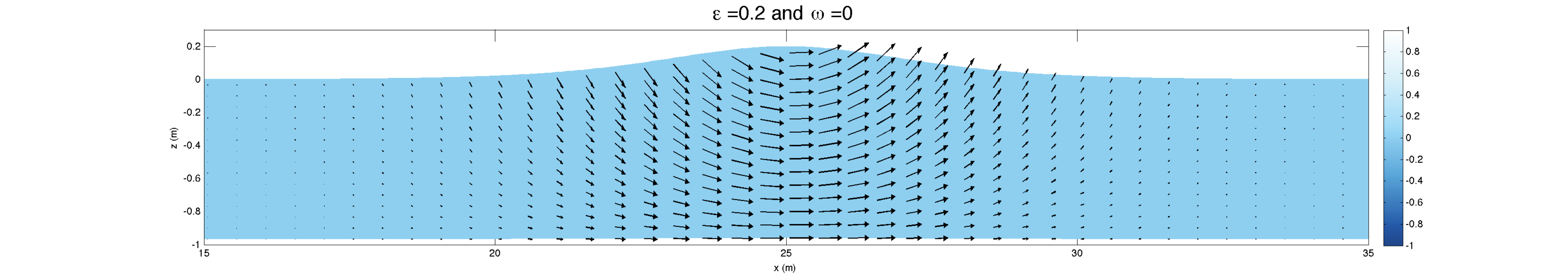}
\includegraphics[width=0.9\textwidth]{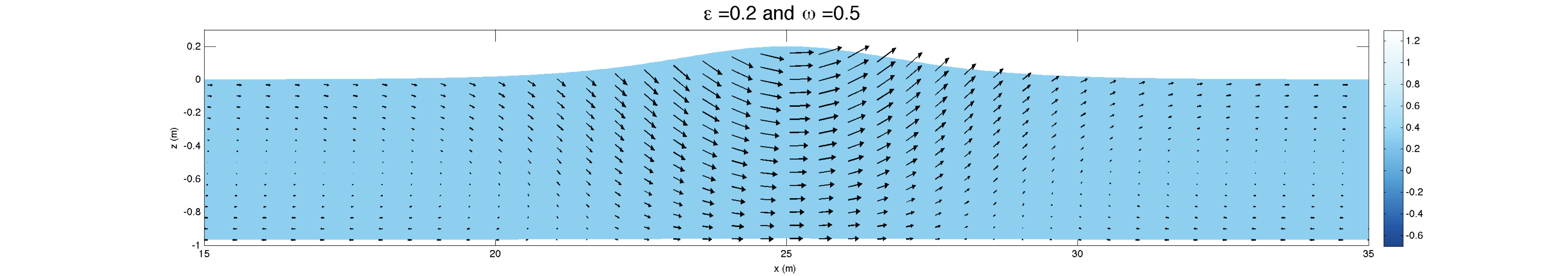}
\includegraphics[width=0.9\textwidth]{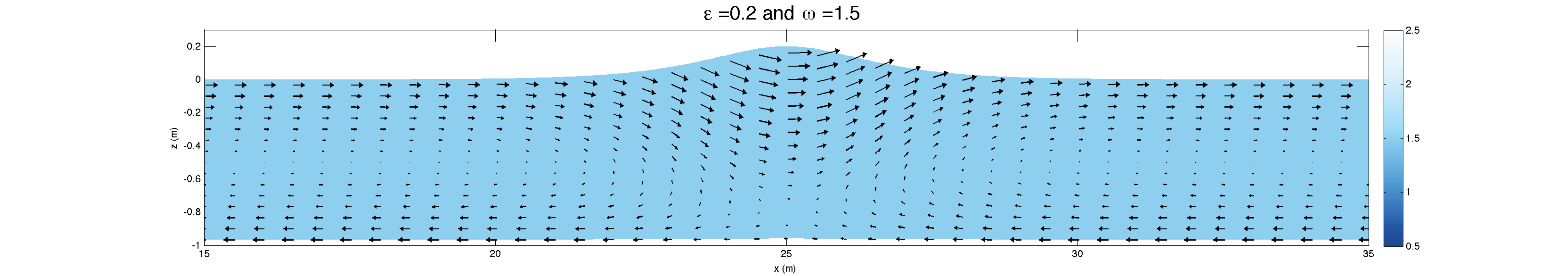}
\caption{Velocity fields in solitary waves with constant vorticity: influence of the vorticity magnitude.}
\label{fig2Dvelocity:constant}
\end{figure}
\end{center}

\subsection{A solitary wave arriving in a vorticity region}\label{sect2Dex1}

We consider now an initial vorticity field with vertical and horizontal dependencies, defined as follows:
\be\label{init:vorti}
\omega^0(x,z)=-2\pi \underline{\omega}\cos\big( 2\pi\frac{\zeta-z}{H_0}\big)\exp\big(- \frac{(x-x_1)^2}{\lambda^2} \big)
\ee
with $\underline{\omega}=\frac{\eps\sqrt{gH_0}}{\lambda}$ (this
scaling of the vorticity corresponds to the regime studied in
\cite{castro:lannes:2014:1, castroLannes:2014aa} where the Green-Naghdi models with
vorticity have been derived and justified). 
\\ 
With this definition, we observe that the vorticity vanishes as $x\to\pm\infty$. Additionally, its strongest variations of amplitude along the horizontal dimension are located near some point of abscissa $x_1$.\\
In the following, we aim at qualitatively observing the behavior of an irrotational solitary wave arriving from afar (where the vorticity is therefore negligible). We consider a channel of $50\,m$ long, set $H_0=1\,m$, $x_1=25\,m$ and take as initial surface elevation the profile of an irrotational solitary wave \eqref{solitary:gn} centred at $x_0=12.5\,m$ with $\eps=0.25$. We highlight that $x_0$ is chosen far enough from $x_1$ to ensure that the vorticity in initially negligible in the vicinity of the solitary wave. The choice \eqref{init:vorti} for the initial vorticity field leads to the following initial horizontal shear velocity
$$
v^{\star,0}_{\rm sh}(x,z)=\underline{\omega}H_0\big[\sin\big(
2\pi\frac{\zeta^0-z}{H_0}\big)
-\frac{H_0}{2\pi}\frac{1-\cos(2\pi\frac{h^0}{H_0})}{h^0}\big]
\exp\big( -\frac{(x-x_1)^2}{\lambda^2} \big)
$$
so that the full initial velocity field can again be
constructed\footnote{The two integrals in the expression for $w^0$ can
be computed explicitly,
\begin{align*}
\int_{-H_0}^z v^{*,0}_{\rm sh}&=\frac{\underline{\omega}H_0^2}{6\pi}
\Big[\cos\big(
2\pi\frac{\zeta-z}{H_0}\big)
- \cos\big(2\pi\frac{h^0}{H_0}\big)
-\frac{z+H_0}{h^0}(1-\cos(2\pi\frac{h^0}{H_0}))\big]
\exp\big( -\frac{(x-x_1)^2}{\lambda^2} \Big)\\
\int_{-H_0}^z T^\star\ovv^0&=-\frac{1}{6}(z+H_0)\big((z+H_0)^2-(h^0)^2\big)\dx^2\ovv^0.
\end{align*}
} through \eqref{initvw} (see Figure \ref{fig2Dvelocity}). We now follow the procedure described in the previous
sections. From simple computations, one gets
\begin{align*}
v^{\sharp,0}&=\frac{2H_0^2 \underline{\omega}}{\pi^3(h^0)^3}
\big[(h^0)^2\pi^2(1+2\ucos^2)-3H_0^2\usin^2\big]\exp(-\frac{(x-x_1)^2}{\lambda^2})\\
E^0&=\frac{3 H_0^2 \underline{\omega}^2}{2\pi^2 h^0}
\big[ H_0 h^0\pi (\usin^2-\ucos^2)
\usin\ucos-2H_0^2\usin^4+\pi^2(h^0)^2\big]\exp(-2\frac{(x-x_1)^2}{\lambda^2})\\
F^0&=\frac{3H_0^4\underline{\omega}^3}{2\pi^3(h^0)^2} \usin^2
\big[ 4\pi^2 (h^0)^2(1-2\ucos^2)\ucos^2-9\pi H_0h^0
(1-2\ucos^2)\ucos\usin\\
&\hspace{3cm}+12 H_0^2(1-\ucos^2)^2-5\pi^2(h^0)^2\big]\exp(-3\frac{(x-x_1)^2}{\lambda^2})
\end{align*}
where we used the notations $\usin=\sin(\pi \frac{\zeta^0}{H_0})$ and
$\ucos=\cos(\pi \frac{\zeta^0}{H_0})$. These functions are shown
on Figure \ref{figEFV}. Note in particular that the function $F$ 
almost identically vanishes (its sup norm does not exceed $2.10^{-9}$), and
it is therefore possible to work with the reduced model \eqref{modeleFzero}. For the computation, we set $N_x=500$, leading to $\delta x= 0.1\,m$, and $N_z=30$. The corresponding results are shown on Figure \ref{fig2Dvelocity}, where we plot the free surfaces at several times during the propagation, together with the reconstructed velocity fields and the corresponding vorticity. We can observe the impact of the current on the initial velocity field and the free surface profile, and that the initial vorticity area is slightly dragged along and stretched as the wave propagates. 

\begin{figure}
\centering
\includegraphics[width=0.9\textwidth]{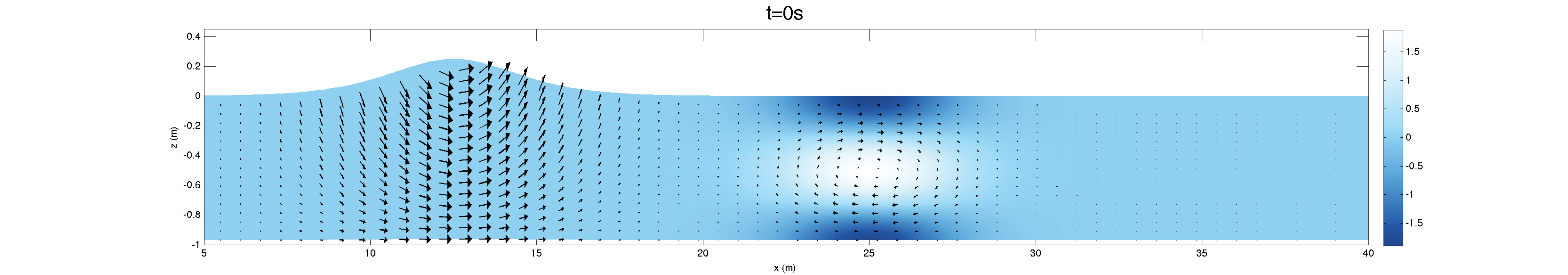}
\includegraphics[width=0.9\textwidth]{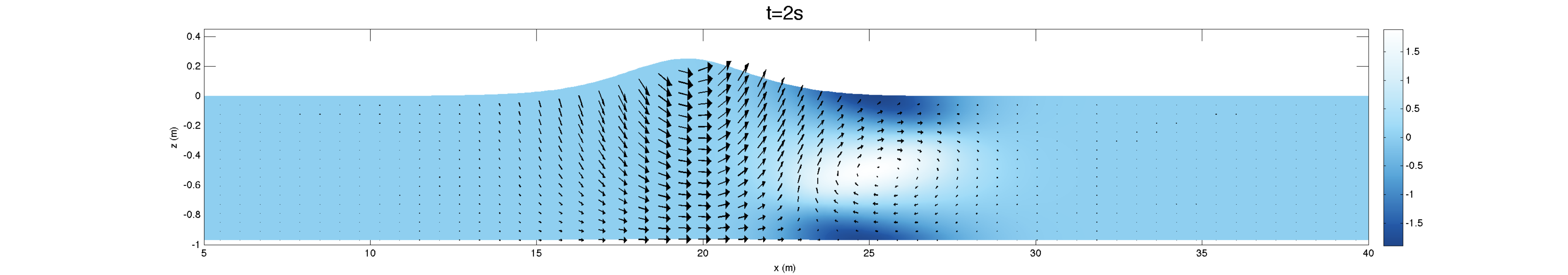}
\includegraphics[width=0.9\textwidth]{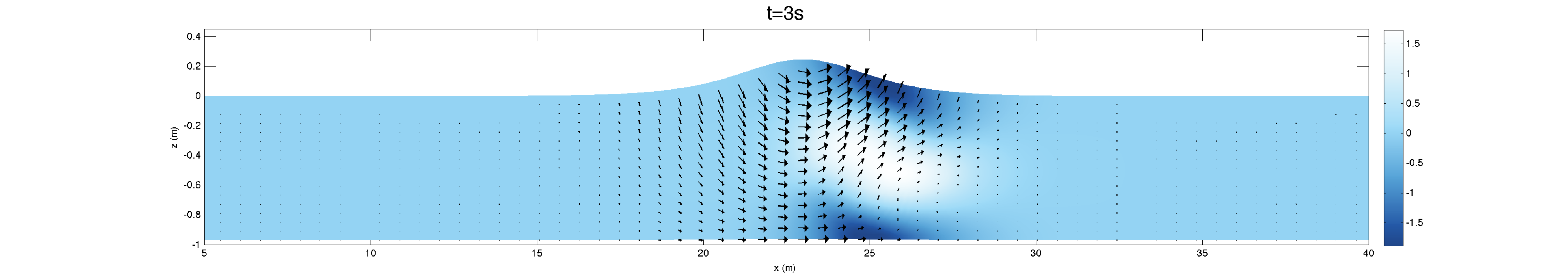}
\includegraphics[width=0.9\textwidth]{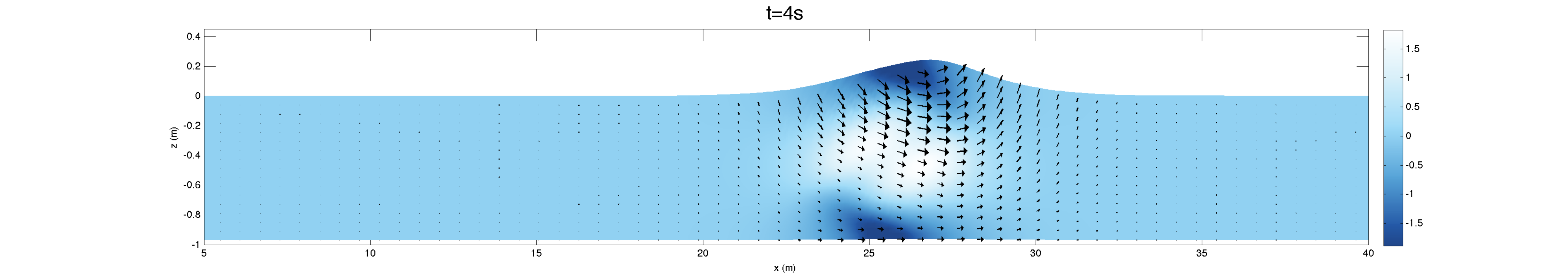}
\includegraphics[width=0.9\textwidth]{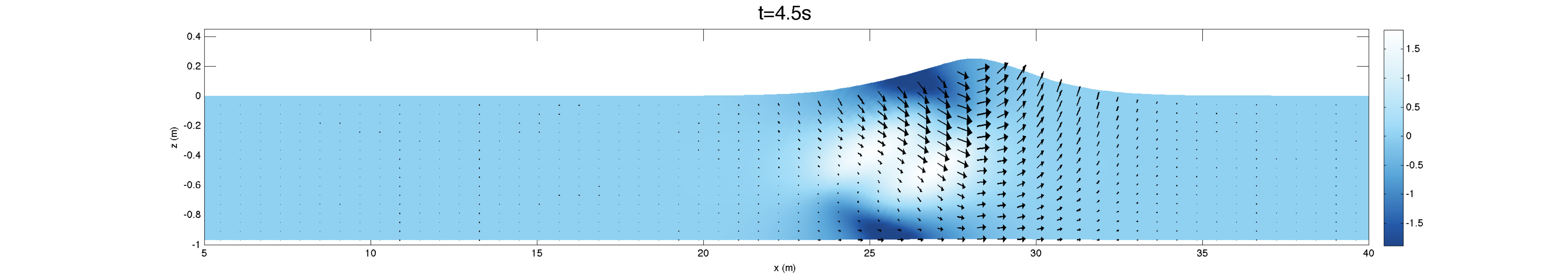}
\includegraphics[width=0.9\textwidth]{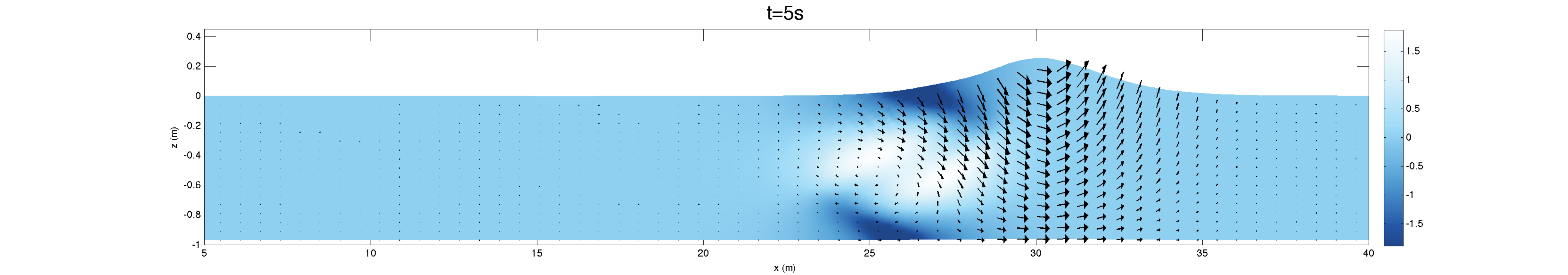}
\caption{A solitary wave arriving on a vorticity region: snapshot of the free surface profile at times $t=0, 2, 3, 4, 4.5$ and $5\,s$. The velocity fields structures are plotted with black arrows and the corresponding vorticity is shown according to the left colorbar.}
\label{fig2Dvelocity}
\end{figure}

\begin{figure}
\centering
\includegraphics[width=0.9\textwidth]{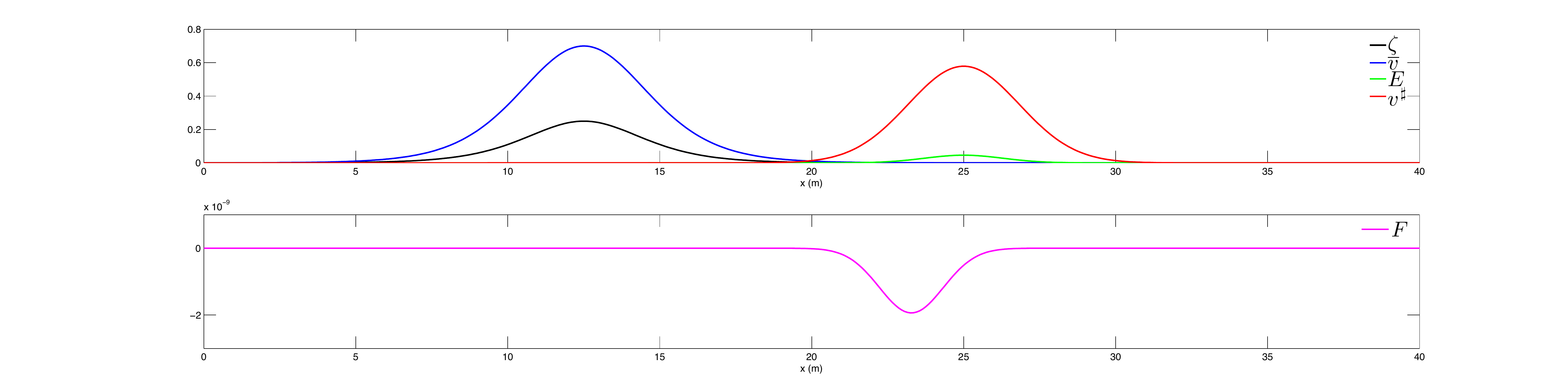}
\caption{Initial values for $\zeta$, $\ovv$ and $E$, $F$, $v^\sharp$
  for the configuration of \S \ref{sect2Dex1} with $H_0=1\,m$ and $\eps=0.25$.}
\label{figEFV}
\end{figure}

\subsection{Influence of vorticity on shoaling}\label{sect2Dex2}

We consider here the same configuration as in \S \ref{sect2Dex1} but
with the vorticity region now located in an area where the topography
varies. More precisely, instead of a flat bottom ($b=0$) we now
consider a bottom parametrized by $z=-H_0+b(x)$, with
$$
b(x)=\frac{H_0}{6}(1+\tanh\big(\frac{x-x_1}{\lambda}\big)).
$$
One could derive as in \S \ref{sect2Dex1}
explicit expressions for the initial values of $E$, $F$ and
$v^\sharp$, but they can also easily be numerically computed; they are
represented in Figure \ref{figEFV_shoal}. Note that in this
configuration, $F$ is still small (of order $10^{-3}$) but not as much
as in the previous configuration with a flat bottom. For this kind of
configurations, it is therefore suitable to work with the full
system \eqref{gn:vort:dim}. We show the corresponding results for the velocity fields on Figure \ref{fig2Dvelocity:topo}.
\begin{figure}
\centering
\includegraphics[width=0.9\textwidth]{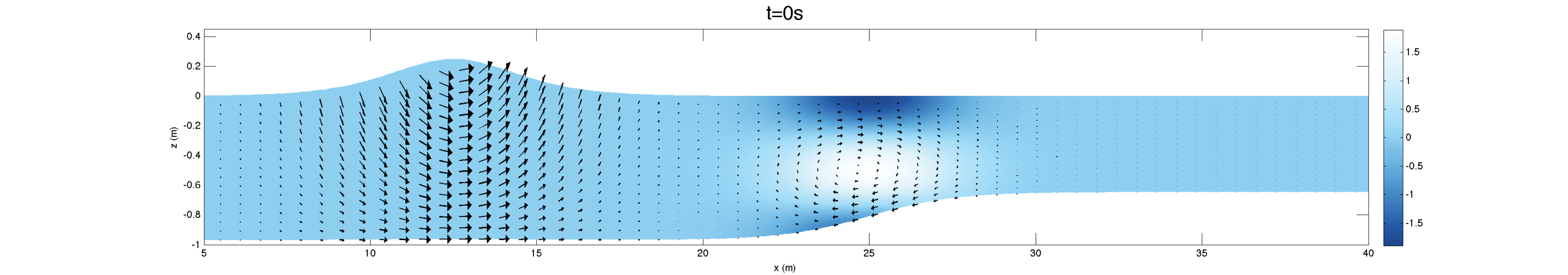}
\includegraphics[width=0.9\textwidth]{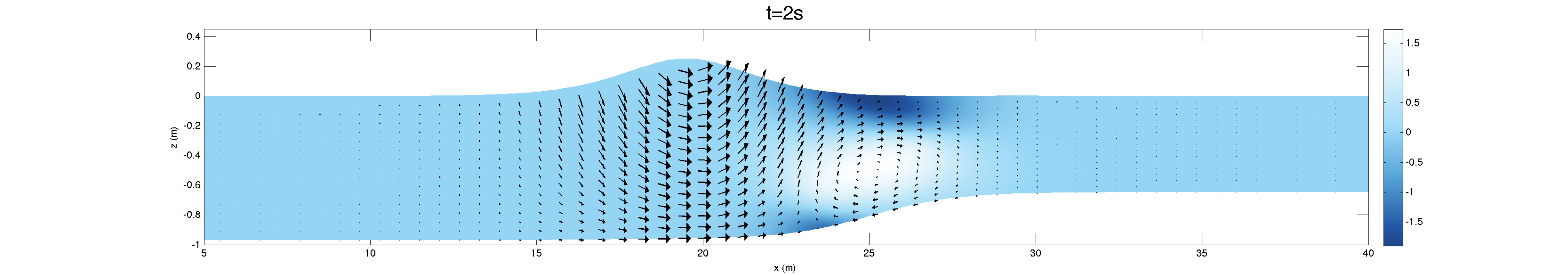}
\includegraphics[width=0.9\textwidth]{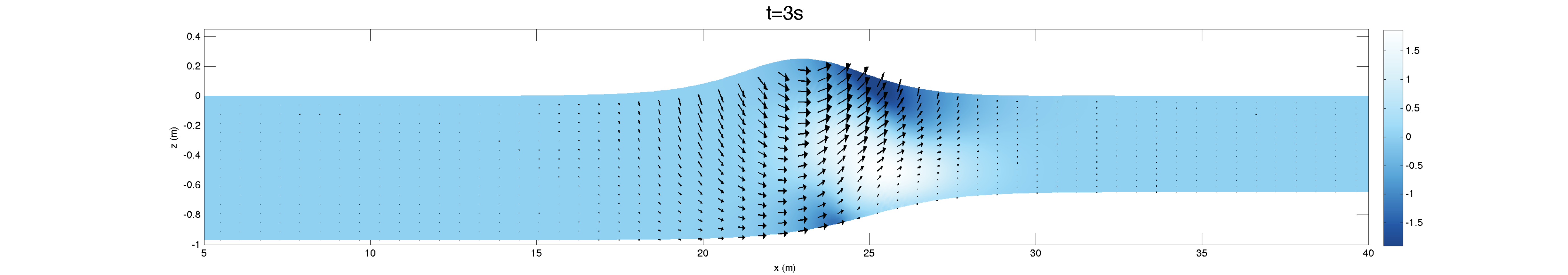}
\includegraphics[width=0.9\textwidth]{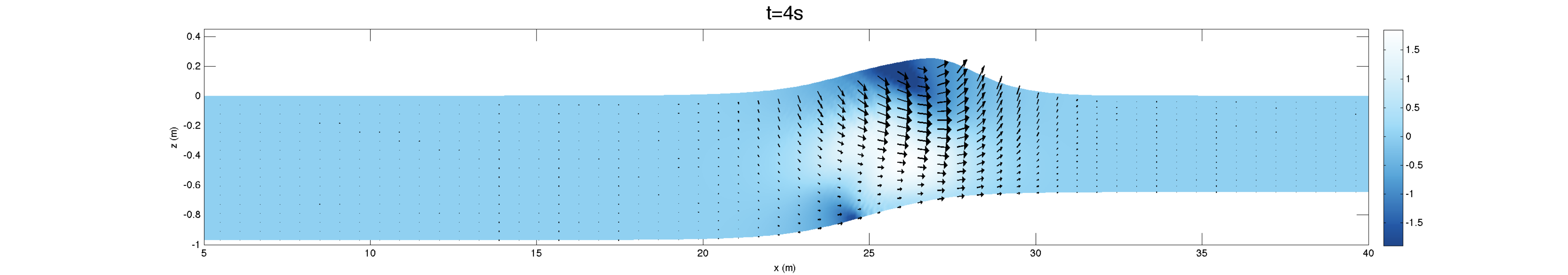}
\includegraphics[width=0.9\textwidth]{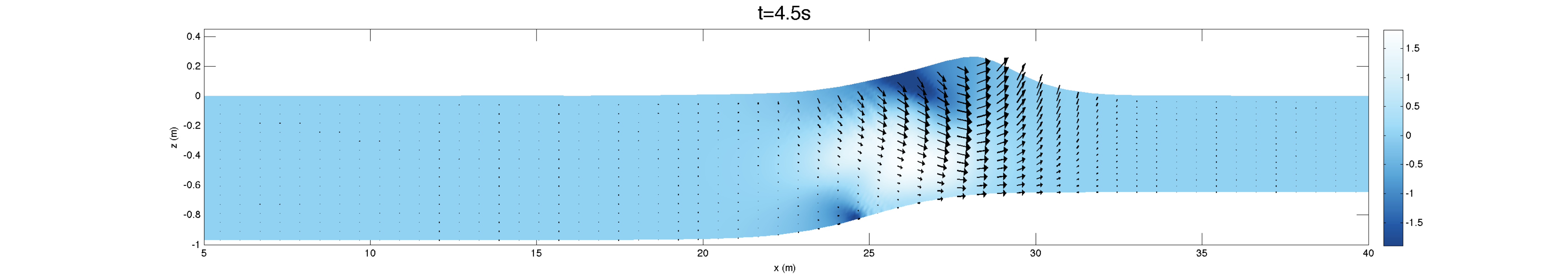}
\includegraphics[width=0.9\textwidth]{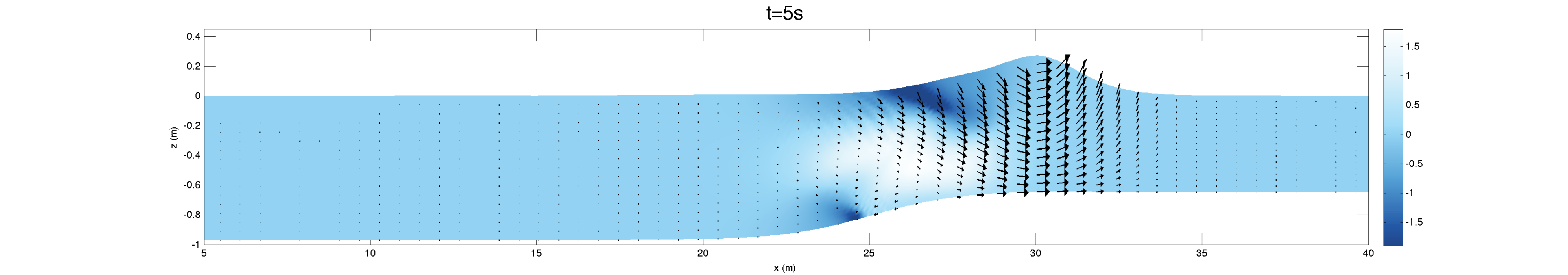}

\caption{Influence of vorticity on shoaling: snapshot of the free surface profile at times $t=0, 2, 3, 4, 4.5$ and $5\,s$. The velocity fields structures are plotted with black arrows and the corresponding vorticity is shown according to the left colorbar.}
\label{fig2Dvelocity:topo}
\end{figure}

\begin{figure}
\centering
\includegraphics[width=0.9\textwidth]{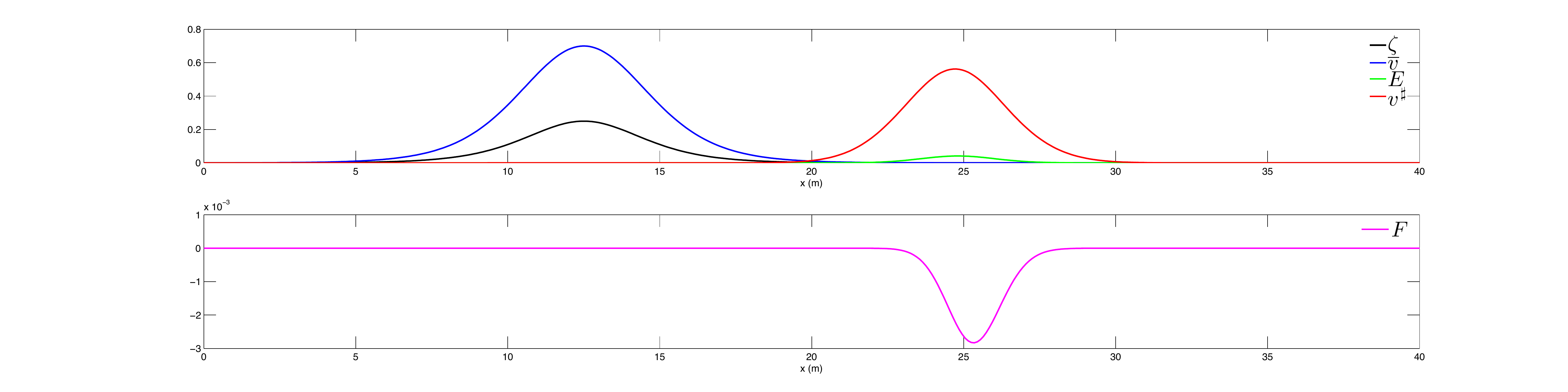}
\caption{Initial values for $\zeta$, $\ovv$ and $E$, $F$, $v^\sharp$
  for the configuration of \S \ref{sect2Dex2} with $H_0=1\,m$ and $\eps=0.25$.}
\label{figEFV_shoal}
\end{figure}

\section*{Acknowledgments}
D. Lannes acknowledges support from the ANR-13-BS01-0003-01 DYFICOLTI. F. Marche acknowledges support from the CNRS project LEFE-MANU SOLi. Both authors acknowledge support from the ANR-13-BS01-0009-01 BOND.






          %


%
%
%
%
%
%

          \end{document}